\title{The Power of Counting Steps in Quantitative Games}
\author{Sougata Bose}{University of Liverpool, UK}{sougata.bose@liverpool.ac.uk}{https://orcid.org/0000-0003-3662-3915}{}
\author{Rasmus Ibsen-Jensen}{University of Liverpool, UK}{}{https://orcid.org/0000-0003-4783-0389}{}
\author{David Purser}{University of Liverpool, UK}{D.Purser@liverpool.ac.uk}{https://orcid.org/0000-0003-0394-1634}{}
\author{Patrick Totzke}{University of Liverpool, UK}{totzke@liverpool.ac.uk}{https://orcid.org/0000-0001-5274-8190}{}
\author{Pierre Vandenhove}{LaBRI, Université de Bordeaux, France}{pierre.vandenhove@u-bordeaux.fr}{https://orcid.org/0000-0001-5834-1068}{}
\authorrunning{S. Bose, R. Ibsen-Jensen, D. Purser, P. Totzke and P. Vandenhove}
\keywords{Games on graphs, Markov strategies, quantitative objectives, infinite-state systems}
\pgfplotsset{compat=1.14}%
\tikzset{every picture/.style={
thick
}}
\tikzset{every state/.style={initial text={}}}
\tikzset{Max/.style={state,draw,circle,minimum size=9*1.5,inner sep=0}}
\tikzset{Min/.style={state,draw,rectangle,minimum size=9*1.5,inner sep=0}}
\tikzset{Imp/.style={draw,minimum size=5,inner sep=0}}
\tikzset{label/.style={font=\scriptsize,auto}}
\tikzset{every edge/.style={draw,->}}
\definecolor{myorange}{RGB}{255, 128, 0}
\tikzset{
	wrap/.style={
		line cap=round,
		#1,
		line width=21pt,
		opacity=0.3,
	},}
\tikzset{squish/.style={decorate,decoration={snake,amplitude=.5mm,segment length=4mm,post length=0.9mm,pre length=0mm}}}
\newcommand{\IN}{\mathbb{N}}
\newcommand{\IZ}{\mathbb{Z}}
\newcommand{\IQ}{\mathbb{Q}}
\newcommand{\IR}{\mathbb{R}}
\newcommand{\card}[1]{\lvert{#1}\rvert}
\renewcommand{\epsilon}{\varepsilon}
\newcommand{\clr}{c}
\newcommand{\colours}{C}
\newcommand{\word}{w}
\newcommand{\arena}{\mathcal{A}}
\newcommand{\vertex}{v}
\newcommand{\vertices}{V}
\newcommand{\edge}{e}
\newcommand{\edges}{E}
\newcommand{\from}{\mathsf{from}}
\newcommand{\too}{\mathsf{to}}
\newcommand{\col}{\mathsf{col}}
\newcommand{\arenaFull}{(\vertices, \vertices_1, \vertices_2, \edges)}
\newcommand{\emptyHistory}{\lambda}
\newcommand{\history}{h}
\newcommand{\histories}{\mathsf{hists}}
\newcommand{\play}{\rho}
\newcommand{\product}{\otimes}
\newcommand{\SC}{\mathcal{S}}
\newcommand{\objective}{O}
\newcommand{\strat}{\sigma}
\newcommand{\memory}{\mathcal{M}}
\newcommand{\memStates}{M}
\newcommand{\memState}{m}
\newcommand{\memInit}{\memState_0}
\newcommand{\memUpd}{\delta}
\newcommand{\memUpdExt}{\delta^*}
\newcommand{\memoryFull}{(\memStates, \memInit, \memUpd)}
\newcommand{\step}{s}
\newcommand{\reward}{r}
\newcommand{\MP}{\mathsf{MP}}
\newcommand{\TP}{\mathsf{TP}}
\newcommand{\MPsupge}{\overline{\MP}_{\ge 0}}
\newcommand{\MPsupg}{\overline{\MP}_{> 0}}
\newcommand{\MPinfge}{\underline{\MP}_{\ge 0}}
\newcommand{\MPinfg}{\underline{\MP}_{> 0}}
\newcommand{\TPsupge}{\overline{\TP}_{\ge 0}}
\newcommand{\TPsupinf}{\overline{\TP}_{= +\infty}}
\newcommand{\TPinfinf}{\underline{\TP}_{= +\infty}}
\newcommand{\TPinfminusinf}{\underline{\TP}_{> -\infty}}
\newcommand{\TPsupg}{\overline{\TP}_{> 0}}
\newcommand{\TPinfge}{\underline{\TP}_{\ge 0}}
\newcommand{\TPinfg}{\underline{\TP}_{> 0}}
\newcommand{\thresh}{r}
\newcommand{\MPinfgr}{\underline{\MP}_{> \thresh}}
\newcommand{\SigmaI}[1]{\mathbf{\Sigma}_{#1}^0}
\newcommand{\PiI}[1]{\mathbf{\Pi}_{#1}^0}
\newcommand{\SigmaOne}{\SigmaI{1}}
\newcommand{\PiOne}{\PiI{1}}
\newcommand{\SigmaTwo}{\SigmaI{2}}
\newcommand{\PiTwo}{\PiI{2}}
\newcommand{\SigmaThree}{\SigmaI{3}}
\newcommand{\prefixOrd}{\preceq}
\newcommand{\strictPrefixOrd}{\prec}
\newcommand{\winning}{W_{\arena,1}}
\newcommand{\winningExtended}{\winning'}
\tikzstyle{circ}=[draw,circle,minimum height=7mm]
\tikzstyle{square}=[draw,minimum width=6.5mm,minimum height=6.5mm]
\newcommand{\+}[1]{\mathbb{#1}}
\newcommand{\N}{\+{N}}
\newcommand{\x}{\times}
\newcommand{\eps}{\varepsilon}
\newcommand{\playerI}{Player~$p$\xspace}
\newcommand{\POne}{Player~{1}\xspace}
\newcommand{\PTwo}{Player~{2}\xspace}
\newcommand{\hops}{\Step{*}{}{}}
\newcommand{\Step}[3]{\ensuremath{\,{\stackrel{#1}{\longrightarrow}}{}^{\scriptstyle{#2}}_{\scriptstyle{#3}}}\,}
\newcommand{\inEdge}[4]{$\begin{tikzpicture}[baseline]%
\node[anchor=base] (A) {#1};%
\node[anchor=base] (B) [right of=A, node distance=4em] {#3};%
\draw[->] (A) edge[#4] node[midway,above=-2pt,font=\scriptsize] {#2} (B);%
\end{tikzpicture}$}
\begin{document}

\maketitle

\begin{abstract}
We study deterministic games of infinite duration played on graphs and focus on the strategy complexity of quantitative objectives.
Such games are known to admit optimal memoryless strategies over finite graphs, but require infinite-memory strategies in general over infinite graphs.

We provide new lower and upper bounds for the strategy complexity of \emph{mean-payoff} and \emph{total-payoff} objectives over infinite graphs, focusing on whether \emph{step-counter strategies} (sometimes called \emph{Markov strategies}) suffice to implement winning strategies.
In particular, we show that over finitely branching arenas, three variants of $\limsup$ mean-payoff and total-payoff objectives admit winning strategies that are based either on a step counter or on a step counter and an additional bit of memory.
Conversely, we show that for certain $\liminf$ total-payoff objectives, strategies resorting to a step counter and finite memory are not sufficient.
For step-counter strategies, this settles the case of all classical quantitative objectives up to the second level of the Borel hierarchy.

\end{abstract}

\section{Introduction} \label{sec:intro}
Two-player (zero-sum, turn-based, perfect-information) games on graphs are an established formalism in formal verification, especially for \emph{reactive synthesis}~\cite{BCJ18,fijalkow2023games}.
They are used to model the interaction between a system, trying to satisfy a given \emph{specification}, against an uncontrollable environment, assumed to act antagonistically as a worst case.
We can model the system and its environment as two opposing players, called \emph{\POne} and \emph{\PTwo} respectively, who move a token through the graph of possible system configurations (called the \emph{arena}).
The specification is modelled as a
winning condition (called \emph{objective} henceforth), which is a set of all those interactions that the system player deems acceptable.
The main algorithmic task when using this approach for formal verification is \emph{solving} such games: given an arena, an objective, and an initial vertex, decide whether the system player has a \emph{winning strategy}, which corresponds to a controller for the system that guarantees that the specification holds no matter the behaviour of the environment.
Additionally, reactive synthesis aims to \emph{synthesise} (compute a representation of) a winning strategy if one exists.%

\subparagraph*{Strategy complexity}
To synthesise winning strategies, it is useful to know what kind of resources ``suffice'', i.e., are needed to implement a winning strategy, should one exist.
This naturally depends on the model used for the interaction (the size and topology of the arena) and on the specification (the type of objective and whether probabilistic or absolute guarantees are required).
We assume that strategies make decisions based on some internal memory, that stores and updates an abstraction of the past play.

The simplest strategies are those that are \emph{memoryless}, meaning they base their decisions solely on the current arena vertex.
Games on finite arenas where
memoryless strategies are sufficient to win
can usually be solved in $\mathsf{NP} \cap \mathsf{coNP}$~\cite{Pur95}
and winning strategies effectively synthesised.
This is true for \emph{parity}, \emph{discounted-payoff}~\cite{Sha53}, \emph{mean-payoff}~\cite{EM79}, and \emph{total-payoff}~\cite{CdHS03,GZ04} objectives.
Even beyond finite graphs, memoryless strategies may suffice in more general contexts, such as for parity objectives over arenas of arbitrary cardinality~\cite{EJ91,Zie98}, or \emph{discounted-payoff objectives} over finitely branching arenas~\cite[Corollary~2.1]{OhlmannThesis}.\footnote{
Thus we consider the strategy complexity in discounted-payoff games as settled for the setting we consider.
On infinitely branching arenas, step-counter strategies are insufficient (see \cref{fig:LB-0a}).
}
For concurrent (stochastic) \emph{reachability} games on finite arenas, memoryless strategies also suffice~\cite{BBL2022,KMST2024}.

Generally more powerful than memoryless strategies are \emph{finite-memory} strategies, which refer to strategies that can be implemented with a finite-state (Mealy) machine.
A canonical class of languages over infinite words,
and standard for defining objectives in games, are
the $\omega$-regular languages~\cite{Rab69,GTW02}.
One of the celebrated related results about reactive synthesis is the \emph{finite-memory determinacy} of $\omega$-regular games~\cite{BL69,Rab69,GH82}, which means that if there is a winning strategy in a game on a finite arena and with an $\omega$-regular objective, there is one that can be implemented with a simple finite-state machine (whose size can be bounded).
This implies that games with $\omega$-regular objectives can be solved and that strategies can be synthesised, since it bounds the search space for winning strategies.
Remarkably, the existence of winning finite-memory strategies for $\omega$-regular games even holds over arbitrary infinite arenas~\cite{Zie98}.
When finite-memory strategies are sufficient, one of the main questions is usually to \emph{minimise} their size,
i.e., to find winning strategies with as few memory states as possible~\cite{DJW97,Cas22,CFH24,BFRV23,BIT2024}.%

\begin{figure}
    \centering
    \begin{subfigure}[b]{0.35\textwidth}
    \begin{tikzpicture}[node distance=1.5cm and 1.5cm]
\draw (0,0.75) node {};
\draw (0,-0.75) node {};
        \draw (0,0) node[Min,initial] (av1) {$s$};
        \draw ($(av1)+(1.5,0)$) node[Max] (av2) {$t$};
        \draw ($(av2)+(1.5,0)$) node[Max] (av3) {$q$};
        \draw (av1) edge[-latex'] node[above=2pt] {$-\N$} (av2);
        \draw (av2) edge[-latex'] node[above=2pt] {$\N$} (av3);
        \draw (av3) edge[-latex', loop right] node[above=2pt] {$0$} (av3);
\end{tikzpicture}
    \caption{The arena $\arena_1$.}
    \label{fig:LB-0a}
    \end{subfigure}
    \hspace{0.1\textwidth}
    \begin{subfigure}[b]{0.35\textwidth}
    \begin{tikzpicture}[yshift=1cm,node distance=1.5cm and 1.5cm]
\draw (0,0.75) node {};
\draw (0,-0.75) node {};
        \draw (0,0) node[Min,initial] (v1) {$s$};
        \draw ($(v1)+(1.5,0)$) node[Max] (v2) {$t$};
        \draw (v1) edge[-latex',out=30,in=150] node[above=2pt] {$-\N$} (v2);
        \draw (v2) edge[-latex',out=-150,in=-30] node[below=2pt] {$\N$} (v1);
\end{tikzpicture}
    \caption{The arena $\arena_1'$. }
    \label{fig:LB-0b}
    \end{subfigure}
  \caption{Arenas implementing the ``match the number'' game. Circles designate vertices controlled by \POne  and squares designate \PTwo. The edge labels indicate that for every $i\in\N$ there is a distinct edge with weight $-i$ from $s$ to $t$, and $+i$ from $t$ to $q$ or from $t$ to $s$. For $\arena_1$, consider the objective ``sum of weights exceeds $0$''. \POne can always match and thus win, but needs unbounded memory.
  The arena $\arena'_1$ shows a repeated version
  for the $\limsup$ \emph{mean}-payoff objective.
  }
\label{fig:LB-0}
\end{figure}

Already very simple games require infinite memory to win.
This especially holds for quantitative objectives,
which ask that the aggregate of individual edge weights along a play exceeds some threshold.
For instance, consider
a game where the environment picks a number and then the controller has to pick a larger one (see \cref{fig:LB-0a}).
In order to win, \POne has to remember the (per se unbounded) initial challenge and no finite memory structure would be sufficient to do so.
This objective is not $\omega$-regular since it is built upon an infinite alphabet.
We seek to understand
for different classes of games, what kind of infinite-memory structures are sufficient for winning strategies.

A natural, arguably the simplest, type of infinite memory structure is a \emph{step counter}: it only remembers how many steps have elapsed since the start of the game.
The availability of such a counter is a reasonable assumption for practical applications,
as most embedded devices have access to the current time, which suffices when each step takes a fixed amount of time.
A \emph{step-counter strategy} is one that, in addition to the current arena vertex, has access to the number of steps elapsed.
Notice that in the game in \cref{fig:LB-0a}, a step counter does not provide any relevant information (every path to vertex $t$ has length one). Therefore, step-counter strategies do not suffice for \POne.
An important ingredient for these counterexamples is that the underlying arena is infinitely branching (and uses arbitrary weights).
For many classes of games on \emph{finitely} branching arenas, strategies based on a step counter and additional finite memory are close to being the simplest kinds of strategies sufficient to win.
Examples are especially prevalent in stochastic games. For instance, in the ``Big Match'' (a concurrent mean-payoff game on a finite arena),
neither a step counter nor finite memory is sufficient to play $\eps$-optimally, yet a step counter \emph{together with} one bit is~\cite{HIN2018}. The same is true for the ``Bad Match'', which can be presented as a B\"uchi (repeated reachability) game~\cite{MaitraSudderth:DiscreteGambling,thuijsman1992optimality, KMST2024.2}. This upper bound holds generally for concurrent B\"uchi games on finite arenas~\cite{KMST2024.2}.

\subparagraph*{Quantitative objectives}
Objectives based on numerical weights are commonly called \emph{quantitative objectives}. These are defined using \emph{quantitative payoff functions}, which combine any finite sequence of weights into an aggregate number.
The three more common ones are the discounted-payoff~\cite{Sha53}, mean-payoff~\cite{Gil57,EM79}, and total-payoff functions~\cite{FV96,CdHS03}.
Every payoff function induces four variants of objectives, depending on whether we consider the $\limsup$ or $\liminf$, and on whether we ask that the limit is larger or strictly larger than a threshold.
For total payoff, it is also relevant to distinguish the use of real values or $\infty$ as a threshold.
We give an example to describe informally how we denote such objectives: $\MPsupge$ refers to the set of infinite sequences of rational numbers that achieve a value $\ge 0$ for the $\limsup$ variant (the line is above $\MP$) of the mean-payoff function (specified by letters $\MP$).
Over infinite arenas, the four variants are not equivalent and infinite-memory strategies are needed for at least one of the players (see~\cite[Example~8.10.2]{Put94} and~\cite{OS24}).

To study the strategy complexity for different quantitative objectives, we classify them according to which level of the \emph{Borel hierarchy} they belong to (which also ensures that the games we consider are determined~\cite{Martin}).
In the first level of the hierarchy lie the \emph{open} and \emph{closed} objectives (i.e., the sets respectively in $\SigmaOne$ and $\PiOne$), for which there exist recent characterisations of the sufficient memory structures
over finite or infinite arenas~\cite{CFH24,BFRV23}.
We build on this to establish upper bounds for more complex objectives.
All variants of mean-payoff and total-payoff objectives are on the second or third level of the Borel hierarchy.
Ohlmann and Skrzypczak~\cite{OS24} study objectives through their topological properties and provide a characterisation of the \emph{prefix-independent} $\SigmaTwo$ objectives for which memoryless strategies suffice for \POne over arbitrary arenas.
They show in particular that memoryless strategies suffice for \POne for the quantitative objectives $\MPinfg$ and $\TPinfminusinf$, even over infinitely branching arenas.
Over stochastic games, quantitative (in particular $\liminf$ mean-payoff) objectives on infinite arenas
generally do not have ($\eps$-)optimal strategies based on a step counter, even for finitely branching Markov decision processes~\cite{MM23}.

\subparagraph*{Our contributions}
We settle the strategy complexity over infinite, deterministic games
for the mean-payoff and total-payoff objectives up to the second level of the Borel hierarchy.
In particular, we show for which of these, step-counter strategies are sufficient for \POne.
Our upper bounds all allow for arenas with arbitrary weights, while our strongest lower bounds only use weights $-1$, $0$, and $1$.
Our results are as follows and summarised in \cref{fig:table}.
\begin{itemize}
    \item For $\TPinfg$ and $\TPinfge$, strategies using a step counter and an arbitrary amount of finite memory do not suffice, even over acyclic finitely branching arenas (\cref{thm:insufficientTPinf}, \cref{sec:lowerBounds}).
    The proof rules out finite-memory structures using an application of the \emph{infinite Ramsey theorem} to allow \PTwo to stay winning in a particular infinite arena regardless of the finite-memory structure of \POne.
    \item In \cref{sec:PiTwo}, we provide a sufficient condition for when step-counter strategies suffice over finitely branching arenas for prefix-independent objectives in $\PiTwo$, i.e, countable intersections of open sub-objectives (\cref{thm:Pi2StepCounter}). This implies in particular that step-counter strategies do suffice for $\MPsupge$ and $\TPsupinf$ (\cref{cor:MPTPsup}), which is tight in the sense that finite-memory strategies do not suffice for these objectives,
    even over acyclic finitely branching arenas (\cref{lem:FMnotSufficient}).
    The proof uses carefully constructed expanding ``bubbles'', so that within each consecutive bubble, \POne can satisfy the next open sub-objective. The step counter is used to determine the current bubble.
    \item In \cref{sec:NonPrefIndPiTwo}, we show that for $\TPsupge$, which is not prefix-independent, strategies using a step-counter and one additional bit of memory suffice (\cref{thm:TPsupge}).
    This is tight in
    that neither finite-memory strategies nor step-counter strategies suffice, even over acyclic finitely branching arenas (\cref{lem:FMnotSufficient,lem:noSCforTP}). The proof similarly employs bubbles, but an additional bit is needed to keep track of whether a ``sub-objective'' has been achieved in the current bubble and then switches to stay in the winning region.
\end{itemize}

\subparagraph*{Structure}
We define the various notions used throughout the paper in \cref{sec:prelim}.
\cref{sec:lowerBounds} is dedicated to all lower bounds on the strategy complexity of the various objectives, culminating in a lower bound for $\TPinfge$.
\cref{sec:openClosed} is devoted to recalling useful results on open and closed objectives, upon which the following sections build.
\cref{sec:PiTwo} proves a sufficient condition for the sufficiency of step-counter strategies for prefix-independent $\PiTwo$ objectives.
\cref{sec:NonPrefIndPiTwo} proves an upper bound on the strategy complexity of $\TPsupge$.

This article is the extended version of conference paper~\cite{BIPTV24}, including in particular all missing proofs in its appendix.

\begin{table}
\centering
\bgroup
\def\arraystretch{1.2}

\begin{tabular}{l|l|c|p{4.79cm}}
    Obj. & Description & Class & Strategy complexity \\\hline\hline

    $\MPinfg$ & $\bigcup_{m\ge 1} \bigcup_{i \ge 1} \bigcap_{j \ge i} \{\word \mid \MP(\word_{\le j}) \ge \frac{1}{m} \}$ & $\SigmaTwo$ & \multirow{2}{4.9cm}{Memoryless (even over infinitely branching arenas)~\cite{OS24}} \\\cline{1-3}

    $\TPinfminusinf$ & $\bigcup_{m\ge 1} \bigcup_{i \ge 1} \bigcap_{j \ge i} \{\word \mid \TP(\word_{\le j}) \ge -m \}$ & $\SigmaTwo$ &
    \\\hline

    $\TPinfg$ & $\bigcup_{m\ge 1} \bigcup_{i \ge 1} \bigcap_{j \ge i} \{\word \mid \TP(\word_{\le j}) \ge \frac{1}{m} \}$ & $\SigmaTwo$ & SC\,+\,FM insufficient (\cref{thm:insufficientTPinf}) \\\hline

    $\MPsupge$ & $\bigcap_{m\ge 1} \bigcap_{i \ge 1} \bigcup_{j \ge i} \{\word \mid \MP(\word_{\le j}) \ge \frac{-1}{m} \}$ & $\PiTwo$ &\multirow{2}{4.9cm}{SC sufficient (\cref{cor:MPTPsup}) \\
    FM insufficient (\cref{lem:FMnotSufficient})}
    \\\cline{1-3}

    $\TPsupinf$ & $\bigcap_{m\ge 1} \bigcap_{i \ge 1} \bigcup_{j \ge i} \{\word \mid \TP(\word_{\le j}) \ge m \}$ & $\PiTwo$ &
    \\\hline

    $\TPsupge$ & $\bigcap_{m\ge 1} \bigcap_{i \ge 1} \bigcup_{j \ge i} \{\word \mid \TP(\word_{\le j}) \ge \frac{-1}{m} \}$ & $\PiTwo$ &\makecell[cl]{SC\,+\,1-bit sufficient (\cref{thm:TPsupge}) \\FM insufficient (\cref{lem:FMnotSufficient}) \\
    SC insufficient (\cref{lem:noSCforTP})\vspace{-0.25cm}
    }
\end{tabular}
\egroup
\caption{Results for quantitative objectives up to the second level of the Borel hierarchy for finitely branching arenas. \emph{SC} refers to \emph{step counter}, and \emph{FM} refers to \emph{finite memory}.
}
\label{fig:table}
\end{table}

\section{Preliminaries} \label{sec:prelim}
Given a set $X$, we write $X^*$ for the set of finite words on $X$, $X^+$ for the set of non-empty finite words on $X$, and $X^\omega$ for the set of infinite words on $X$.
For $\word\in X^*$, we write $\card{\word}$ for the length of $\word$.
For $\word\in X^\omega$ and $j\in\IN$, we write $\word_{\le j}$ for the finite prefix of length $j$ of $\word$.%

\subparagraph*{Games}
We study two-player zero-sum \emph{games}, each given by an \emph{arena} and an \emph{objective}, as defined below.
We refer to the two opposing players as \POne and \PTwo.

An \emph{arena} is a directed graph with two kinds of vertices where edges are labelled by an element of $\colours$, a non-empty set of \emph{colours}.
Formally, an arena is a tuple $\arena = \arenaFull$ where $\vertices = \vertices_1 \cup \vertices_2$ is a non-empty set of \emph{vertices}, $\vertices_1$ and $\vertices_2$ are disjoint, and $\edges \subseteq \vertices \times \colours \times \vertices$ is a set of labelled \emph{edges}.
Vertices in $\vertices_1$ and $\vertices_2$ are respectively controlled by \POne and \PTwo, which will appear clearly when we define strategies below.
We require that for every vertex $\vertex \in \vertices$, there is an edge $(\vertex, \clr, \vertex')\in\edges$ (arenas are ``non-blocking'').
For $\edge = (\vertex, \clr, \vertex')$, we write $\from(\edge)$ for $\vertex$, $\col(\edge)$ for~$\clr$, and $\too(\edge)$ for $\vertex'$.
An arena is \emph{finite} if $\vertices$ is finite, and \emph{finitely branching} if for every $\vertex\in\vertices$, the set $\{\edge\in\edges\mid \from(\edge) = \vertex\}$ is finite.

A \emph{history} is a finite sequence $\history = \edge_1\ldots\edge_n\in\edges^*$ of edges such that for $i\in\{1, \ldots, n-1\}$, $\too(\edge_i) = \from(\edge_{i+1})$.
We write $\from(\history)$ for $\from(\edge_1)$, $\too(\history)$ for $\too(\edge_n)$, and $\col(\history)$ for the sequence $\col(\edge_1)\ldots\col(\edge_n)\in\colours^*$.
For convenience, we assume that for every vertex $\vertex$, there is a distinct \emph{empty history} $\emptyHistory_\vertex$ such that $\from(\emptyHistory_\vertex) = \too(\emptyHistory_\vertex) = \vertex$.
The set of histories of $\arena$ is denoted as $\histories(\arena)$.
For $p\in\{1, 2\}$, we write $\histories_p(\arena)$ for the set of histories $\history$ such that $\too(\history)\in\vertices_p$.
A \emph{play} is an infinite sequence of edges $\play = \edge_1\edge_2\ldots\in\edges^\omega$ such that for $i \ge 1$, $\too(\edge_i) = \from(\edge_{i+1})$.
We write $\from(\play)$ for $\from(\edge_1)$ and $\col(\play)$ for $\col(\edge_1)\col(\edge_2)\ldots\in\colours^\omega$.
A~history $\history$ (resp.\ a play $\play$) is said to be \emph{from $\vertex$} if $\vertex = \from(\history)$ (resp.\ $\vertex = \from(\play)$).

An \emph{objective} (sometimes called a \emph{winning condition} in the literature) is a set $\objective \subseteq \colours^\omega$.
An objective $\objective$ is \emph{prefix-independent} if for all $\word\in\colours^*$, $\word'\in\colours^\omega$, $\word\word'\in\objective$ if and only if $\word'\in\objective$.%

\subparagraph*{Strategies}
A \emph{strategy of \playerI on $\arena$} is a function $\strat\colon \histories_p(\arena) \to \edges$ such that for all~$\history\in\histories_p(\arena)$, $\from(\strat(\history)) = \too(\history)$.
A play $\play = \edge_1\edge_2\ldots$ is \emph{consistent with a strategy $\strat$ of \playerI} if for all finite prefixes $\history$ of $\play$ such that $\too(\history) \in \vertices_p$, $\strat(\history) = \edge_{\card{\history} + 1}$.
A strategy $\strat$ of \POne is \emph{winning for objective $\objective$ from a vertex~$\vertex$} if all plays from $\vertex$ consistent with $\strat$ induce a sequence of colours in $\objective$.
For a fixed objective, the set of vertices of an arena $\arena$ from which a winning strategy for \POne exists is called the \emph{winning region of \POne on $\arena$} and is denoted $\winning$.
A strategy $\strat$ of \POne is \emph{uniformly winning for objective $\objective$ in $\arena$} if $\strat$ is winning from every vertex of the winning region of $\arena$.

A \emph{memory structure for an arena $\arena = \arenaFull$} is a tuple $\memory = \memoryFull$ where~$\memStates$ is a set of \emph{memory states}, $\memInit\in\memStates$ is an \emph{initial state}, and $\memUpd\colon \memStates \times \edges \to \memStates$ is a \emph{memory update function}.
We extend $\memUpd$ to a function $\memUpdExt\colon \memStates \times \edges^* \to \memStates$ in a natural way.
A memory structure~$\memory$ is \emph{finite} if $\memStates$ is finite.
A strategy $\strat$ of \playerI on $\arena$ is \emph{based on $\memory$} if there exists a function $f\colon \vertices_p \times \memStates \to \edges$ such that, for all $\history\in\histories_p(\arena)$, $\strat(\history) = f(\too(\history), \memUpdExt(\memInit, \history))$.
We will abusively assume that a strategy based on a memory structure is this function $f$.

A \emph{memoryless strategy} is a strategy based on a memory structure with a single memory state.
A \emph{$1$-bit strategy} is a strategy based on a memory structure with two memory states.
A \emph{step counter} is a memory structure $\SC = (\IN, 0, (\step, \edge) \mapsto \step + 1)$ that simply counts the number of steps already elapsed in a game.
A strategy $\strat$ of \playerI on~$\arena$ is a \emph{step-counter strategy} if $\strat$ is based on a step counter; in other words, if there is a function $f\colon \vertices_p \times \IN \to \edges$ such that $\strat(\history) = f(\too(\history), \card{\history})$.
This means that $\strat$ only considers the current vertex and the number of steps elapsed to make its decisions.
Step-counter strategies are sometimes called ``Markov strategies''~\cite{thuijsman1992optimality,KMST20}.

A \emph{step-counter and finite-memory structure} is a memory structure with state space $\memStates = \IN \times \{0, \ldots, K-1\}$, initial state $(0, 0)$, and a transition function $\memUpd$ such that $\memUpd((\step, \memState), \edge) = (\step + 1, \memUpd'((\step, \memState), \edge))$ for some function $\memUpd'\colon \memStates \times \edges \to \{0, \ldots, K-1\}$.
Notice that a step counter corresponds to the special case of a step-counter and finite-memory structure with $K=1$.
A \emph{step-counter + $1$-bit strategy} is a strategy based on a step-counter and finite-memory structure with~$K=2$.

We say that a kind of strategies \emph{suffices for objective $\objective$ over a class of arenas} if, for all arenas in this class, from all vertices of her winning region, \POne has a winning strategy of this kind.
We say that a kind of strategies \emph{suffices uniformly for objective $\objective$ over a class of arenas} if, for all arenas in this class, \POne has a uniformly winning strategy of this kind.%

For an arena $\arena = \arenaFull$ and a memory structure $\memory=\memoryFull$, we write $\arena \product \memory$ for the \emph{product between $\arena$ and $\memory$}. It is the arena $(\vertices', \vertices_1', \vertices_2', \edges')$ such that $\vertices' = \vertices \times \memStates$, $\vertices_1' = \vertices_1 \times \memStates$, $\vertices_2' = \vertices_2 \times \memStates$, and $\edges' = \{((\vertex, \memState), \clr, (\vertex', \memUpd(\memState, \edge))) \mid \edge = (\vertex, \clr, \vertex')\in\edges, \memState\in\memStates\}$.
Observe that \POne has a winning strategy based on $\memory$ from a vertex $\vertex$ in an arena $\arena$ if and only if \POne has a winning memoryless strategy from vertex $(\vertex, \memInit)$ in $\arena\product\memory$.

To simplify reasonings over specific arenas, we show that step counters do not have any use when the arena already \emph{encodes the step count}.

\begin{lemma} \label{lem:encodingTheStep}
    Let $\arena = \arenaFull$ be an arena, and $\vertex_0\in\vertices$ be an initial vertex.
    Assume that for each pair of histories $\history_1, \history_2$ from $\vertex_0$ to some $\vertex\in\vertices$, we have $\card{\history_1} = \card{\history_2}$ (i.e., the arena already ``encodes the step count from $\vertex_0$'').
    Then, a step-counter and finite-memory strategy with $K$ states of finite memory can be simulated from $\vertex_0$ by a strategy with only $K$ states of finite memory.
\end{lemma}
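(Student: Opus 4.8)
The plan is to use that, under the hypothesis, the number of steps elapsed along a play from $\vertex_0$ is entirely determined by the current vertex, so the step component of the memory carries no information beyond the arena itself and can be dropped. First I would define a \emph{level} function $\ell\colon\vertices\to\IN$: for a vertex $\vertex$ reachable from $\vertex_0$, set $\ell(\vertex):=\card{\history}$ for any history $\history$ from $\vertex_0$ to $\vertex$ --- this is well-defined precisely because the arena encodes the step count from $\vertex_0$; on the (irrelevant) vertices not reachable from $\vertex_0$, fix $\ell$ arbitrarily. Note that $\ell(\vertex_0)=0$ via $\emptyHistory_{\vertex_0}$, and that whenever an edge $\edge$ occurs on some history from $\vertex_0$ we have $\ell(\too(\edge))=\ell(\from(\edge))+1$.

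Now let $\strat$ be a step-counter and finite-memory strategy of \playerI, based on a structure $\memory$ with states $\IN\times\{0,\ldots,K-1\}$, initial state $(0,0)$, update $\memUpd((\step,\memState),\edge)=(\step+1,\memUpd'((\step,\memState),\edge))$, and realised by a function $f\colon\vertices_p\times(\IN\times\{0,\ldots,K-1\})\to\edges$. I would build the finite memory structure $\memory'$ with states $\{0,\ldots,K-1\}$, initial state $0$, and update $\memUpd''(\memState,\edge):=\memUpd'\big((\ell(\from(\edge)),\memState),\edge\big)$, and let $\strat'$ be the strategy based on $\memory'$ realised by $f'(\vertex,\memState):=f(\vertex,(\ell(\vertex),\memState))$. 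As $\memory'$ has $K$ states, it suffices to check that $\strat$ and $\strat'$ are consistent with the same plays from $\vertex_0$ (in particular $\strat'$ is then winning from $\vertex_0$ whenever $\strat$ is).

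The core step is the following invariant, proved by induction on $\card{\history}$: for every history $\history$ from $\vertex_0$, if the memory state of $\memory$ after reading $\history$ is $(\step,\memState)$, then $\step=\card{\history}=\ell(\too(\history))$ and the memory state of $\memory'$ after reading $\history$ is $\memState$. The base case $\history=\emptyHistory_{\vertex_0}$ holds since both structures are in their initial states and $\card{\emptyHistory_{\vertex_0}}=0=\ell(\vertex_0)$. For the inductive step, write $\history=\history'\edge$ and apply the hypothesis to $\history'$ (a shorter history from $\vertex_0$): $\memory$ is then in state $(\card{\history'},\memState')$ after $\history'$ with $\card{\history'}=\ell(\from(\edge))$, and $\memory'$ is in state $\memState'$. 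Reading $\edge$ therefore calls $\memUpd'$ with the same arguments $\big((\card{\history'},\memState'),\edge\big)$ in both structures, so after $\history$ the state of $\memory'$ equals the second component of the state of $\memory$, whose first component is $\card{\history'}+1=\card{\history}=\ell(\too(\edge))=\ell(\too(\history))$, using the remark above. With the invariant in hand, for every $\history\in\histories_p(\arena)$ from $\vertex_0$ whose $\memory$-state is $(\card{\history},\memState)$, we compute $\strat(\history)=f(\too(\history),(\card{\history},\memState))=f(\too(\history),(\ell(\too(\history)),\memState))=f'(\too(\history),\memState)=\strat'(\history)$, which gives the required equality of consistent plays from $\vertex_0$.

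I do not foresee a genuine obstacle: the only delicate points are that $\ell$ is well-defined --- which is exactly the hypothesis --- and that $\memory'$ only has to reproduce $\strat$ on histories starting at $\vertex_0$, so the arbitrary choice of $\ell$ off the part reachable from $\vertex_0$ is harmless. One could instead argue through the product arenas $\arena\product\memory$ and $\arena\product\memory'$ together with the memoryless-strategy reformulation recalled just before the lemma, noting that in the part of $\arena\product\memory$ reachable from $(\vertex_0,(0,0))$ the step coordinate is a function of the vertex coordinate; I would nonetheless present the direct argument above.
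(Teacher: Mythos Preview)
Your proposal is correct and follows essentially the same approach as the paper: define the level function $\ell(\vertex)$ (the paper writes $n_\vertex$), build the $K$-state memory with update $\memState\mapsto\memUpd'((\ell(\from(\edge)),\memState),\edge)$ and decision function $(\vertex,\memState)\mapsto f(\vertex,(\ell(\vertex),\memState))$, and observe that this reproduces the original strategy from $\vertex_0$. The paper's proof simply asserts this last point ``by construction'', whereas you spell out the inductive invariant and also explicitly handle vertices unreachable from $\vertex_0$; otherwise the two arguments are identical.
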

\begin{proof}
    By hypothesis on $\arena$, there exists $n_\vertex\in\IN$ the length of any history from $\vertex_0$ to $\vertex$.
    Let $\strat'\colon\vertices_1 \times \IN \times \memStates \to \edges$ be a step-counter and finite-memory strategy with $\memStates = \{0, \ldots, K-1\}$, with finite-memory update function $\memUpd'\colon \memStates \times \edges \to \{0, \ldots, K-1\}$.
    Let $\memory = (\memStates, 0, \memUpd)$ be the memory structure with $\memUpd(\memState, \edge) = \memUpd'((n_{\from(\edge)}, \memState), \edge)$.
    By construction, the strategy $\strat\colon \vertices_1 \times \memStates \to \edges$ such that $\strat(\vertex, \memState) = \strat'(\vertex, n_\vertex, \memState)$ behaves exactly like $\strat'$ from $v_0$.
\end{proof}

\subparagraph*{Quantitative objectives}
We consider classical quantitative objectives: mean-payoff and total-payoff objectives, as defined below.
Let $\colours \subseteq \IQ$ (when colours are rational numbers, we often refer to them as \emph{weights}).
For a finite word $\word = \clr_1\ldots\clr_{\card{\word}} \in\colours^*$, define $\TP(\word) = \sum_{i = 1}^{\card{\word}} \clr_i$ for the \emph{total payoff} of the word, i.e., the sum of the weights it contains.
Further, when $\card{\word} \ge 1$, let $\MP(\word) = {\TP(\word)}/{\card{\word}}$
denote the \emph{mean payoff} of the word $w$, i.e., the mean of the weights it contains.
We extend any such aggregate function $X\colon\colours^*\to \IR$ to infinite words by taking limits: for $\word\in\colours^\omega$, we define $\overline{X}(\word) = \limsup_{j} X(\word_{\le j})$ and $\underline{X}(\word) = \liminf_{j} X(\word_{\le j})$.
Fixing a
binary relation $\triangleright\subseteq\IR^2$ and threshold $\thresh\in\IQ\cup\{-\infty,\infty\}$, this naturally defines objectives
$
    \overline{X}_{\triangleright \thresh} = \{\word\in\colours^\omega \mid \overline{X}(\word) \triangleright \thresh\}
$ and
$
    \underline{X}_{\triangleright \thresh} = \{\word\in\colours^\omega \mid \underline{X}(\word) \triangleright \thresh\}
$.

In particular, we are interested in the
limit infimum/supremum objectives for total and mean payoff.\footnote{We only consider objectives where the threshold is a \emph{lower} bound ($\triangleright \in \{>,\ge\}$); each variant with \emph{upper} bound behaves like a variant with lower bound when we replace each weight $\clr$ in arenas with its additive inverse $-\clr$ and switch the sup/inf (for instance, $\overline{\mathsf{MP}}_{<\thresh}$ behaves like $\MPinfgr$ when we invert the weights).}
We consider the mean-payoff variants with threshold $r\in\IQ$, and the total-payoff variants with $r\in\IQ\cup\{-\infty,+\infty\}$.
Note that all four mean-payoff objectives and all four total-payoff objectives with $\infty$ threshold are prefix-independent, but the four total-payoff objectives with threshold in $\IQ$ are not prefix-independent.

\begin{remark} \label{rem:threshold0}
    Our results are generally stated for threshold $r = 0$.
    This is without loss of generality since the results deal with large classes of arenas, and little modifications to the arenas allow to reduce from an arbitrary rational threshold to threshold $0$.
    \lipicsEnd
\end{remark}

\subparagraph*{Topology of objectives}
For $\word\in\colours^*$, we write $\word\colours^\omega = \{\word\word' \mid \word'\in\colours^\omega\}$ for the objective containing all infinite words that start with $\word$ (it is sometimes called the \emph{cylinder} or \emph{cone} of $\word$).
An objective $\objective$ is \emph{open} if there is a set $A \subseteq \colours^*$ such that $\objective = \bigcup_{\word\in A} \word\colours^\omega$.
For an open objective $\objective$, we say that a finite word $\word\in\colours^*$ \emph{already satisfies $\objective$} if $\word\colours^\omega \subseteq \objective$.
If an objective is open, then by definition, any infinite word it contains has a finite prefix that already satisfies it.
An objective is \emph{closed} if it is the complement of an open set.

Open and closed objectives are at the first level of the \emph{Borel hierarchy}; the set of open (resp.\ closed) objectives is denoted $\SigmaOne$ (resp.\ $\PiOne$).
For $i>1$, we can define $\SigmaI{i}$ as all the countable unions of sets in $\PiI{i-1}$, and $\PiI{i}$ as all the countable intersections of sets in $\SigmaI{i-1}$.
All the objectives considered in this paper lie in the first three levels of this hierarchy, and we focus on those on the second level.

\section{Lower bounds} \label{sec:lowerBounds}
We provide lower bounds on the size/structure of the memory to build winning strategies, focusing on objectives $\MPsupge$, $\TPsupinf$, $\TPsupge$, and $\TPinfg$, which are the four objectives on the second level of Borel hierarchy for which we want to establish whether step-counters strategies suffice.
We mention where our constructions directly work for further objectives.

All lower bounds are based on the simple idea that one player chooses some number and the other must match it.
We first observe that on infinitely branching arenas
with arbitrary weights, neither finite memory nor a step counter, nor both together, is sufficient.
The proof uses the arenas from \cref{fig:LB-0}, discussed informally in \cref{sec:intro} (full proofs in \cref{app:lowerBounds}).%

\begin{restatable}{lemma}{infBranchingLB} \label{lem:infBranchingLB}
    Over infinitely branching arenas with arbitrary weights, step-counter and finite-memory strategies
    are not sufficient for \POne for objectives
    $\MPsupg$, $\MPsupge$, $\TPsupinf$, $\TPinfg$, $\TPinfge$, $\TPsupg$ and $\TPsupge$.
\end{restatable}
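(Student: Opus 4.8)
The plan is to exhibit, for each of the seven objectives, an infinitely branching arena and a vertex from which \POne wins while no step-counter and finite-memory strategy is winning; the two ``match the number'' arenas $\arena_1$ and $\arena_1'$ of \cref{fig:LB-0}, using weights of arbitrarily large magnitude, will suffice. In every case \POne's winning strategy must remember \PTwo's latest challenge (an arbitrary integer) in order to answer it, which no finite memory can do, and a step counter provides no substitute.

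I would use $\arena_1$ for the total-payoff objectives $\TPsupg$, $\TPsupge$, $\TPinfg$, $\TPinfge$. Every play from $s$ has the form $(s,-i,t)\,(t,+j,q)\,(q,0,q)^\omega$, with partial totals $0,-i,j-i,j-i,\ldots$, so along such a play $\overline{\TP}=\underline{\TP}=j-i$ (a single deviating prefix being irrelevant to a limit), while $\overline{\MP}=\underline{\MP}=0$ (the partial totals being bounded); in particular $\arena_1$ is of no use for the mean-payoff variants. Hence $s$ is in \POne's winning region for all four objectives: she answers $j=i+1$ to win $\TPsupg$ and $\TPinfg$, and $j=i$ to win $\TPsupge$ and $\TPinfge$. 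For the lower bound, fix a strategy $\strat$ of \POne based on a step counter and a finite memory of size $K$. Whenever a consistent play reaches $t$, the step count equals $1$ regardless of \PTwo's move, so $\strat$'s move at $t$ ranges over at most $K$ edges (one per reachable memory state); let $B$ bound their weights. Then \PTwo plays $(s,-(B+1),t)$, forcing \POne to answer with some $j\le B$, so the play has $\overline{\TP}=\underline{\TP}=j-(B+1)<0$ and $\strat$ loses from $s$ for all four objectives.

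I would use $\arena_1'$ for $\MPsupg$, $\MPsupge$, $\TPsupinf$ (and, redundantly, $\TPsupg$, $\TPsupge$). A play from $s$ alternates $(s,-i_\ell,t)$ and $(t,+j_\ell,s)$, with partial total $\sum_{k\le\ell}(j_k-i_k)$ after $2\ell$ steps. Always answering $j_\ell=i_\ell+1$ makes this total equal $\ell$, so $\overline{\TP}=+\infty$ and $\overline{\MP}=\frac12$; hence $s$ is in \POne's winning region for all five objectives. For the lower bound, fix a step-counter and finite-memory strategy $\strat$ of size $K$. On the $\ell$-th visit to $t$ the step count is $2\ell-1$, so $\strat$'s answer there is one of at most $K$ edges, with weights bounded by some $B_\ell$. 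Then \PTwo plays $(s,-(B_\ell+1),t)$ at round $\ell$, which forces $j_\ell-i_\ell\le-1$ for every $\ell$; by induction every partial total at position $n$ is then at most $-\lceil n/2\rceil$, so $\overline{\TP}=-\infty$ and $\overline{\MP}\le-\frac12$. Thus $\strat$ wins none of the five objectives, while $s$ stays winning.

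The proof is short; the one point needing care is the limit bookkeeping in $\arena_1$. Its partial totals are eventually constant, so a lone negative prefix does not affect $\underline{\TP}$; this is exactly why the three-vertex arena already separates the two $\liminf$ total-payoff objectives. The other recurring observation is that the step counter gives no leverage for matching \PTwo's unbounded, arbitrary-weight challenge: it is constant at $t$ in $\arena_1$ and reveals only the round index in $\arena_1'$, so combined with any fixed finite memory it still leaves \POne with boundedly many possible answers, which \PTwo can out-bid.
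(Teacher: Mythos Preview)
Your proposal is correct and follows essentially the same approach as the paper: both use $\arena_1$ for the four total-payoff objectives with finite threshold and $\arena_1'$ for $\MPsupg$, $\MPsupge$, and $\TPsupinf$, exploiting that the step count at $t$ is forced (equal to $1$ in $\arena_1$, equal to $2\ell-1$ in round $\ell$ of $\arena_1'$) so that a step-counter plus size-$K$ memory yields at most $K$ possible responses per round, which \PTwo out-bids. Your write-up is slightly more explicit about the partial-sum bounds and the irrelevance of the single negative prefix for the $\liminf$ variants in $\arena_1$, but the argument is the same.
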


We now establish lower bounds over finitely branching arenas.
Firstly, the example $\arena_1'$ can be made finitely branching and acyclic, as depicted in \cref{fig:LB-1}.
The resulting arena, $\arena_2$, simply unfolds $\arena_1'$ so that any edge $(s,-j,t)$ is replaced by a finite path $s^i_0\to \cdots \to s^i_j \to t^i_0$, and similarly for the responses.
This construction works as long as one can discourage (i.e., make losing) the choice to stay on the infinite intermediate chain of vertices and not moving on to a vertex controlled by the opponent.
Here, this is achieved by using weights $1$ on the chains of \PTwo and weights $-1$ on the chains of \POne, which are then compensated by weights twice as large.
In practice, edges with weights $i\in\IN$ (resp.\ $-i\in-\IN$) can be replaced by chains of $i$ weights $1$ (resp.\ $i$ weights $-1$).
This allows to obtain lower bounds on the $\limsup$ objectives.
The fact that finite-memory strategies are insufficient for variants of the mean-payoff objectives over finitely branching arenas was already discussed in~\cite[Example~8.10.2]{Put94} and~\cite{OS24}; we rephrase it here for completeness.

\begin{figure}
\floatsetup{valign=t, heightadjust=all}

\begin{floatrow}
\ffigbox{
\adjustbox{width=0.99\linewidth}{
        \begin{tikzpicture}[node distance=1cm and 1.5cm, on grid]

\node [Min,initial,initial where=left] (s0) at (0,0) {$s^i_0$};
\node [Min,right=of s0] (s1) {$s^i_1$};
\node [Min,right=of s1] (s2) {$s^i_2$};
\node [Min,right=1.9cm of s2] (s3) {$s^i_j$};
\node [right=1cm of s3] (s34) {$\cdots$};

\node [Min, Imp, below=of s0] (s0') {};
\node [Min, Imp, below=of s1] (s1') {};
\node [Min, Imp, below=of s2] (s2') {};
\node [Min, Imp, below=of s3] (s3') {};
\node [below=of s34] (s34') {$\cdots$};

\node [Max, below=1cm of s0'] (t0) {$t^i_0$};
\node [Max,right=of t0] (t1) {$t^i_1$};
\node [Max,right=of t1] (t2) {$t^i_2$};
\node [Max,right=1.9cm of t2] (t3) {$t^i_j$};
\node [right=1cm of t3] (t34) {$\cdots$};

\node [Max, Imp, below=of t0] (t0') {};
\node [Max, Imp, below=of t1] (t1') {};
\node [Max, Imp, below=of t2] (t2') {};
\node [Max, Imp, below=of t3] (t3') {};
\node [below=of t34] (t34') {$\cdots$};

\node [Min,below=1cm of t0'] (r0) {$s^{i+1}_0$};

\draw (s0) edge node[label] {1} (s1);
\draw (s1) edge node[label] {1} (s2);
\draw (s2) edge[dotted] node[label] {1} (s3);

\draw (s0) edge node[label,auto] {$0$} (s0');
\draw (s1) edge node[label,auto] {$-2$} (s1');
\draw (s2) edge node[label,auto] {$-4$} (s2');
\draw (s3) edge node[label,auto] {$-2j$} (s3');
\draw (s3') edge[dotted] node[label,auto,swap] {$0$} (s2');
\draw (s2') edge node[label,auto,swap] {$0$} (s1');
\draw (s1') edge node[label,auto,swap] {$0$} (s0');

\draw (s0') edge node[label,auto,swap] {$0$} (t0);

\draw (t0) edge node[label] {$-1$} (t1);
\draw (t1) edge node[label] {$-1$} (t2);
\draw (t2) edge[dotted] node[label] {$-1$} (t3);

\draw (t0) edge node[label,auto] {$0$} (t0');
\draw (t1) edge node[label,auto] {$2$} (t1');
\draw (t2) edge node[label,auto] {$4$} (t2');
\draw (t3) edge node[label,auto] {$2j$} (t3');
\draw (t3') edge[dotted] node[label,auto,swap] {$0$} (t2');
\draw (t2') edge node[label,auto,swap] {$0$} (t1');
\draw (t1') edge node[label,auto,swap] {$0$} (t0');

\draw (t0') edge node[label,auto,swap] {$0$} (r0);

\end{tikzpicture}
    }
}{\caption{The arena $\arena_2$ is acyclic and every vertex has finite in- and out-degree.
We recall that circles are controlled by \POne and squares by \PTwo.
\label{fig:LB-1}
}
 }
\ffigbox{
\adjustbox{width=0.99\linewidth}{
        \begin{tikzpicture}[node distance=2cm and 1.5cm, on grid]

\node [Min,initial,initial where=left] (s0) at (0,0) {$s_0$};
\node [Min,right=of s0] (s1) {$s_1$};
\node [Min,right=of s1] (s2) {$s_2$};
\node [Min,right=1.9cm of s2] (s3) {$s_i$};
\node [right=1cm of s3] (s34) {$\cdots$};

\node [Max, below=of s0] (t0) {$t_0$};
\node [Max,right=of t0] (t1) {$t_1$};
\node [Max,right=of t1] (t2) {$t_2$};
\node [Max,right=1.9cm of t2] (t3) {$t_i$};
\node [right=1cm of t3] (t34) {$\cdots$};

\node [Max, below=2cm of t0] (r0) {$r_0$};
\coordinate[below=1cm of t0] (joinhere);

\draw (s0) edge node[label] {$1$} (s1);
\draw (s1) edge node[label] {$1$} (s2);
\draw (s2) edge[dotted] node[label] {$1$} (s3);

\def\dist{-1}
\draw (s0) edge[squish] node[label] {$-1$} (t0);
\draw (s1) edge[squish] node[label] {$-3$} (t1); 
\draw (s2) edge[squish] node[label] {$-5$} (t2); 
\draw (s3) edge[squish] node[label] {$-2i-1$} (t3);

\draw (t0)  edge[squish] node[label] {$0$} (t1);
\draw (t1)  edge[squish] node[label] {$0$} (t2);
\draw (t2) edge[squish,dotted] node[label] {$0$} (t3);

\draw (t0) to node[label] {$0$} +(0,\dist) edge (r0);
\draw (t1) to node[label] {$1$} +(0,\dist) ++(0,\dist) -> (joinhere); 
 \draw (t2) to node[label] {$2$} +(0,\dist) ++(0,\dist) -> (joinhere); 
\draw[dotted] (t3) to node[label] {$i$} +(0,\dist) ++(0,\dist) -> (joinhere); 

\draw (r0) edge[loop right] node[label] {$0$} (r0);
\end{tikzpicture}
    }
}{\caption{The arena $\arena_3$.
        Arrows
        \inEdge{$s_i$}{$-2i-1$}{$t_{i}$}{squish} are shorthand for paths of length $2i+1$ with edge weights $-1$, and \inEdge{$t_i$}{$0$}{$t_{i+1}$}{squish} are shorthand for paths of length $3$ with edge weights $0$.
        \label{fig:LB-2}
        }
    }
\end{floatrow}
\end{figure}

\begin{lemma} \label{lem:FMnotSufficient}
    Over finitely branching arenas,
    finite-memory strategies are not sufficient for \POne for objectives
    $\MPsupg$, $\MPsupge$, $\TPsupinf$, $\TPsupg$, and $\TPsupge$.
\end{lemma}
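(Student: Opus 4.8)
The plan is to establish all five lower bounds with one finitely branching, acyclic arena implementing a repeated ``match the number'' game, namely the unfolding $\arena_2$ of $\arena_1'$ depicted in \cref{fig:LB-1} (started at $s^1_0$); as explained in the discussion of \cref{fig:LB-1}, the role of the $\pm 1$-padded chains with doubled compensations is exactly to make a player who never branches off a chain lose, so that $\arena_2$ faithfully simulates the infinitely branching $\arena_1'$. The play then proceeds in rounds: in round $i$, \PTwo first walks along $s^i_0, s^i_1, \dots$ and branches down at some $s^i_{j_i}$, committing to net weight $-j_i$ over that part (of length $2j_i+2$); then \POne walks along $t^i_0, t^i_1, \dots$ and branches down at some $t^i_{k_i}$ to reach $s^{i+1}_0$, committing to net weight $+k_i$ (over length $2k_i+2$). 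The whole argument is a rephrasing of the classical observations of~\cite[Example~8.10.2]{Put94} and~\cite{OS24}.

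First I would check that $s^1_0$ lies in \POne's winning region for each objective, allowing her unbounded memory. Her strategy: whenever the play reaches $t^i_0$, she reads off from the history the number $j_i$ to which \PTwo has just committed and branches down at a depth $k_i \ge j_i$. Exact matching ($k_i=j_i$) already suffices for the non-strict objectives $\MPsupge$ and $\TPsupge$ (the partial sum returns to $0$ at the end of each round and transiently reaches $j_i$ during \PTwo's ascent, so $\limsup \MP \ge 0$ and $\limsup \TP \ge 0$); for $\TPsupinf$, $\TPsupg$ and $\MPsupg$ she over-bids, e.g.\ $k_i = j_i + N_i'$ with $N_i'$ the current number of elapsed steps, and a short computation with the per-round weights and lengths shows that the partial sum then tends to $+\infty$ and the running mean payoff stays above a fixed positive constant at infinitely many prefixes, winning all five objectives. (If \PTwo never branches off a chain, the play ends with an infinite suffix of weight-$1$ edges, so $\limsup \MP = 1$ and $\limsup \TP = +\infty$, and \POne wins outright.)

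Next I would show that no finite-memory strategy $\strat$ of \POne is winning from $s^1_0$: fix such a $\strat$ with finite memory-state set $M$ and pass to the product arena, where $\strat$ becomes memoryless. \POne's behaviour inside a $t^i$-block is then completely determined by the memory state she is in upon entering $t^i_0$, so in each round $i$ there are at most $\card{M}$ possible responses available to her -- branch-down depths, or ``never branch down'', which yields an infinite losing $-1$-suffix -- and which one she uses depends on \PTwo's earlier choices only through which state is reached. Since ``never'' is losing, every actual response is finite, so for each round there is a finite bound on \POne's response; \PTwo can therefore commit, round after round, to a number strictly exceeding every response available in that round, which prevents \POne from ever catching up and contradicts that $\strat$ is winning. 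As $s^1_0$ is in \POne's winning region but admits no finite-memory winning strategy, finite memory does not suffice for any of the five objectives over finitely branching arenas.

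The step I expect to be the crux is making the ``\PTwo out-bids'' argument actually defeat \POne for these \emph{limsup} objectives. The danger is that, because \PTwo's chains carry weight $+1$, a large commitment by \PTwo transiently pushes the partial sum \emph{up}, and because \POne's matching commitments cost only linearly many steps, \POne might push the $\limsup$ above the threshold by playing a large \emph{fixed} sequence of numbers without ever inspecting \PTwo's moves; so the arena (the precise weights on the chains, and how aggressively \PTwo is made to out-bid) must be arranged so that this transient gain is always overwhelmed by the elapsed length, which is precisely the point where our construction differs from the naive match-the-number gadget and from the liminf total-payoff setting of \cref{thm:insufficientTPinf}. On \POne's side, the only non-routine verification is for $\MPsupg$: merely matching, or matching plus a constant, keeps $\limsup \MP$ only $\ge 0$, so she must over-bid by an amount scaling with the elapsed length, and one must check that this keeps the running mean a fixed positive fraction of the length infinitely often.
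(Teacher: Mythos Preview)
Your plan has a genuine gap, and it is exactly the one you flag as ``the crux'' but do not actually close. Because $\arena_2$ is acyclic, the vertices $t^i_j$ are fresh each round, so even a \emph{memoryless} \POne strategy may choose a round-dependent branch-down depth; your per-round bound $K_i$ on \POne's response need not be uniform in~$i$. Concretely, take the memoryless strategy that branches down at depth $k_i = 2^i$ in round~$i$. Out-bidding then forces $j_i > 2^i$, so the transient peak during \PTwo's ascent is $S_i + j_i > S_i + 2^i$, while your argument only secures $S_{i+1} \le S_i - 1$; hence the peaks diverge and \POne wins $\TPsupinf$ (and therefore $\TPsupg$, $\TPsupge$, and, since infinitely many prefixes then have positive sum, also $\MPsupge$). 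Bidding ``more aggressively'' only raises the peak further. In fact no \PTwo strategy defeats this memoryless $k_i = 2^i$ for $\TPsupinf$ in $\arena_2$: assuming $S_i + j_i < C$ and $S_{i+1} = S_i + 2^i - j_i < C$ for all large $i$ gives simultaneously $S_i < C - 2^{i-1}$ and $S_{i+1}/2^{i+1} > S_i/2^i + \tfrac12 - C/2^{i+1}$, so $S_i/2^i \to +\infty$, contradicting the first bound. Thus $\arena_2$ simply does not witness the lemma for four of the five objectives.

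The repair is to drop acyclicity: take the cyclic finitely branching variant in which every round returns to the \emph{same} vertex $t_0$ (one shared chain $t_0,t_1,\ldots$, with branch-down from $t_j$ leading back to $s_0$, and symmetrically for \PTwo's chain). A finite-memory \POne strategy then admits at most $\lvert M\rvert$ branch-down depths \emph{across all rounds}, bounded by a single $K$ independent of the round. \PTwo can now play the constant $j_i = K+1$: the running sum drops by at least $1$ per round while the transient peak $S_i + (K+1)$ tends to $-\infty$, and since round lengths are bounded the mean is eventually below a negative constant; so all five $\limsup$ objectives fail for \POne. The paper itself gives only a sketch here and points to $\arena_2$; your proposal follows that hint faithfully, but as written the hint has the same defect for the total-payoff variants, and a cyclic arena (or a substantially different argument) is what is actually needed.
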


Notice that although finite memory is insufficient for \POne in $\arena_2$, a step counter allows her to deduce an upper bound on the previous choice of \PTwo and is therefore sufficient.
Indeed, since $\arena_2$ is finitely branching and every round starts in a unique initial vertex for that round, \POne can (over) estimate that all steps of the history so far were spent by her opponent's choice (steps between $s^i_0$ up to some $s^i_j$ and then leading directly to $t^{i+1}_0$).

In order to construct an arena in which no step-counter strategy is sufficient, we obfuscate possible histories leading to \POne's choices by making them the same length (see \cref{fig:LB-2}).

\begin{lemma} \label{lem:noSCforTP}
    Consider the arena $\arena_3$ depicted in \cref{fig:LB-2}.
    \POne has a winning strategy, but no winning step-counter strategy for objectives $\TPinfg$, $\TPinfge$, $\TPsupg$, and $\TPsupge$.
    Hence, over finitely branching arenas, step-counter strategies are not sufficient for \POne for objectives $\TPinfg$, $\TPinfge$, $\TPsupg$, and $\TPsupge$.
\end{lemma}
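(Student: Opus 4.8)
The plan is to work entirely inside the concrete arena $\arena_3$ of \cref{fig:LB-2}. Two claims are needed: that \POne wins from $s_0$, and that no step-counter strategy of \POne wins from $s_0$. Since every vertex of $\arena_3$ has out-degree at most two, $\arena_3$ is finitely branching, so the final sentence of the statement is then immediate.

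For the positive part, I would first record the relevant bookkeeping. If a play from $s_0$ descends from the $s$-chain at vertex $s_j$, it has accumulated total payoff $+j$ on the $s$-chain (that is $j$ edges of weight $1$) and $-(2j+1)$ on the length-$(2j+1)$ descent with weights $-1$, so it arrives at $t_j$ with running total $-(j+1)$, which is unchanged while it climbs the zero-weight $t$-chain. The winning strategy for \POne is: continue along the $t$-chain to $t_{j+2}$ and then take the escape edge of weight $j+2$ into $r_0$; the running total becomes $-(j+1)+(j+2)=1$ and stays there forever along the self-loop at $r_0$. Hence every consistent play either stays on the $s$-chain forever (partial sums $0,1,2,\dots\to+\infty$) or has partial sums eventually constant equal to $1$; in both cases the $\liminf$ and $\limsup$ of the partial sums are strictly positive, so this strategy --- which needs only a little finite memory, to count ``two more $t$-steps'' after the descent --- is winning for all four objectives $\TPinfg$, $\TPinfge$, $\TPsupg$, $\TPsupge$.

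For the lower bound, the key structural feature of $\arena_3$ --- the ``obfuscation'' mentioned before the statement --- is that every history from $s_0$ to a vertex $t_i$ has the same length: walking to $s_j$ with $j\le i$ costs $j$ steps, descending to $t_j$ costs $2j+1$ steps, and climbing the $t$-chain from $t_j$ to $t_i$ costs $3(i-j)$ steps, for a total of $3i+1$ independent of $j$. The internal vertices of the expanded paths and $r_0$ all have out-degree one, so the only vertices at which \POne ever chooses are the $t_i$, and there the step count is forced to $3i+1$; consequently any step-counter strategy $\strat$ of \POne reduces to a single fixed decision at each $t_i$ --- ``escape to $r_0$'' or ``continue to $t_{i+1}$''. (Alternatively one could route the escape edge out of $t_i$ to a private looping vertex $r_i$ so that $\arena_3$ literally encodes the step count from $s_0$ and invoke \cref{lem:encodingTheStep}.) Let $I\subseteq\IN$ be the set of indices at which $\strat$ escapes. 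If $I=\emptyset$, let \PTwo descend from $s_0$ straight to $t_0$ (running total $-1$); \POne never escapes, walks the zero-weight $t$-chain forever, and the partial sums stay at $-1$. If $I\ne\emptyset$, let $i^*=\min I$, let \PTwo walk to $s_{i^*}$ and descend to $t_{i^*}$ (running total $-(i^*+1)$); \POne escapes along the weight-$i^*$ edge and loops, so the partial sums stay at $-(i^*+1)+i^*=-1$. In both cases $\strat$ admits a play from $s_0$ whose partial sums are eventually constantly $-1<0$, so $\strat$ loses all four objectives.

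I expect the only delicate point to be exactly this length computation, together with the observation it powers --- namely that making every history into $t_i$ the same length renders the step counter useless precisely at \POne's decision vertices. Once this is set up (which requires introducing the internal vertices of the ``squished'' paths explicitly and checking the length and running-total arithmetic), the rest is the standard ``you must match the number, but only after I reveal it'' argument in favour of \PTwo, and I anticipate no conceptual obstacle beyond the arena design itself.
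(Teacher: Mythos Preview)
Your proposal is correct and follows essentially the same approach as the paper's proof: the winning strategy is to delay twice along the $t$-chain and then exit (your ``two more $t$-steps'' is exactly the paper's ``delay twice and then exit''), and the lower bound hinges on the same observation that every history from $s_0$ to $t_i$ has the same length, so a step-counter strategy fixes a single choice at each $t_i$, which \PTwo defeats by descending at the first $s_{i^*}$ where \POne exits (or at $s_0$ if she never does). Your write-up is more explicit about the length and payoff arithmetic than the paper's, but the argument is the same.
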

\begin{proof}
    \POne only makes relevant choices at vertices $t_i$, and the choice is whether to \emph{delay} (move to $t_{i+1}$) or \emph{exit} (move to $r_0$).
    A winning (finite-memory) strategy for all mentioned objectives is to delay twice and then exit.
    Indeed, any history leading to $t_i$ has total payoff of at least $-i-1$. By delaying twice and then exiting, \POne guarantees that the sink vertex $r_0$ is reached and the total payoff collected on the way is at least $1$.

    Conversely, any strategy $\sigma$ of \POne that is based solely on a step counter cannot distinguish histories
    leading to the same vertex $t_i$.
    Let us assume that $\sigma$ does not choose to avoid $r_0$ indefinitely, as doing so would result in a negative total payoff, which is losing for her. Then there is at least one vertex $t_i$ from which the strategy exits. \PTwo can exploit this by going there via $s_i$.
    The resulting play has a negative total payoff.
\end{proof}

We now extend the previous examples to show that even access to both a step counter and finite memory is not sufficient for \POne. The construction below is stated for the total-payoff objective $\TPinfge$, and also works for $\TPinfg$.
The main idea is to require \POne to delay going to $r_0$ more than a constant number of times, as dictated by \PTwo's initial move.

\begin{figure}%
\centering

\floatsetup{valign=t, heightadjust=all}

\begin{floatrow}
\ffigbox{
\adjustbox{width=0.99\linewidth}{
        \begin{tikzpicture}[node distance=1.5cm and 1.5cm, on grid]

\node [Min,initial,initial where=left] (s0) at (0,0) {$s_0$};
\node [Min,right=of s0] (s1) {$s_1$};
\node [Min,right=of s1] (s2) {$s_2$};
\node [Min,right=1.9cm of s2] (s3) {$s_i$};
\node [right=1cm of s3] (s34) {$\cdots$};

\node [Max, below=of s0] (t0) {$t_0$};
\node [below right=.6cm and .9cm of t0] (t0dots) {$\cdots$};
\node [Max,right=of t0] (t1) {$t_1$};
\node [below right=.6cm and .9cm of t1] (t1dots) {$\cdots$};
\node [Max,right=of t1] (t2) {$t_2$};
\node [below right=.6cm and .9cm of t2] (t2dots) {$\cdots$};
\node [Max,right=1.9cm of t2] (t3) {$t_i$};
\node [right=1cm of t3] (t34) {$\cdots$};

\node [Max, below=1.75cm of t0] (r0) {$r_0$};
\coordinate[below=1cm of t0] (joinhere);

\draw (s0) edge node[label] {$0$} (s1);
\draw (s1) edge node[label] {$0$} (s2);
\draw (s2) edge[dotted] node[label] {$0$} (s3);

\def\dist{-1}
\draw (s0) edge[squish] node[label] {$-2$} (t0);
\draw (s1) edge[squish] node[label] {$-4$} (t1); 
\draw (s2) edge[squish] node[label] {$-6$} (t2); 
\draw (s3) edge[squish] node[label] {$-2(i+1)$} (t3); 

\draw (t0) edge (t0dots);
\draw (t1) edge (t1dots);
\draw (t2) edge (t2dots);

\draw (t0) to node[label] {$1$} +(0,\dist) edge (r0);
\draw (t1) to node[label] {$2$} +(0,\dist) ++(0,\dist) -> (joinhere); 
 \draw (t2) to node[label] {$3$} +(0,\dist) ++(0,\dist) -> (joinhere); 
\draw[dotted] (t3) to node[label] {$i+1$} +(0,\dist) ++(0,\dist) -> (joinhere); 

\draw (r0) edge[loop right] node[label] {$0$} (r0);
\end{tikzpicture}
    }
}{\caption{The arena $\arena_4$.
        Arrows
        \inEdge{$s_i$}{$-2(i\!+\!1)$}{$t_{i}$}{squish} are shorthand for paths of length $2i+3$ with total payoff $-2(i+1)$.
        From a vertex $t_i$, \POne either exits to $r_0$ or moves to the gadget in \cref{fig:LB-3-delay}.
        \label{fig:LB-4}
        }
    }
\ffigbox{
\scalebox{0.9}{
        \tikzset{Max/.append style={minimum size=20,inner sep=0}}

\begin{tikzpicture}[node distance=1cm and 1cm]
\node [Max] (t) at (0,0) {$t_i$};
\node [Max,right=0.95cm of t] (t1) {$t_{i+1}$};
\node [Min,below=of t1] (t1') {$t_{i}^{1}$};
\node [Max,right=0.95cm of t1] (t2) {$t_{i+2}$};
\node [Min,below=of t2] (t2') {$t_{i}^{2}$};
\node [Max,right=1.25cm of t2] (tn) {$t_{i+j}$};
\node [Min,below=of tn] (tn') {$t_{i}^{j}$};
\node [right=0.35cm of tn] (t34) {$\cdots$};
\node [Max,draw=none,below=of t34] (t34') {$\cdots$};
\node [Min, below=2.5cm of t] (r0) {$r_0$};

\draw (t) edge[squish] node[label] {$i+1$} (r0);
\draw (t) edge node[label] {$1$} (t1');
\draw (t1') edge[squish] node[label,swap] {$-1$} (t1);
\draw (t2') edge[squish] node[label,swap] {$-3$} (t2);
\draw (tn') edge[squish] node[label,swap] {$-2j+1$} (tn);

\draw (t1') edge node[label] {$1$} (t2');
\draw (t2') edge[dotted] node[label] {$1$} (tn');

\end{tikzpicture}
    }    }{
 \caption{The delay gadget from vertex $t_i$ in arena $\arena_4$.
        The arrows from $t_i^j$ to $t_{i+j}$ are shorthand for paths of length
        $2j$ and payoff $-2j+1$.}
\label{fig:LB-3-delay}
}
\end{floatrow}
\end{figure}%

\begin{definition}
Let $\arena_4$ be the arena from \cref{fig:LB-4}.
It has a similar high-level structure to $\arena_3$ with different weights, and with more complex gadgets (\cref{fig:LB-3-delay}) between vertices $t_i$.
At each vertex $t_i$, \POne decides between two actions:
\begin{enumerate}
    \item to \emph{exit} to $r_0$ and gain payoff $i+1$ by doing so, or
    \item to \emph{delay} to some vertex $t_{i+j}$ where $j>0$ is chosen by \PTwo, and gain payoff $-j+1$.
\end{enumerate}
\end{definition}

Notice that, after \PTwo moved down from vertex $s_k$, \POne can (only) win by delaying at least $k + 1$ times (which we show in \cref{lem:lb-2}).
We will show that the gadgets allow \PTwo to confuse any strategy of \POne that is only based on a step counter and finite memory.
Without them, the current vertex $t_i$ together with finite extra memory would allow \POne to approximate how many delays she has chosen so far and therefore allow her to win with a finite-memory strategy.\footnote{The idea would be to partition $t_i$'s into (growing) intervals, so that each interval is picked so large that it is safe to exit from any vertex after the interval if the play entered a vertex before or at the start of that interval. A winning strategy is then to keep on delaying to $t_i$'s until vertices in three different intervals have been seen, and then exit. This requires $3$ memory states to remember the interval changes.}

A simple counting argument shows that all paths from $s_0$ to a vertex $t_k$ have the same length (proof in \cref{app:lowerBounds}).
By \cref{lem:encodingTheStep}, it implies that a step counter is useless in~$\arena_4$.%

\begin{restatable}{lemma}{sameLengthGFour} \label{lem:sameLengthG4}
For every $t_{k}$ in arena $\arena_4$,
all paths from $s_0$ to $t_k$ have the same length.
\end{restatable}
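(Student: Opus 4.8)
The plan is to enumerate all paths from $s_0$ to $t_k$ in $\arena_4$ and check that each has length exactly $3k+3$. First I would read off the in-edges of the "index" vertices $t_\ell$: the only way to enter $t_\ell$ is either via the path $s_\ell \to t_\ell$ (which, by the caption of \cref{fig:LB-4}, has length $2\ell+3$), or, for some $i < \ell$, via the final edge of the delay gadget attached to $t_i$ that leads to $t_{i+(\ell-i)} = t_\ell$. Since the $s$-chain, the $t$-indices, and the gadget transitions from $t_i$ to $t_{i+j}$ (with $j\ge 1$) all move strictly forward, the portion of $\arena_4$ reachable before $r_0$ is acyclic, so every path $\play$ from $s_0$ to $t_k$ factors uniquely into three consecutive phases: a prefix $s_0 \to \cdots \to s_m$ along the $s$-chain for some $m$ with $0 \le m \le k$ (necessarily $m \le k$, since from $t_m$ only vertices $t_{m'}$ with $m' \ge m$ are reachable); then the path $s_m \to t_m$; then a possibly empty sequence of delays $t_m \to t_{m+j_1} \to \cdots \to t_k$ with each $j_\ell \ge 1$ and $\sum_\ell j_\ell = k - m$.

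Next I would count the phases. The $s$-chain prefix contributes $m$ edges, and the path $s_m \to t_m$ contributes $2m+3$ edges. For a single delay from $t_i$ to $t_{i+j}$ inside the gadget of \cref{fig:LB-3-delay}, $\play$ takes the edge $t_i \to t_i^1$, then $j-1$ unit-weight edges along $t_i^1 \to t_i^2 \to \cdots \to t_i^j$, and finally the path $t_i^j \to t_{i+j}$ of length $2j$, for a total of $1+(j-1)+2j = 3j$ edges, independently of $i$. Hence all delays together contribute $3\sum_\ell j_\ell = 3(k-m)$ edges, and summing the three phases gives
\[
    m + (2m+3) + 3(k-m) = 3k+3 ,
\]
which is independent of $m$ and of the particular delay choices. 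This is the desired equality.

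The computation itself is routine; the only point that needs a little care is making the case analysis genuinely exhaustive, i.e.\ verifying that no path can reach $t_k$ by a route outside the three-phase decomposition. This is exactly what the structural observation on the in-edges of the $t_\ell$'s and on forward-monotonicity in the first step delivers, so I do not expect any real obstacle here. (As indicated in the text, combining this lemma with \cref{lem:encodingTheStep} then yields that a step counter provides no information from $s_0$ in $\arena_4$.)
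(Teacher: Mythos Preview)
Your proof is correct and arrives at the same value $3k+3=3(k+1)$ as the paper, using the same edge counts for the $s$-chain, the $s_m\to t_m$ drop, and the delay gadgets. The only difference is presentational: the paper argues by induction on $k$, peeling off the last delay gadget and invoking the induction hypothesis on the prefix ending at $t_j$, whereas you unroll this into a direct three-phase decomposition summing all delays at once; the arithmetic is identical.
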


The following lemma will be used to argue that \POne wins, albeit with infinite memory.%
\begin{lemma}
\label{lem:lb-2}
    From a vertex $t_i$, if \PTwo does not stay forever in a gadget, the strategy $\strat_k$ of \POne that enters the delay gadget exactly $k\in\IN$ times achieves a total payoff of exactly $i+k+1$ in $r_0$.
\end{lemma}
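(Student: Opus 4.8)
The plan is to track the total payoff accumulated along the play as \POne repeatedly enters and exits the delay gadgets, and show it telescopes to $i+k+1$. I would set up the following bookkeeping: starting at vertex $t_i$, each time \POne chooses to \emph{delay} rather than exit, \PTwo selects some $j>0$ and the play moves from $t_i$ to $t_{i+j}$, collecting total payoff $-j+1$ on the way (this is exactly the payoff figure attached to the ``delay'' action in the definition of $\arena_4$, realised in the gadget of \cref{fig:LB-3-delay} via the edge of weight $1$ from $t_i$ to $t_i^j$ followed by the shorthand path of payoff $-2j+1$ from $t_i^j$ to $t_{i+j}$, summing to $-2j+2 = 2(-j+1)$; I would double-check the exact constants against the figure, but the essential point is that the payoff of one delay step is an affine function of $j$ whose $j$-dependent part cancels the index increase). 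When \POne finally chooses to \emph{exit} from the current vertex $t_\ell$, she collects payoff $\ell+1$ and the play reaches the sink $r_0$ (of weight $0$), so the total payoff is fixed forever after.

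The key computation is the telescoping sum. Suppose $\strat_k$ enters the gadget $k$ times, with \PTwo choosing $j_1, j_2, \ldots, j_k$ at the successive delays, so the vertices visited are
\[
t_i = t_{\ell_0} \to t_{\ell_1} \to \cdots \to t_{\ell_k}, \qquad \ell_m = i + \sum_{n \le m} j_n,
\]
and then \POne exits from $t_{\ell_k}$. The accumulated payoff is
\[
\sum_{m=1}^{k} \bigl(\text{payoff of the $m$-th delay}\bigr) \;+\; (\ell_k + 1).
\]
Here I use the assumption that \PTwo does not stay forever in a gadget: this guarantees that each of the $k$ delay phases terminates at the intended vertex $t_{\ell_m}$ (rather than looping inside $t_i^1 \to t_i^2 \to \cdots$ indefinitely), so the shorthand payoffs of \cref{fig:LB-3-delay} actually apply. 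Writing each delay payoff as (normalising to the definition's figures) contributing $-j_m + 1$ in the aggregate and substituting $\ell_k = i + \sum_m j_m$, the $\sum_m j_m$ terms cancel and one is left with $i + k + 1$, independent of \PTwo's choices $j_1,\ldots,j_k$.

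The main obstacle I expect is purely a matter of reconciling the two descriptions of the gadget: the high-level ``gain payoff $-j+1$'' in the definition versus the concrete edge weights in \cref{fig:LB-3-delay} (the $+1$ edge into $t_i^j$, the chain of $+1$ edges among the $t_i^j$'s, and the weight-$(-2j+1)$ shorthand path out). I would verify once and for all that the sum of weights along the unique traversal from $t_i$ to $t_{i+j}$ through $t_i^j$ equals the claimed per-delay payoff, and similarly that the exit path from $t_i$ to $r_0$ has total weight $i+1$; after that, the lemma is the one-line telescoping identity above, and the only subtlety is invoking the ``does not stay forever in a gadget'' hypothesis to rule out the degenerate infinite play inside the $t_i^j$-chain.
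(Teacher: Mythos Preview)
Your proposal is correct and follows essentially the same approach as the paper: sum the per-delay payoffs $-j_m+1$, add the exit payoff $\ell_k+1$, and observe that the $\sum_m j_m$ terms cancel to leave $i+k+1$. Your only wobble is in reading the gadget (the path from $t_i$ to $t_i^j$ is a chain of $j$ edges of weight $1$, not a single edge, so the per-delay total is $j+(-2j+1)=-j+1$ as stated), but you explicitly flag this for verification and the telescoping argument is unaffected.
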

\begin{proof}
    Assume that \PTwo never stays forever in a gadget (which would be winning for \POne for all quantitative objectives considered).
    The total payoff on the path from $t_i$ to the next vertex $t_{i+j}$ is $-j+1$.
    Suppose \POne delays $k$ times and let $j(1),j(2),\ldots, j(k)$
    be the lengths of the intermediate paths through gadgets, as chosen by \PTwo. That is, the play ends up in vertex $t_{i+l}$
    for $l = \sum_{c=1}^{k} j(c)$
    and has gained payoff $\sum_{c=1}^{k} \left((-j(c)+1) \right) = -l + k$.
    After $k$ delays, exiting to $r_0$ from vertex $t_{i+l}$ gives an immediate payoff of $i+l+1$.
    The total payoff from $t_i$ to $r_0$ is thus $(-l + k) + (i + l + 1) = i + k + 1$.
\end{proof}

\begin{lemma} \label{lem:insufficientTPinf}
    Consider the game played on $\arena_4$.
    Then, from vertex $s_0$,
    \begin{enumerate}
        \item \POne wins for objective $\TPinfge$; \label{enum:LB1}
        \item every step-counter and finite-memory strategy of \POne is losing for $\TPinfge$. \label{enum:LB2}
    \end{enumerate}
\end{lemma}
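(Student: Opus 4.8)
I would treat the two parts separately, with the Ramsey machinery confined to the second.

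\textbf{First part: \POne wins $\TPinfge$ from $s_0$.} I would give an explicit infinite-memory strategy: \POne records the index $k$ of the vertex $s_k$ from which \PTwo left the $s$-chain, together with how many times she has already delayed, and then delays exactly $k+1$ times before exiting to $r_0$. There are three cases. If \PTwo never leaves the $s$-chain, the play sees only weight $0$, so $\underline{\TP}=0$. If \PTwo eventually stays forever inside one delay gadget, the total payoff tends to $+\infty$. Otherwise, \cref{lem:lb-2} applied at $t_k$ with $k+1$ delays gives total payoff $k+(k+1)+1=2k+2$ on reaching $r_0$; adding the payoff $-2(k+1)$ collected along the unique path from $s_0$ to $t_k$ leaves total payoff $0$ at $r_0$, after which the play loops with weight $0$, so again $\underline{\TP}=0$. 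In all cases $\underline{\TP}\ge 0$.

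\textbf{Second part: no step-counter and finite-memory strategy wins.} Assume for contradiction that $\sigma$ is a winning step-counter and finite-memory strategy from $s_0$ with memory set $M$, $\lvert M\rvert=K$. By \cref{lem:sameLengthG4}, every vertex of $\arena_4$ other than the absorbing sink $r_0$ (where \POne is forced) is reached from $s_0$ only by histories of one fixed length; so, exactly as in \cref{lem:encodingTheStep}, the step counter carries no information and $\sigma$ is described by three functions: $g\colon\IN\times M\to\{\mathsf{exit},\mathsf{delay}\}$, recording \POne's decision at $t_i$ in memory state $m$; $\mu\colon\IN\to M$, the memory state on first reaching $t_k$ directly from $s_k$; and $\nu\colon\IN\times M\times\IN\to M$, the memory state at $t_{i+j}$ after \POne delays at $(t_i,m)$ and \PTwo steers the gadget to land on $t_{i+j}$ (each of these three histories being uniquely determined by its parameters). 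From the payoff accounting of \cref{lem:lb-2} --- after \PTwo drops from $s_k$, exiting with $\ell$ delays yields total payoff $\ell-k-1$ at $r_0$, while delaying forever keeps every visited $t$-vertex at total payoff at most $-2(k+1)$ --- together with $\sigma$ being winning, one extracts: (O1) $g(k,\mu(k))=\mathsf{delay}$ for all $k$; (O2) along every play consistent with $\sigma$, if \PTwo does not eventually stay forever inside one gadget then \POne delays only finitely often; and (O3) after \PTwo drops from $s_k$, whenever \POne exits she has delayed at least $k+1$ times.

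The heart of the argument is the Ramsey step. Colour each pair $i<i'$ of naturals by $\bigl(g(i,\cdot),\,\nu(i,\cdot,i'-i)\bigr)\in\{\mathsf{exit},\mathsf{delay}\}^M\times M^M$, a finite palette; by the infinite Ramsey theorem there is an infinite set $H=\{h_0<h_1<\cdots\}$ and a colour $(P,\phi)$ with $g(h_a,\cdot)=P$ for all $a$ and $\nu(h_a,\cdot,h_{a'}-h_a)=\phi$ for all $a<a'$. Hence, from any $(t_{h_a},m)$ at which \POne delays, \PTwo can move to $(t_{h_{a+1}},\phi(m))$, and so force the play through $(t_{h_a},m),(t_{h_{a+1}},\phi(m)),(t_{h_{a+2}},\phi^2(m)),\dots$ for as long as $\sigma$ keeps delaying. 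A nested pigeonhole (first on the landing index for each fixed drop point, then on drop points) gives a memory state $m^*\in M$ and an infinite set $K^*\subseteq\IN$ such that, for every $k\in K^*$, \PTwo can --- after dropping from $s_k$, where $\sigma$ delays by (O1) --- use a single delay to reach $(t_{h_a},m^*)$ for some $a$ with $h_a>k$. Since $\lvert\{\phi^n(m^*)\mid n\ge 0\}\rvert\le K$ (so all distinct values of $\phi^n(m^*)$ already occur for $n<K$), the sequence $P(m^*),P(\phi(m^*)),P(\phi^2(m^*)),\dots$ is either $\mathsf{delay}$ at every index --- then, for any $k\in K^*$, \PTwo loops the play through $H$ forever, \POne delays infinitely often, and \PTwo never gets stuck in a gadget, contradicting (O2) --- or it equals $\mathsf{exit}$ at some least index $b<K$ --- then, for $k\in K^*$ with $k\ge K$, \PTwo steers the play from $(t_{h_a},m^*)$ through $H$ until $\sigma$ exits at $(t_{h_{a+b}},\phi^b(m^*))$ after only $b+1\le K\le k$ delays, contradicting (O3). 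Either way $\sigma$ cannot be winning. The step I expect to be the main obstacle is precisely this bookkeeping: $\nu(i,\cdot,j)$ genuinely depends on $i$, so only the colouring of \emph{pairs} makes the memory transition along $H$ uniform, and one must simultaneously arrange that a single long delay lets \PTwo enter $H$ in a prescribed memory state --- the role of the nested pigeonhole.
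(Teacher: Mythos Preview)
Your argument is correct and shares the paper's core idea: reduce the step counter away via \cref{lem:sameLengthG4,lem:encodingTheStep}, then colour pairs $i<i'$ by the pair (decision function at $t_i$, memory-update along the gadget $t_i\to t_{i'}$) and apply the infinite Ramsey theorem to obtain a monochromatic set $H$ on which both the decision map $P$ and the update $\phi$ are uniform.

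Where you diverge is in the endgame. You add a nested pigeonhole to manufacture a fixed entry state $m^*$ reachable from infinitely many drop points $k\in K^*$ after exactly one delay, and then analyse the orbit $m^*,\phi(m^*),\phi^2(m^*),\ldots$ once and for all. This is sound, but unnecessary. The paper simply lets \PTwo drop at $s_{\ell(0)}$ for the first element $\ell(0)\ge K$ of the monochromatic set and then always steer to the next element of $H$. Whatever memory state $m_0$ arises at $t_{\ell(0)}$, the subsequent states are $m_0,\phi(m_0),\phi^2(m_0),\ldots$; either \POne exits within the first $K$ delays (payoff $-\ell(0)+j-1<0$ by \cref{lem:lb-2}), or among $m_0,\ldots,m_K$ two coincide, forcing an eventual $\phi$-cycle consisting entirely of delay decisions, and the play has $\underline{\TP}=-\infty$. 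So a single drop point suffices; your (O1)--(O3) bookkeeping and the second pigeonhole layer can be dropped. What your version buys is that the orbit analysis is decoupled from the choice of $k$, which is tidy but not needed here.
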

\begin{proof}
    For point \eqref{enum:LB1}, let $\sigma$ be the \POne strategy that, upon observing history $s_0\hops s_k\to t_k$, switches to the finite-memory strategy $\sigma_{k+1}$ from the previous lemma (delay $k+1$ times and then exit). Consider any play consistent with this strategy $\sigma$. Either \PTwo never moves to a vertex $t_k$, and then the total payoff is $0$, which is winning for \POne for $\TPinfge$. Otherwise, a vertex~$t_k$ is reached (and accordingly, the payoff until reaching it is $-2(k+1)$).
    Using $\sigma_{k+1}$, \POne guarantees a $\liminf$ total payoff of at least $0$ on any continuation: either \PTwo never leaves some gadget and the total payoff is $+\infty$, or \POne exits to $r_0$ after $k+1$ delays, which adds $k + (k + 1) + 1 = 2(k+1)$ to the total payoff by \cref{lem:lb-2}.
    In this second case, the total payoff is therefore $-2(k+1) + 2(k+1) = 0$.

    For point \eqref{enum:LB2}, by \cref{lem:sameLengthG4,lem:encodingTheStep}, it suffices to show that every finite-memory strategy of \POne is losing.
    Consider now any such strategy $\sigma_1$ of \POne with memory of size $K\in\N$ and memory update function $\memUpd$. We will show that there exists a strategy $\strat_2$ for \PTwo that is winning against $\sigma_1$.
    \PTwo's strategy is determined by 1) the initial choice of $t_j$ it visits and 2) which vertex $t_{i+j}$ to select in the gadgets (\cref{fig:LB-3-delay}) when \POne delays from vertex $t_i$. We show the existence of suitable choices by employing an argument based on the infinite Ramsey theorem, as follows.

    First, $\memUpd$ defines naturally, for any history $h\in \edges^*$,
    a function $\memUpd_h\colon \memStates \to \memStates$ that specifies how the memory is updated when observing this history (formally, $\memUpd_h(\memState) = \memUpd^*(\memState, h)$).
    Further, for every $i\ge 0$ there is a function $f_i\colon \memStates \to \{0,1\}$
    that describes for which memory states the strategy $\sigma_1$ chooses to delay or exit from $t_i$ (formally, $f_i(\memState)$ equals $1$ if $\strat_1(t_i, \memState) = (t_i, i+1, r_0)$, and $0$ otherwise).
    Since $\card{\memStates}=K\in\N$, there are only finitely many distinct such functions $f_i$ and $\memUpd_h$.
    Consider now the edge-labelled graph $G$ consisting of all vertices $t_i, i\ge0$, and where for any two $i,j\in\N$, the edge between $t_i$ and $t_{i+j}$ is labelled by the pair $(f_i,\memUpd_h)$ where $h=t_i\to t_i^1\to\cdots\to t_i^j\to t_{i+j}$ is the history through the delay gadget in $\arena_4$.

    Recall the infinite Ramsey theorem: If one labels all edges of the complete (undirected and countably infinite)
    graph with finitely many colours, then there exists an infinite monochromatic subgraph.
    Applying this to our graph $G$ yields an infinite subgraph, say with vertices $t_{\ell(i)}$ identified by $\ell\colon \N\to\N$, where all edges have the same label.
    W.l.o.g., assume that $\ell(0)\ge K$ and $\ell(i+1)>\ell(i)+1$ for all $i\ge0$.
    Based on this, the strategy $\strat_2$ of \PTwo will 1) initially move to $t_{\ell(0)}$ and 2) whenever \POne chooses to delay from $t_{\ell(i)}$ then \PTwo moves to vertex $t_{\ell(i+1)}$.
    Now consider the play $\play$ consistent with both strategies $\sigma_1$ and $\strat_2$. There are two cases.
    Either along this play \POne chooses to exit from some vertex $t_{\ell(j)}, j<K$, or not.
    If she exits too early (after delaying only $j<K$ times), then the total payoff after exiting is exactly $-2({\ell(0)+1})+ (\ell(0) + j + 1) = -\ell(0) + j - 1 $ by \cref{lem:lb-2}, which is $<0$ as $\ell(0) \ge K > j$. Hence, the play is won by \PTwo.
    Alternatively, if along the play, \POne delays at least $K$ times then, by the pigeonhole principle, there is at least one memory mode that she revisits. More precisely, the play visits vertices $t_{\ell(i)}$ and $t_{\ell(j)}$, $i<j<K$ in the same memory mode.
    Recall that the functions $f_{\ell(i)}$ are all identical for $i\ge 0$.
    It follows that the play will continue visiting vertices $t_{\ell(k)}, k\ge0$ only and never exit to $r_0$.
    Finally, observe that in any delay gadget from a vertex $t_{\ell(i)}$, the path to vertex $t_{\ell(i+1)}$
    has total payoff of $1 - (\ell(i+1) - \ell(i))$.
    Consequently, the infinite play $\play$ that visits all $t_{\ell(i)}$ will be such that $\underline{\TP}(\rho)=-\infty$
    and is losing for \POne for $\TPinfge$.
\end{proof}

\begin{theorem} \label{thm:insufficientTPinf}
    Strategies based on a step counter and finite memory are not sufficient for \POne
    in games with finitely branching arenas and objectives $\TPinfge$ or $\TPinfg$.
\end{theorem}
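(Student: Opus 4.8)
The plan is to derive both cases of the theorem from \cref{lem:insufficientTPinf}. For $\TPinfge$ nothing more is needed: the arena $\arena_4$ is finitely branching (and, unfolding the $0$-weight self-loop at $r_0$ into an infinite $0$-weight ray, even acyclic while staying finitely branching), the first point of \cref{lem:insufficientTPinf} shows that $s_0$ lies in \POne's winning region, and the second point shows that no strategy of \POne based on a step counter and finite memory is winning from $s_0$. This is exactly the assertion that step-counter and finite-memory strategies do not suffice for \POne over finitely branching arenas for $\TPinfge$.

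For $\TPinfg$ I would not reuse $\arena_4$ verbatim, because there \PTwo can keep $\underline{\TP}=0$ forever simply by never leaving the initial chain $s_0\to s_1\to\cdots$; thus $s_0$ is \emph{not} in \POne's winning region for $\TPinfg$ and the insufficiency claim would be vacuous. Instead I would use the arena $\arena_4'$ obtained from $\arena_4$ by prepending a fresh vertex $u$ with a single outgoing edge $u\to s_0$ of weight $1$, taken as new initial vertex. Every play from $u$ is this forced edge followed by a play of $\arena_4$ from $s_0$, so every prefix total payoff, and hence $\underline{\TP}$, is shifted by exactly $+1$ (the single extra initial value does not affect the $\liminf$). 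Since all weights of $\arena_4$ are integers, $\underline{\TP}\ge 1$ in $\arena_4'$ is equivalent to $\underline{\TP}\ge 0$ in $\arena_4$, so this turns the $\TPinfge$ situation on $\arena_4$ into the $\TPinfg$ situation on $\arena_4'$.

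Concretely I would check: (i) $\arena_4'$ is still finitely branching (and acyclic after the same $r_0$-unfolding); (ii) playing, after the forced first move, the infinite-memory winning strategy of \POne from the first point of \cref{lem:insufficientTPinf} now secures $\underline{\TP}\ge 1>0$, so $u$ lies in \POne's winning region for $\TPinfg$; (iii) any step-counter and finite-memory strategy of \POne on $\arena_4'$ behaves, after the forced first move, like one on $\arena_4$ from $s_0$ (with the step count shifted by one, which a step counter absorbs), and \cref{lem:sameLengthG4} still holds for $\arena_4'$ (all paths from $u$ to a given $t_k$ have equal length), so by \cref{lem:encodingTheStep} the step counter is useless and it suffices to beat finite memory. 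The infinite-Ramsey argument of \cref{lem:insufficientTPinf} then applies unchanged and lets \PTwo force either an early exit with total payoff at most $-2$ (the bound $-\ell(0)+j-1$ with $\ell(0)\ge K>j$), or an infinite play with $\underline{\TP}=-\infty$; after the $+1$ shift both yield $\underline{\TP}\le -1<0$, so every such strategy of \POne loses $\TPinfg$. This gives the statement for $\TPinfg$.

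The genuinely hard content — the use of the infinite Ramsey theorem to defeat every finite-memory structure, and the equal-length counting that neutralises the step counter — is already carried out in \cref{lem:sameLengthG4,lem:insufficientTPinf}, so the main thing to be careful about here is the passage from $\TPinfge$ to $\TPinfg$: that the initial chain of $\arena_4$ is the \emph{sole} reason it fails for $\TPinfg$, and that a single prepended unit-weight edge simultaneously repairs \POne's winning strategy and leaves all of \PTwo's spoiling plays strictly below the threshold. Both reduce to the observation that every total payoff at stake is an integer (and, on \PTwo's favourable side, bounded away from $0$), so the strict/non-strict distinction between $\TPinfg$ and $\TPinfge$ collapses under the shift.
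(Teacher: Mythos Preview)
Your proposal is correct and matches the paper's proof: the paper derives $\TPinfge$ directly from \cref{lem:insufficientTPinf}, and for $\TPinfg$ it likewise prepends a single fresh initial vertex (called $s_{-1}$ there) with a weight-$1$ edge to $s_0$, for exactly the reason you identify. Your write-up is more detailed than the paper's one-line justification (you make explicit why the equal-length property, the step-counter neutralisation, and the Ramsey argument all survive the modification, and you track the integer shift to show \PTwo's spoiling plays remain strictly losing), but the approach is the same.
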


\begin{proof}
For $\TPinfge$ this follows directly from \cref{lem:insufficientTPinf}. For $\TPinfg$, just extend the arena by a new initial vertex $s_{-1}$ with sole outgoing edge $s_{-1}\Step{1}{}{} s_0$ to ensure that the play in which \PTwo never moves to a vertex $t_i$ is won by \POne.
\end{proof}

\section{Open objectives} \label{sec:openClosed}
The quantitative objectives defined in \cref{sec:prelim} all belong to the second or third level of the Borel hierarchy, and the strategy complexity of such objectives is not yet well understood.
However, they use as building blocks objectives from the first level of the Borel hierarchy (i.e., open and closed objectives), for which there already exist characterisations of memory requirements.
We recall some of these results for the memory structures that we study.

\subparagraph*{Step-monotonicity}
Let $\objective\subseteq\colours^\omega$ be an objective.
For two finite words $\word_1, \word_2\in\colours^*$, we write $\word_1 \prefixOrd_\objective \word_2$ if for all $\word\in\colours^\omega$, $\word_1\word\in\objective$ implies $\word_2\word\in\objective$ (meaning that the winning continuations of $\word_1$ are included in those of $\word_2$).
The relation $\prefixOrd_\objective$ is a preorder and satisfies that for $\word_1, \word_2\in\colours^*$ and $\clr\in\colours$, $\word_1\prefixOrd_\objective \word_2$ implies $\word_1\clr \prefixOrd_\objective \word_2\clr$ (i.e., it is a ``congruence'').
We write $\word_1 \strictPrefixOrd_\objective \word_2$ if $\word_1 \prefixOrd_\objective \word_2$ but $\word_2 \not\prefixOrd_\objective \word_1$.
We say that two finite words $\word_1, \word_2\in\colours^*$ are \emph{comparable for $\prefixOrd_\objective$} if $\word_1 \prefixOrd_\objective \word_2$ or $\word_2 \prefixOrd_\objective \word_1$.
We extend preorder $\prefixOrd_\objective$ to histories: we write $\history_1 \prefixOrd_\objective \history_2$ if $\col(\history_1) \prefixOrd_\objective \col(\history_2)$.

We say that an objective $\objective$ is \emph{step-monotonic} if for any two finite words $\word_1,\word_2\in\colours^*$ such that $\card{\word_1} = \card{\word_2}$, $\word_1$ and $\word_2$ are comparable for $\prefixOrd_\objective$.
In other words, for any two finite words that are read up to the same state of a step counter, one of the words must include at least the winning continuations of the other word.
This is a specialisation of the \emph{$\memory$-strong-monotony} property~\cite{BFRV23} for the step-counter memory structure $\memory = \SC$.

\begin{example} \label{ex:stepMonotonic}
    Let $\colours = \{a, b\}$.
    The open objective $\objective = aa\colours^\omega\cup bb\colours^\omega$ is \emph{not} step-monotonic, since for $\word_1 = a$ and $\word_2 = b$, we have that $\card{\word_1} = \card{\word_2}$, but $\word_1$ and $\word_2$ are not comparable for $\prefixOrd_\objective$.
    Indeed, $a^\omega$ (resp.\ $b^\omega$) is a winning continuation of $\word_1$ but not $\word_2$ (resp.\ $\word_2$ but not~$\word_1$).%

    Now, let $\colours = \IQ$ and $\step\in\IN$.
    The open objective $\objective_\step = \{\word\in\colours^\omega \mid \exists j\ge\step, \TP(\word_{\le j}) \ge 0\}$ (containing all infinite words whose total payoff goes over $0$ at some point after $\step$ steps) is step-monotonic.
    Indeed, consider two finite words $\word_1, \word_2\in\colours^*$ such that $\card{\word_1} = \card{\word_2}$.
    If $\word_2$ already satisfies $\objective_\step$ (i.e., $\word_2\colours^\omega \subseteq \objective_\step$), then necessarily, $\word_1 \prefixOrd_{\objective_\step} \word_2$.
    Similarly, if $\word_1$ already satisfies $\objective_\step$, then $\word_2 \prefixOrd_{\objective_\step} \word_1$.
    When neither $w_1$ nor $w_2$ already satisfies $\objective_\step$,  they can be compared by their current total payoff: if $\TP(\word_1) \le \TP(\word_2)$, then $\word_1 \prefixOrd_{\objective_\step} \word_2$.
    \lipicsEnd
\end{example}

\begin{remark} \label{rem:stepMonotonic}
    Variations of objective $\objective_\step$ are used as building blocks to define quantitative objectives (as can be seen in the descriptions in \cref{fig:table}), and will be considered again later.
    An important remark is that $\prefixOrd_{\objective_\step}$ is not completely determined by the current total payoff of words.
    For instance, if $\word_1 = -1, 0$ and $\word_2 = 0, -100$, we have $\word_1 \strictPrefixOrd_{\objective_1} \word_2$ even though $\TP(\word_1) > \TP(\word_2)$.
    The reason is that $\word_2$ \emph{already satisfies} $\objective_1$ after~$1$~step, and any continuation is therefore winning, despite the current total payoff being lower.
    \lipicsEnd
\end{remark}

\subparagraph*{Step-counter strategies for open objectives}
In general, the step-monotonicity property is necessary for the uniform sufficiency of step-counter strategies over finitely branching arenas (this is a specialisation of~\cite[Lemma~5.2]{BFRV23} to the step-counter memory structure $\SC$).
However, the results of~\cite{BFRV23} do not yield a characterisation for open objectives in full generality.
For the special case of the step-counter memory structure, we can actually show a converse: for open objectives, step-monotonicity implies that step-counter strategies suffice over finitely branching arenas.
This is what we show over the next three lemmas (proofs in \cref{app:openClosed}).

First, a handy result about open objectives is that in a \emph{finitely branching} arena, any winning strategy already satisfies the objective within a bounded number of steps.

\begin{restatable}{lemma}{konig} \label{lem:konig}
    Let $\objective\subseteq \colours^\omega$ be an open objective, $\arena$ be a finitely branching arena, and $\vertex_0$ be an initial vertex in $\arena$.
    If a strategy $\strat$ is winning from $\vertex_0$ for $\objective$, then there is $\step\in\IN$ such that all histories~$\history$ of length $\ge \step$ consistent with $\strat$ already satisfy $\objective$, i.e., $\col(\history)\colours^\omega \subseteq \objective$.
\end{restatable}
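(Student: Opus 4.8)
The plan is to prove that winning strategies for open objectives on finitely branching arenas succeed uniformly within a bounded number of steps, using König's lemma on the tree of consistent histories.

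First I would fix an open objective $\objective$, a finitely branching arena $\arena$, an initial vertex $\vertex_0$, and a strategy $\strat$ winning from $\vertex_0$. Consider the tree $T$ whose nodes are the histories $\history$ from $\vertex_0$ that are consistent with $\strat$, ordered by the prefix relation; the root is $\emptyHistory_{\vertex_0}$, and the children of a node $\history$ are its one-edge extensions $\history\edge$ that remain consistent with $\strat$. Because $\arena$ is finitely branching, each node has finitely many children (if $\too(\history)\in\vertices_1$ there is exactly one child, namely $\strat(\history)$; if $\too(\history)\in\vertices_2$ there are at most $\card{\{\edge\in\edges\mid\from(\edge)=\too(\history)\}}$ children), so $T$ is finitely branching.

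Next I would prune $T$: let $T'$ be the subtree obtained by removing every history $\history$ that \emph{already satisfies} $\objective$ (i.e.\ $\col(\history)\colours^\omega\subseteq\objective$), together with all its descendants. Formally, keep $\history$ in $T'$ iff no prefix of $\history$ already satisfies $\objective$. I claim $T'$ has no infinite branch. Indeed, an infinite branch of $T'$ would correspond to a play $\play$ from $\vertex_0$ consistent with $\strat$, none of whose prefixes already satisfies $\objective$; since $\objective$ is open, $\col(\play)\in\objective$ would force some finite prefix $\col(\play_{\le j})$ to already satisfy $\objective$ (by definition of openness), a contradiction — unless $\col(\play)\notin\objective$, which is impossible because $\strat$ is winning from $\vertex_0$. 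Hence $T'$ is a finitely branching tree with no infinite branch, so by König's lemma $T'$ is finite. Take $\step$ to be one more than the length of the longest history in $T'$ (or $\step = 0$ if $T'$ is empty, which happens when $\emptyHistory_{\vertex_0}$ already satisfies $\objective$). Then every history $\history$ consistent with $\strat$ of length $\ge\step$ has a prefix not in $T'$, i.e.\ a prefix that already satisfies $\objective$; since ``already satisfying $\objective$'' is upward-closed under extension (if $\word\colours^\omega\subseteq\objective$ then $\word\word'\colours^\omega\subseteq\objective$), $\history$ itself already satisfies $\objective$, which is what we wanted.

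The main obstacle, if any, is purely bookkeeping: one must be careful that the empty histories $\emptyHistory_{\vertex}$ are handled consistently (the root of $T$ is $\emptyHistory_{\vertex_0}$ and has length $0$), and that the notion of "consistent with $\strat$" is correctly restricted so that \POne-controlled nodes have exactly one child. There is no deep difficulty — the argument is the standard compactness argument for open sets, and finite branching is exactly what makes König's lemma applicable; this is also where the hypothesis is genuinely used, since on infinitely branching arenas the bound $\step$ need not exist.
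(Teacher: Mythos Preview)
Your proof is correct and essentially identical to the paper's: both define the tree of $\strat$-consistent histories that do not yet satisfy $\objective$, observe it is finitely branching, argue via openness and the winning assumption that it has no infinite branch, and apply K\H{o}nig's lemma to conclude it is finite. The paper phrases the last step as a proof by contradiction while you argue directly, but the content is the same.
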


Second, the following lemma shows that for step-monotonic objectives, step-counter strategies can be ``locally not worse'' than arbitrary strategies.

\begin{restatable}{lemma}{locallyNotWorse} \label{lem:stepCounterCanBeNotWorse}
    Let $\objective\subseteq\colours^\omega$ be a step-monotonic objective.
    Let $\arena = \arenaFull$ be a finitely branching arena, $\vertex_0\in\vertices$ be an initial vertex, and~$\strat'$ be any strategy of \POne on $\arena$.
    There is a step-counter strategy $\strat$ such that, for every history $\history$ from $\vertex_0$ consistent with $\strat$, there is a history $\history'$ from $\vertex_0$ consistent with $\strat'$ such that $\card{\history'} = \card{\history}$, $\too(\history') = \too(\history)$, and $\history' \prefixOrd_\objective \history$.
\end{restatable}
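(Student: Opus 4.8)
The plan is to construct the step-counter strategy $\strat$ by induction on the step count, attaching to every vertex reachable after $n$ steps a canonical ``smallest'' history of $\strat'$ that reaches it in $n$ steps, and then feeding into $\strat$ exactly the move that $\strat'$ plays after that canonical history. First I would observe that step-monotonicity says precisely that $\prefixOrd_\objective$ is a \emph{total} preorder on the set of words of any fixed length $n$. Since $\arena$ is finitely branching, there are only finitely many histories of length $n$ from $\vertex_0$, hence only finitely many that are moreover consistent with $\strat'$ and end in a prescribed vertex $\vertex$; whenever this finite set is non-empty, totality of $\prefixOrd_\objective$ on length-$n$ words lets me fix a $\prefixOrd_\objective$-minimal element $g(\vertex, n)$ of it. The object $g$ depends only on $\arena$ and $\strat'$, not on the strategy $\strat$ being built, so there is no circularity.

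I would then define the step-counter strategy by $\strat(\vertex, n) := \strat'(g(\vertex, n))$ for $\vertex \in \vertices_1$ whenever $g(\vertex, n)$ is defined (and by an arbitrary outgoing edge otherwise, which is harmless), and prove by induction on $n$ the invariant: for every history $\history$ from $\vertex_0$ of length $n$ consistent with $\strat$ and ending in $\vertex$, $g(\vertex, n)$ is defined and $g(\vertex, n) \prefixOrd_\objective \history$. The base case $n = 0$ is immediate, since $\emptyHistory_{\vertex_0}$ is consistent with both strategies. For the inductive step, write such a length-$(n+1)$ history as $\history_0 e$ with $e = (\vertex, \clr, \vertex')$ and $\history_0$ of length $n$ ending in $\vertex$, so that the induction hypothesis yields $g(\vertex, n) \prefixOrd_\objective \history_0$. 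The key point is that $g(\vertex, n)\,e$ is again consistent with $\strat'$: if $\vertex \in \vertices_2$ this is because a Player~$1$ strategy imposes nothing at Player~$2$ vertices, and if $\vertex \in \vertices_1$ this is because, by consistency of $\history_0 e$ with $\strat$ and by our very choice of $\strat$, the edge $e$ equals $\strat(\vertex, n) = \strat'(g(\vertex, n))$. Hence the candidate set defining $g(\vertex', n+1)$ is non-empty (so $g(\vertex', n+1)$ is defined) and, by minimality, $g(\vertex', n+1) \prefixOrd_\objective g(\vertex, n)\,e$. Combining this with $g(\vertex, n)\,e \prefixOrd_\objective \history_0 e$ — which follows from $g(\vertex, n) \prefixOrd_\objective \history_0$ and the congruence property of $\prefixOrd_\objective$ — and transitivity gives $g(\vertex', n+1) \prefixOrd_\objective \history_0 e$, closing the induction.

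The statement then follows: for any history $\history$ from $\vertex_0$ consistent with $\strat$, of length $n$ and ending in $\vertex$, the history $\history' := g(\vertex, n)$ is consistent with $\strat'$, has the same length and the same endpoint as $\history$, and satisfies $\history' \prefixOrd_\objective \history$. I expect the only delicate part to be the bookkeeping in the inductive step — in particular, being careful that the edge $\strat'(g(\vertex, n))$ that we hard-wire into the step-counter strategy at a Player~$1$ vertex is exactly the move that appears in any $\strat$-consistent continuation out of $\vertex$ after $n$ steps, and noticing that Player~$2$ vertices require no such care because every one-edge extension of a $\strat'$-history is again a $\strat'$-history. The two hypotheses enter in a clean, localised way: finite branching guarantees that the minima $g(\vertex, n)$ exist, and step-monotonicity guarantees that $\prefixOrd_\objective$ is total on equal-length words, so that ``$\prefixOrd_\objective$-minimal'' is meaningful.
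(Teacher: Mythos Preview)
Your proposal is correct and follows essentially the same approach as the paper's own proof: both construct $\strat$ by, at each $(\vertex,n)$, picking a $\prefixOrd_\objective$-minimal $\strat'$-history of length $n$ ending in $\vertex$ (your $g(\vertex,n)$ is the paper's $\history'_{\vertex,\step}$) and letting $\strat$ replay $\strat'$'s move after it, then proving the invariant by induction using congruence and minimality. The only cosmetic difference is that you define $g$ globally upfront and fall back to an arbitrary move when undefined, whereas the paper introduces the minimum inside the inductive step after establishing non-emptiness; both organisations are equivalent.
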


The previous two lemmas imply that step-counter strategies suffice to win for open, step-monotonic objectives.
\begin{corollary} \label{cor:openObjectives}
    Let $\objective\subseteq\colours^\omega$ be an open, step-monotonic objective. Step-counter strategies suffice for $\objective$ over finitely branching arenas.
\end{corollary}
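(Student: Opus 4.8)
The plan is to combine Lemmas~\ref{lem:konig} and~\ref{lem:stepCounterCanBeNotWorse}, which do all the work; the corollary itself is just a chaining argument. Fix a finitely branching arena $\arena = \arenaFull$ and a vertex $\vertex_0$ in the winning region of \POne. Since step-counter strategies only need to suffice (not uniformly), it is enough to produce a single winning step-counter strategy from $\vertex_0$. So I would start from an arbitrary winning strategy $\strat'$ of \POne from $\vertex_0$, which exists by definition of the winning region.

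\textbf{First step.} Apply Lemma~\ref{lem:konig} to the open objective $\objective$, the finitely branching arena $\arena$, the vertex $\vertex_0$, and the strategy $\strat'$: this yields a bound $\step\in\IN$ such that every history $\history'$ from $\vertex_0$ of length at least $\step$ that is consistent with $\strat'$ already satisfies $\objective$, i.e.\ $\col(\history')\colours^\omega\subseteq\objective$.

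\textbf{Second step.} Apply Lemma~\ref{lem:stepCounterCanBeNotWorse} to the step-monotonic objective $\objective$, the arena $\arena$, the vertex $\vertex_0$, and $\strat'$: this yields a step-counter strategy $\strat$ of \POne such that for every history $\history$ from $\vertex_0$ consistent with $\strat$ there is a history $\history'$ from $\vertex_0$ consistent with $\strat'$ with $\card{\history'} = \card{\history}$, $\too(\history') = \too(\history)$, and $\history' \prefixOrd_\objective \history$. Crucially, $\step$ was fixed before $\strat$, but both are derived from the same strategy $\strat'$, so there is no circularity.

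\textbf{Third step (conclusion).} I claim $\strat$ is winning from $\vertex_0$. Let $\play$ be any play from $\vertex_0$ consistent with $\strat$ and let $\history$ be the length-$\step$ prefix of $\play$; it is a history from $\vertex_0$ consistent with $\strat$. Let $\history'$ be the shadowing history provided by the second step: it is consistent with $\strat'$ and has length $\step$, so by the first step $\col(\history')\colours^\omega\subseteq\objective$. Now use $\history'\prefixOrd_\objective\history$, i.e.\ $\col(\history')\prefixOrd_\objective\col(\history)$: by definition of $\prefixOrd_\objective$, for every $\word\in\colours^\omega$, $\col(\history')\word\in\objective$ implies $\col(\history)\word\in\objective$; since the hypothesis always holds, we get $\col(\history)\colours^\omega\subseteq\objective$. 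As $\play$ extends $\history$, its colour sequence lies in $\col(\history)\colours^\omega\subseteq\objective$, so $\play$ is winning. Hence $\strat$ is a winning step-counter strategy from $\vertex_0$, which proves the corollary.

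The only place where care is genuinely needed — and hence the ``main obstacle'', such as it is — is the direction of the preorder in the third step: one must use that $\history'$ (the $\strat'$-history, which is forced to already satisfy the \emph{open} objective once its length reaches the König-type bound $\step$) is $\prefixOrd_\objective$-below $\history$, so that the winning continuations of $\history'$ are contained in those of $\history$, and therefore $\history$ already satisfies $\objective$ as well. Everything else is routine once the two lemmas are in hand.
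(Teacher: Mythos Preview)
Your proof is correct and follows essentially the same approach as the paper: fix an arbitrary winning strategy $\strat'$, apply Lemma~\ref{lem:konig} to get the bound $\step$, apply Lemma~\ref{lem:stepCounterCanBeNotWorse} to get the step-counter strategy $\strat$, and conclude by transferring the ``already satisfies $\objective$'' property from the $\strat'$-history to the $\strat$-history via $\prefixOrd_\objective$. Your explicit unpacking of the preorder direction in the third step is a welcome addition but not a different argument.
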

\begin{proof}

    Let $\arena$ be a finitely branching arena.
    Let $\vertex_0$ be a vertex from the winning region and $\strat'$ be an arbitrary winning strategy from $\vertex_0$.
    By \cref{lem:konig}, using that~$\objective$ is open and $\arena$ is finitely branching, for all histories $\history$ of length $\ge \step$ consistent with $\strat'$, we have $\col(\history)\colours^\omega \subseteq \objective$.

    As $\objective$ is step-monotonic, let $\strat$ be the step-counter strategy provided by \cref{lem:stepCounterCanBeNotWorse}.
    Every history $\history$ of length $\step$ from $\vertex_0$ consistent with $\strat$ is at least as good (for $\prefixOrd_\objective$) as a history $\history'$ of length $\step$ from $\vertex_0$ consistent with $\strat'$.
    Since $\history'$ only has winning continuations, so does $\history$.
    Therefore, strategy $\strat$ is winning from $\vertex_0$.
\end{proof}

\section{Prefix-independent \texorpdfstring{$\PiTwo$}{Pi\_2} objectives} \label{sec:PiTwo}
In this section, we show that step-counter strategies suffice for \POne for objectives $\MPsupge$ and $\TPsupinf$.
In fact, we give a sufficient condition for when step-counter strategies suffice for \POne
in finitely branching games where the objectives are prefix-independent and in $\PiTwo$.

Recall that an objective is in $\PiTwo$ if it can be written as $\bigcap_{m\in\IN} \objective_m$ for some open objectives~$\objective_m$.

\begin{theorem} \label{thm:Pi2StepCounter}
    Let $\objective = \bigcap_{m\in\IN} \objective_m \subseteq \colours^\omega$ be a prefix-independent $\PiTwo$ objective such that the objectives $\objective_m$ are open and step-monotonic.
    Then, step-counter strategies suffice uniformly for $\objective$ over finitely branching arenas.
\end{theorem}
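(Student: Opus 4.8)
The plan is to build a single uniformly winning step-counter strategy for $\objective = \bigcap_{m} \objective_m$ by combining the step-counter strategies that exist for the individual open, step-monotonic sub-objectives $\objective_m$ (via \cref{cor:openObjectives}), arranging them to act one after another on a partition of $\IN$ into consecutive intervals. Concretely, fix a finitely branching arena $\arena$ and let $W = \winning$ be \POne's winning region for $\objective$. Since $\objective$ is prefix-independent, so is winning: from any vertex in $W$, \POne has a winning strategy, and (crucially) a winning strategy must keep the play inside $W$ forever, since reaching $V \setminus W$ would mean the suffix is unwinnable for a prefix-independent objective. The first step is therefore to show that from every $\vertex \in W$ and every $m$, \POne can, within a bounded number of steps, force the play back to $W$ while ``already satisfying'' $\objective_m$ — i.e., realise a finite word $\word$ with $\word\colours^\omega \subseteq \objective_m$. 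This follows because the winning strategy for $\objective$ is in particular winning for the open objective $\objective_m$ (restricted to plays staying in $W$), so \cref{lem:konig} gives a uniform bound $\step_{\vertex,m}$ after which $\objective_m$ is already satisfied; finite branching of $\arena$ and finiteness considerations let us take a bound $N_m$ uniform over all $\vertex \in W$ (here one may need to pass to the sub-arena induced by $W$, or argue that $W$ is itself a valid arena for the restricted game).

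Next I would set up the interval structure. Define $0 = \step_0 < \step_1 < \step_2 < \cdots$ recursively so that interval $I_m = [\step_{m-1}, \step_m)$ is long enough that, starting from \emph{any} vertex of $W$ at step $\step_{m-1}$, \POne can use the step-counter strategy for $\objective_m$ guaranteed by \cref{cor:openObjectives} to already satisfy $\objective_m$ and be back in $W$ before step $\step_m$; the needed length is bounded by $N_m$ (plus a constant to re-enter $W$, which is subsumed since the guaranteeing strategy already stays in $W$). Because each $\objective_m$ is step-monotonic and open, \cref{cor:openObjectives} actually gives a step-counter strategy; the subtlety is that we need it to work starting from step $\step_{m-1}$ rather than from step $0$. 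This is handled by the ``shifted'' open objective $\objective_m' = \{\word \mid \exists$ some prefix completing $\objective_m$ after the shift$\}$, or more cleanly by noting step-monotonicity is shift-invariant in the relevant sense — this is exactly the phenomenon illustrated by $\objective_\step$ in \cref{ex:stepMonotonic}, which is step-monotonic for every $\step$. The composite step-counter strategy $\strat$ is then: if the current step count lies in $I_m$, play according to the ($\step_{m-1}$-shifted) step-counter strategy for $\objective_m$, ignoring memory of which interval we came from except as encoded in the step count itself (which is all a step-counter strategy has access to).

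To verify correctness: take any play $\play$ from any $\vertex \in W$ consistent with $\strat$. By induction on $m$, at step $\step_{m-1}$ the play is at some vertex of $W$; during interval $I_m$ the strategy drives the play to already satisfy $\objective_m$ (so $\col(\play) \in \objective_m$ since $\objective_m$ is open and a prefix of $\col(\play)$ witnesses membership) and returns to $W$ by step $\step_m$, closing the induction. Since this holds for every $m$, $\col(\play) \in \bigcap_m \objective_m = \objective$, so $\strat$ is winning from $\vertex$; as $\vertex$ ranged over all of $W$, $\strat$ is uniformly winning. The main obstacle I anticipate is the \textbf{uniformity of the bound $N_m$ over the whole winning region}: \cref{lem:konig} gives a bound per vertex and per strategy, and one must ensure a single bound works from every vertex of $W$ simultaneously. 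The clean fix is to first establish that \POne has a \emph{uniformly} winning strategy on the sub-arena induced by $W$ (which is itself finitely branching and non-blocking, since winning strategies stay in $W$), and then invoke \cref{lem:konig} together with a König's-lemma / finite-branching argument on the tree of histories consistent with that uniform strategy restricted to length $N_m$ — if no finite uniform bound existed, an infinite consistent history never satisfying $\objective_m$ would exist, contradicting that the uniform strategy wins $\objective_m$ from that history's starting vertex. A secondary, more bookkeeping-level obstacle is making precise the ``shift'' so that the step-counter strategy for $\objective_m$ is genuinely usable on the tail of the play; this is where step-monotonicity being robust to prefixes/shifts (as in \cref{rem:stepMonotonic}) is used.
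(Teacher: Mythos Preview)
Your interval scheme matches the paper's architecture, but the step you flag as the main obstacle is in fact a genuine gap: no bound $N_m$ uniform over all of $W$ need exist. For a concrete counterexample with $\MPsupge$, take the one-player arena with vertices $u, v_0, v_1, \ldots$, edges $v_n \xrightarrow{-1} v_{n-1}$ ($n\ge1$), $v_0 \xrightarrow{0} u$, and $u \xrightarrow{0} u$. Every $v_n$ lies in $W$ (the $\limsup$ mean payoff is $0$), yet from $v_n$ the unique play needs $2n$ steps before its mean payoff first reaches $-\tfrac12$, so there is no uniform bound for the sub-objective $\objective_{2,1}=\bigcup_{j\ge1}\{w\mid\MP(w_{\le j})\ge-\tfrac12\}$. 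Your proposed K\H{o}nig fix fails precisely here: a tree with one child per vertex of $W$ is infinitely branching at its (virtual) root, so no infinite branch is produced; the long histories witnessing failure of each candidate bound start at \emph{different} vertices and cannot be stitched together.

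The paper avoids this by decoupling the two tasks. It first fixes a \emph{single} initial vertex $\vertex_0\in W$ and builds a step-counter strategy winning from $\vertex_0$ only; uniformity over $W$ is obtained afterwards by a standard uniformisation lemma for prefix-independent objectives (\cref{lem:uniform}). Inside the single-vertex construction there is a second point your sketch underestimates: the inductive step does not invoke \cref{cor:openObjectives} for $\objective_{m+1}$ in isolation. Instead, after the first $k_m$ steps are frozen, the paper takes an arbitrary winning strategy $\strat'_{m+1}$ for the \emph{full} objective $\objective$ from $(\vertex_0,0)$ in the partially-fixed product arena and applies \cref{lem:stepCounterCanBeNotWorse} to it. Two things come for free this way: the K\H{o}nig bound $k_{m+1}$ is computed once from the single root $(\vertex_0,0)$ (so only finitely many histories of each length are in play), and the resulting step-counter strategy visits only vertices that $\strat'_{m+1}$ visits, hence stays in $W$. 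Your plan to restrict to the sub-arena on $W$ handles the ``stay in $W$'' issue, but it does not solve how a step-counter strategy obtained from \cref{cor:openObjectives} at one starting vertex is to be reconciled with those obtained at the other vertices reachable at step $s_{m-1}$; anchoring everything at the original root in the partially-fixed arena is what makes this gluing automatic.
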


\begin{proof}
    Let $\arena = \arenaFull$ be a finitely branching arena, and let $\vertex_0\in\vertices$ be an initial vertex.
    Let $\winning \subseteq \vertices$ be the winning region of $\arena$ for $\objective$.
    We assume that $\vertex_0$ is in the winning region $\winning$, and build a winning \emph{step-counter} strategy from~$\vertex_0$.

    We build a winning step-counter strategy $\strat\colon \vertices_1 \times \IN\to \edges$ from $\vertex_0$ by induction on parameter $m$ used in the definition of $\objective=\bigcap_{m\in\IN} \objective_m$.
    We consider the product arena $\arena \product \SC$, and fix a strategy for increasingly high step values.
    The inductive scheme is as follows: for every $m\in\IN$, we fix $\strat$ on $\vertices_1 \times \{0,\ldots, k_m - 1\}$ for some step bound $k_m\in\IN$.
    We ensure that
    \begin{itemize}
    \item along all histories from $\vertex_0$ consistent with $\strat$ of length at most $k_m$, the history does not leave $\winning$ (i.e., for all reachable $(\vertex, \step)$, we have $\vertex\in\winning$), and
    \item the open objectives $\objective_{m'}$ for $m'\le m$ are already satisfied within $k_m$ steps (i.e., any history of length $k_m$ consistent with $\strat$ only has winning continuations for $\objective_{m'}$).
    \end{itemize}
    For the base case, we may assume that we start the induction at $m = -1$ with $k_{-1} = 0$ and $\objective_{-1} = \colours^\omega$.
    We indeed have that from $(\vertex_0, 0)$, the winning region is not left within $k_{-1} = 0$ step and that the open objective $\objective_{-1}$ is already satisfied.

    Now, assume that for some $m\ge 0$, the above properties hold, so we have already fixed the moves of $\strat$ in $\arena\product\SC$ on $\vertices_1\times\{0,\ldots,k_m - 1\}$, yielding arena $(\arena \product \SC)_m$.
    We first show that in arena $(\arena \product \SC)_m$ the vertex $(\vertex_0, 0)$ still belongs to the winning region.
    We have assumed by induction that the winning region $\winning$ is not left within $k_m$ steps.
    This means that for all $(\vertex, k_m)$ reachable from $(\vertex_0, 0)$ in $(\arena \product \SC)_m$, $\vertex$ is in $\winning$.
    As $\objective$ is prefix-independent, no matter the history from $(\vertex_0, 0)$ to $(\vertex, k_m)$, there is still a winning strategy from $(\vertex, k_m)$ (recall that no choice for \POne has been fixed beyond step $k_m$).
    Hence, no matter how \PTwo plays in the first $k_m$ steps, there is still a way to win for $\objective$ from $(\vertex_0, 0)$.

    We therefore take an (arbitrary) winning strategy $\strat'_{m+1}$ of \POne from $(\vertex_0, 0)$ in $(\arena \product \SC)_m$.
    As $\strat'_{m+1}$ is winning for $\objective = \bigcap_{m\in\IN} \objective_m$, $\strat'_{m+1}$ wins in particular for the open~$\objective_{m+1}$.
    Since the arena is finitely branching and~$\objective_{m+1}$ is open, applying \cref{lem:konig}, there is $k'_{m+1}\in\IN$ such that for all histories $\history'$ of length $\ge k'_{m+1}$ consistent with $\strat_{m+1}'$, $\history'$ already satisfies $\objective_{m+1}$ (i.e., $\col(\history')\colours^\omega \subseteq \objective_{m+1}$).
    As $\objective_{m+1}$ is step-monotonic, by \cref{lem:stepCounterCanBeNotWorse}, there is a step-counter strategy $\strat_{m+1}$ such that for every history $\history$ from $(\vertex_0, 0)$ consistent with $\strat_{m+1}$, there is a history $\history'$ from~$(\vertex_0, 0)$ consistent with $\strat'_{m+1}$ such that $\card{\history'} = \card{\history}$, $\too(\history') = \too(\history)$, and $\history' \prefixOrd_{\objective_{m+1}} \history$.

    To ensure that we fix at least one extra step of the strategy in the inductive step, let $k_{m+1} = \max\{k'_{m+1}, k_m + 1\}$.
    We extend the definition of $\strat$ to play the same moves as $\strat_{m+1}$ on $\vertices_1\times\{k_m, \ldots, k_{m+1} - 1\}$, which also defines $(\arena \product \SC)_{m+1}$.
    We prove the two items of the inductive scheme.

    First, $\strat$ still does not leave $\winning$ up to step $k_{m+1}$: indeed, for every history consistent with $\strat_{m+1}$, there is a history consistent with $\strat'_{m+1}$ reaching the same vertex.
    Since $\strat'_{m+1}$ is winning and $\objective$ is prefix-independent, no such vertex can be outside of the winning region.

    Second, strategy $\strat$ then guarantees $\objective_{m+1}$ within $k_{m+1}$ steps: after $k_{m+1}$ steps, every history consistent with $\strat_{m+1}$ is at least as good for $\prefixOrd_{\objective_{m+1}}$ as a history of length $k_{m+1}$ of~$\strat'_{m+1}$.
    But every history $\history'$ of length $k_{m+1}$ consistent with $\strat'$ is such that $\col(\history')\colours^\omega \subseteq \objective_{m+1}$, and therefore has only winning continuations.

    This concludes the induction argument and shows the existence of a winning step-counter strategy from $\vertex_0$ as we iterate this process for $m\to\infty$.

    We now know that for any vertex from the winning region, there is a winning step-counter strategy.
    The existence of a \emph{uniformly} winning step-counter strategy can be shown using prefix-independence of $\objective$; this part of the proof is standard and is detailed in \cref{app:uniform}.
\end{proof}

This theorem applies to $\MPsupge$ and $\TPsupinf$ (see \cref{app:prefIndPiTwo} for a full proof).

\begin{restatable}{corollary}{MPTPsup} \label{cor:MPTPsup}
    Step-counter strategies suffice uniformly for $\MPsupge$ and $\TPsupinf$.
\end{restatable}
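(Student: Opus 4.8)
The plan is to apply \cref{thm:Pi2StepCounter} to the two objectives $\MPsupge$ and $\TPsupinf$. For this I must, in each case, write the objective as a countable intersection $\bigcap_{m \in \IN} \objective_m$ where each $\objective_m$ is open and step-monotonic, and also check that the overall objective is prefix-independent (which was already remarked after the definition of quantitative objectives). Reading off the descriptions in \cref{fig:table}, the natural candidates are, for $\MPsupge$, the sub-objectives $\objective_m = \bigcap_{i \ge 1} \bigcup_{j \ge i} \{\word \mid \MP(\word_{\le j}) \ge \tfrac{-1}{m}\}$, i.e.\ ``the mean payoff is infinitely often $\ge -\tfrac1m$''; and for $\TPsupinf$, the sub-objectives $\objective_m = \bigcap_{i \ge 1} \bigcup_{j \ge i} \{\word \mid \TP(\word_{\le j}) \ge m\}$, i.e.\ ``the total payoff is infinitely often $\ge m$''. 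So the first thing to verify is the identities $\MPsupge = \bigcap_m \objective_m$ and $\TPsupinf = \bigcap_m \objective_m$, which are immediate unfoldings of $\limsup$: $\overline{\MP}(\word) \ge 0$ iff for every $m$ the mean payoff exceeds $-\tfrac1m$ infinitely often, and $\overline{\TP}(\word) = +\infty$ iff for every $m$ the total payoff exceeds $m$ infinitely often.

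The real work is to show that each $\objective_m$, as written above, is \emph{open} and \emph{step-monotonic}. Openness is not completely obvious because each $\objective_m$ is written as a $\bigcap_i \bigcup_j$, which superficially looks $\PiTwo$, not $\SigmaOne$. The key observation is that the condition ``$\MP(\word_{\le j}) \ge -\tfrac1m$ for infinitely many $j$'' is in fact equivalent to an open condition on arenas with \emph{bounded weights} — but in full generality over $\colours \subseteq \IQ$ one has to be a bit careful. Actually the cleaner route, and the one I would take, is: for the mean-payoff case, ``$\MP(\word_{\le j}) \ge -\tfrac1m$ infinitely often'' is equivalent to ``$\MP(\word_{\le j}) > -\tfrac1{m'}$ infinitely often for every $m' > m$'' only in the limit; rather than fight this, reindex. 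Define instead, for $\MPsupge$, the open sub-objectives $\objective_m = \{\word \mid \exists j \ge m,\ \MP(\word_{\le j}) \ge -\tfrac1m\}$ — a \emph{single} existential over a threshold that tightens with $m$ — and check that $\bigcap_m \objective_m$ still equals $\MPsupge$: if $\overline{\MP}(\word) \ge 0$ then for each $m$ the mean payoff is $> -\tfrac1m$ infinitely often hence at some $j \ge m$ it is $\ge -\tfrac1m$ (after possibly passing to a slightly larger $m$ inside, or using that $> -\tfrac1{2m}$ infinitely often implies $\ge -\tfrac1m$ at some late index), so $\word \in \objective_m$; conversely if $\word \notin \MPsupge$ then $\overline{\MP}(\word) < 0$, so for large $m$ eventually $\MP(\word_{\le j}) < -\tfrac1m$ for all $j$, witnessing $\word \notin \objective_m$. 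Each such $\objective_m$ is manifestly open (it is a union of cylinders of the finite prefixes that already witness the existential). The analogous construction for $\TPsupinf$ uses $\objective_m = \{\word \mid \exists j \ge m,\ \TP(\word_{\le j}) \ge m\}$, with the identity $\bigcap_m \objective_m = \TPsupinf$ following because $\overline{\TP}(\word) = +\infty$ exactly when the total payoff exceeds every bound arbitrarily late.

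Step-monotonicity of these $\objective_m$ is then exactly the content of \cref{ex:stepMonotonic}: there it is shown that the objective $\objective_\step = \{\word \mid \exists j \ge \step,\ \TP(\word_{\le j}) \ge 0\}$ is step-monotonic, and the argument there (compare two equal-length words by: if one already satisfies the objective it dominates; otherwise compare their current total payoffs) goes through verbatim when the threshold $0$ is replaced by any rational and $\TP$ by $\MP$ — note that for two words of \emph{equal length}, $\TP(\word_1) \le \TP(\word_2)$ iff $\MP(\word_1) \le \MP(\word_2)$, so the mean-payoff variant reduces to the total-payoff one. Hence both families consist of open, step-monotonic objectives, and both $\MPsupge$ and $\TPsupinf$ are prefix-independent, so \cref{thm:Pi2StepCounter} applies directly and yields that step-counter strategies suffice uniformly over finitely branching arenas. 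By \cref{rem:threshold0}, stating everything for threshold $0$ for $\MPsupge$ is without loss of generality.

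The main obstacle, and the one point I would be most careful about in the write-up, is the reindexing needed to make each sub-objective genuinely open (a single bounded existential) while preserving the intersection identity — the "naive" unfolding of $\limsup \ge 0$ produces sub-objectives of the form $\bigcap_i \bigcup_{j\ge i}(\cdots)$, which are $\PiTwo$, not open, so one genuinely has to trade the inner $\bigcap_i \bigcup_j$ for a single $\exists j \ge m$ with a threshold depending on $m$ and then re-verify that nothing is lost in the limit. Everything else (openness of a bounded existential, step-monotonicity via \cref{ex:stepMonotonic}, prefix-independence) is routine. I would organize the proof as: (1) fix the reindexed $\objective_m$; (2) prove the intersection identity for each of the two objectives; (3) observe openness; (4) invoke \cref{ex:stepMonotonic} (extended to arbitrary rational thresholds and to $\MP$) for step-monotonicity; (5) cite prefix-independence; (6) conclude via \cref{thm:Pi2StepCounter}.
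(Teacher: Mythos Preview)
Your proposal is correct and follows the same approach as the paper: write each objective as a countable intersection of open, step-monotonic sub-objectives and invoke \cref{thm:Pi2StepCounter}. The paper sidesteps what you flag as the ``main obstacle'' by simply indexing over both $m$ and $i$ --- taking $\objective_{m,i} = \bigcup_{j \ge i}\{\word \mid \MP(\word_{\le j}) \ge -\tfrac{1}{m}\}$ (and analogously for $\TP$) --- which is already open and step-monotonic, so no reindexing is needed (a countable intersection over $\IN \times \IN$ is still countable).
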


To illustrate \cref{thm:Pi2StepCounter} further, we apply it to a non-quantitative objective.

\begin{example}\label{example:nonQuantObj}
    Let $\colours$ be at most countable
    and $\objective\subseteq\colours^\omega$ be the objective requiring that all colours are seen infinitely often (it is an intersection of \emph{B\"uchi conditions}).
    Formally, $\objective = \bigcap_{\clr\in\colours}\bigcap_{i\ge 1}\bigcup_{j\ge i} \left\{ \word = \clr_1\clr_2\ldots \in\colours^\omega\mid \clr_j = \clr\right\}$.
    This objective is prefix-independent and in $\PiTwo$: it is the countable intersection of the open, step-monotonic objectives $\objective_{\clr, i} = \bigcup_{j\ge i} \left\{ \word = \clr_1\clr_2\ldots \in\colours^\omega\mid \clr_j = \clr\right\}$.
    By \cref{thm:Pi2StepCounter}, step-counter strategies suffice over finitely branching arenas for $\objective$.
    This result is relatively tight: finite-memory strategies do not suffice over finitely branching arenas when $\colours$ is infinite, and step-counter strategies do not suffice over infinitely branching arenas when $\card{\colours} = 2$ (see \cref{rem:example:nonQuantObj}, \cref{app:prefIndPiTwo} for details).
    \lipicsEnd
\end{example}

\section{A non-prefix-independent \texorpdfstring{$\PiTwo$}{Pi\_2} objective} \label{sec:NonPrefIndPiTwo}

In this section, we consider objective
$
    \TPsupge
    = \bigcap_{m\ge 1} \bigcap_{i \ge 1} \bigcup_{j \ge i} \left\{\word \in \colours^\omega \mid \TP(\word_{\le j}) \ge \frac{-1}{m} \right\}
$ (in $\PiTwo$).
Its definition is very close to the one of $\MPsupge$ from the previous section, but one important difference is that it is not prefix-independent (for instance, $0^\omega\in\TPsupge$, but $-1,0^\omega\notin\TPsupge$).
Hence, \cref{thm:Pi2StepCounter} does not apply.

As argued in \cref{lem:noSCforTP}, it turns out that step-counter strategies do not suffice for $\TPsupge$, even over finitely branching arenas.
We show a second example, only suited for this particular objective, illustrating more clearly the trade-off to consider to build simple winning strategies.%

\begin{figure}[t]
\centering

\adjustbox{width=\textwidth}{
\begin{tikzpicture}[node distance=1cm and 2.5cm, on grid]

\node [Min,initial,initial where=left] (v) at (0,0) {$\vertex_0$};
\node [Min,right=1.2cm of v] (v0) {$\vertex_1$};
\node [Max,right=of v0] (u0) {$u_1$};
\draw ($(v0)!0.5!(u0)-(0,.5)$) node[Min,Imp] (v00) {};
\node [Max,right=of v0] (u0) {$u_1$};
\node [Min,right=of u0] (v1) {$\vertex_2$};
\draw ($(u0)!0.5!(v1)-(0,.5)$) node[Max,Imp] (u00) {};
\node [right=1.cm of v1] (dots) {$\cdots$};
\node [Min,right=1.cm of dots] (vi) {$\vertex_i$};
\node [Max,right=of vi] (ui) {$u_i$};
\draw ($(vi)!0.5!(ui)-(0,.5)$) node[Min,Imp] (vi0) {};
\node [Min,right=of ui] (vi1) {$\vertex_{i+1}$};
\draw ($(ui)!0.5!(vi1)-(0,.5)$) node[Max,Imp] (ui0) {};

\draw (v) edge node[label,swap] {$-1$} (v0);
\draw (v0) edge[bend left=20,squish] node[label] {$0$} (u0);
\draw (v0) edge[bend right=15,squish] node[label,below left] {$1$} (v00);
\draw (v00) edge[squish,bend right=15] node[label,below] {$-2$} (u0);
\draw (u0) edge[bend left=20,squish] node[label] {$0$} (v1);
\draw (u0) edge[bend right=15,squish] node[label,below left] {$1$} (u00);
\draw (u00) edge[squish,bend right=15] node[label,below] {$-2$} (v1);
\draw (vi) edge[bend left=20,squish] node[label] {$0$} (ui);
\draw (vi) edge[squish,bend right=15] node[label,below left] {$i$} (vi0);
\draw (vi0) edge[squish,bend right=15] node[label,below] {$-i-1$} (ui);
\draw (ui) edge[bend left=20,squish] node[label] {$0$} (vi1);
\draw (ui) edge[squish,bend right=15] node[label,below left] {$i$} (ui0);
\draw (ui0) edge[squish,bend right=15] node[label,below] {$-i-1$} (vi1);
\end{tikzpicture}
}
\caption{Arena $\arena$ used in \cref{ex:bitIllustration}.
\POne has a winning $1$-bit strategy for $\TPsupge$, but no winning step-counter strategy.}
\label{fig:bitIllustration}
\end{figure}

\begin{example} \label{ex:bitIllustration}
    Consider the arena $\arena$ in \cref{fig:bitIllustration}.
    We assume that a play starts in $\vertex_0$, hence reaching sum of weights $-1$ in $\vertex_1$.
    We assume that a play is decomposed into rounds, where round $i$ corresponds to the choice of \PTwo and \POne in $\vertex_i$ and $u_i$ respectively.
    At each round $i$, \PTwo and then \POne choose either $0$, or $i$ followed by $-i-1$.
    As previously, we can assume that this arena only uses weights in $\colours = \{-1, 0, 1\}$, and that all histories from $\vertex_0$ reaching the same vertex have the same length.

    \POne has a winning strategy, consisting of playing ``the opposite'' of what \PTwo just played: if \PTwo played the sequence of $0$ (resp.\ $i, -i-1$), then \POne replies with $i, -i-1$ (resp.\ the sequence of $0$).
    This ensures that $(i)$~the current sum of weights in $\vertex_i$ is exactly $-i$ (it starts at $-1$ in $\vertex_1$ and decreases by $1$ at each round), and $(ii)$~the current sum of weights reaches exactly $0$ once during each round, after $i$ is played.
    This shows that this strategy is winning for $\TPsupge$.
    Such a strategy can be implemented with two memory states that simply remember the choice of \PTwo at each round.

    As all histories leading to vertices $u_i$ have the same length, a step-counter strategy cannot distinguish the choices of \PTwo (\cref{lem:encodingTheStep}).
    Any step-counter strategy is losing:
    \begin{itemize}
        \item either \POne only plays $0$, in which case \PTwo wins by only playing $0$, thereby ensuring that the current sum of weights is $-1$ from $\vertex_1$ onwards;
        \item or \POne plays $i, -i-1$ at some $u_i$.
        In this case, \PTwo wins by only playing $i, -i-1$.
        This means that the sum of weights decreases by at least $1$ at every round, but decreases by $2$ in round $i$.
        Hence, for $j\ge i$, the sum of weights at round $j$ is at most $-j-1$.
        Such a sum can never go above $0$ again when a player plays $j, -j-1$.
    \lipicsEnd
    \end{itemize}
\end{example}

This example shows that in general, there is a trade-off between ``obtaining a high value for a short time, to go above $0$ temporarily'' and ``playing safe in order not to decrease the value too much''.
Two memory states sufficed: if the opponent just saw a high sum of weights ($\ge 0$), then we can play it safe temporarily; if the opponent played it safe, we may need to aim for a high value, even if the overall sum decreases.
This reasoning generalises to all finitely branching arenas: in general, step-counter + $1$-bit strategies suffice for $\TPsupge$.

\begin{restatable}{theorem}{thmstepplusOneBit} \label{thm:TPsupge}
    Step-counter\:+\:$1$-bit strategies suffice for $\TPsupge$ over finitely branching arenas.
\end{restatable}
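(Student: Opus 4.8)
The plan is to mirror the structure of the proof of \cref{thm:Pi2StepCounter}, but to carry along one extra bit of memory whose role is to record whether the ``current'' open sub-objective has already been satisfied so that, once it has, \POne can safely switch to playing inside the winning region without worrying about the remaining (more demanding) sub-objectives. Recall that $\TPsupge = \bigcap_{m\ge 1}\bigcap_{i\ge 1}\bigcup_{j\ge i}\{\word\mid \TP(\word_{\le j})\ge\frac{-1}{m}\}$. Reindex the conjuncts as a single sequence of open objectives $\objective_0 \supseteq \objective_1 \supseteq \cdots$ with $\objective_n = \{\word\mid \exists j\ge i_n,\ \TP(\word_{\le j})\ge\frac{-1}{m_n}\}$ for a suitable enumeration of pairs $(m_n,i_n)$; note $\TPsupge=\bigcap_n\objective_n$. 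The essential non-prefix-independence obstacle is that the threshold in $\objective_n$ is an \emph{absolute} total-payoff bound, so a prefix that loses sum cannot be ``forgotten''. The bit is exactly what compensates: it will flag ``$\objective_n$ already witnessed for the current $n$''.

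\textbf{Setup and the role of the bit.} Fix a finitely branching arena $\arena$ and a vertex $\vertex_0$ in the winning region $\winning$ for $\TPsupge$. As in \cref{thm:Pi2StepCounter}, work in $\arena\product\SC$ and fix \POne's moves for increasing step values $k_0 < k_1 < \cdots$, inductively guaranteeing that the play stays in $\winning$ up to step $k_n$ and that $\objective_{n'}$ is already satisfied within $k_n$ steps for all $n'\le n$. The key point that fails here, compared to the prefix-independent case, is that after fixing the strategy on $[0,k_m)$ the vertex $(\vertex_0,0)$ need \emph{not} still be winning for $\TPsupge$: \PTwo may have forced a low total payoff, from which no continuation can ever recover $\ge\frac{-1}{m}$ for large $m$. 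However, $\TPsupge$ \emph{is} ``eventually prefix-independent'' in the following sense: a history $\history$ from $\vertex_0$ of total payoff $-t$ is still winning for the \emph{tail} objective $\bigcap_{m : 1/m < \text{something}}\cdots$ if and only if its endpoint is winning for a \emph{shifted} version of the objective, and since $\objective$ is a decreasing intersection, past a finite point the only thing that matters is the current vertex and current total payoff. This is where the bit enters: set the bit to $1$ as soon as the running total payoff has returned to $\ge\frac{-1}{m_n}$ for the currently targeted index $n$ (equivalently once $\objective_n$ is witnessed); while the bit is $0$, \POne plays to witness $\objective_n$ from the winning region; the moment the bit flips to $1$, \POne is allowed to switch, for the rest of that ``phase'', to \emph{any} strategy that stays in the winning region of the tail objective $\bigcap_{n'>n}\objective_{n'}$, and then the bit is reset to $0$ to begin targeting $\objective_{n+1}$.

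\textbf{The induction step.} Suppose \POne's step-counter$+$bit strategy is fixed on $\vertices_1\times[0,k_n)\times\{0,1\}$, with the invariants: every reachable $(\vertex,k_n,b)$ has $\vertex\in\winning$ for the tail objective $\bigcap_{n'\ge n}\objective_{n'}$ (shifted by the current total payoff), and $\objective_{n'}$ is already satisfied for $n'<n$. Because $\objective$ restricted to the tail is ``prefix-independent up to the absolute-payoff shift'', $(\vertex,k_n,b)$ still admits a winning strategy $\strat'$ for the relevant tail objective (for $b=1$ we additionally know the current phase's $\objective_n$ is already witnessed, so we only need to win $\bigcap_{n'>n}$). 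By \cref{lem:konig}, $\strat'$ witnesses the next open conjunct within some bounded number $k'_{n+1}$ of steps; by \cref{lem:stepCounterCanBeNotWorse} — applied to the open, step-monotonic objective ``$\exists j\ge$ (step count): total payoff $\ge\frac{-1}{m}$'', which is step-monotonic by the argument in \cref{ex:stepMonotonic} — there is a step-counter strategy doing at least as well for $\prefixOrd$. Splice it in on $\vertices_1\times[k_n,k_{n+1})\times\{b\}$, flipping the bit $0\to 1$ exactly when the running total payoff first meets the current threshold (this event is definable from the step count alone along histories of fixed length, by the step-encoding lemma, once we normalise the arena so all histories to a vertex share a length, via a product with $\SC$ as in \cref{rem:stepMonotonic}), and resetting $1\to 0$ once we advance to target index $n+1$. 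Verify the two invariants: staying in $\winning$ uses \cref{lem:stepCounterCanBeNotWorse} plus the fact that losing the shift only happens on histories whose $\prefixOrd$-image under $\strat'$ is also in $\winning$; and $\objective_n$ being satisfied within $k_{n+1}$ steps follows because when the bit is $0$ we are pushing the step-counter witness of $\objective_n$, and when it is $1$ it is already witnessed.

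\textbf{Main obstacle.} The delicate part is making precise the sense in which $\TPsupge$ is ``prefix-independent modulo the current total payoff,'' and checking that this is exactly what lets the induction re-enter a winning region after fixing finitely many steps. Concretely: after \POne fixes moves up to step $k_n$, one must argue that from every reachable $(\vertex,k_n,b)$ there is still a strategy winning the \emph{appropriate residual} of $\TPsupge$; this residual depends on the history only through the current total payoff $-t$, and it equals $\bigcap_{m : 1/m < t}\{\word\mid\exists j,\ \TP(\word_{\le j})\ge\frac{-1}{m}-(-t)\}$ intersected with finitely many already-satisfied conjuncts — i.e. a \emph{fixed} tail $\PiTwo$ objective of the same shape, which \POne wins from $\vertex$ because $\vertex$ was in $\winning$ and the first $k_n$ moves were taken from winning positions. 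Getting the bookkeeping of ``which $n$ is currently targeted'' synchronised with the step counter (so that the pair (step, bit) determines the phase) is routine given \cref{lem:encodingTheStep}, but stating it cleanly is the bulk of the write-up. Once that is in place, iterating $n\to\infty$ yields a step-counter$+1$-bit strategy all of whose consistent plays satisfy every $\objective_n$, hence lie in $\TPsupge$, proving the theorem.
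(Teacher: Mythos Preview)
Your high-level plan---phase the objective into open pieces $\objective_n$, use the bit to record ``current $\objective_n$ already witnessed'', and splice in safe play once the bit is set---matches the paper's architecture. But two load-bearing steps are missing or wrong.

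First, your bit update is not implementable within a step-counter\,+\,1-bit memory. You write that the bit flips ``exactly when the running total payoff first meets the current threshold'' and claim this is ``definable from the step count alone'' after taking a product with $\SC$. Taking the product makes all histories to a vertex $(\vertex,\step)$ have the same \emph{length}, but they emphatically do \emph{not} have the same \emph{total payoff}: two histories of length $\step$ ending in $\vertex$ can have different sums, so whether the threshold has been crossed is not a function of $(\vertex,\step,\edge)$. The paper avoids this by flipping the bit not when the \emph{actual} play crosses the threshold, but when a canonical \emph{reference} history $\history'_{\vertex,\step}$ (the $\prefixOrd_{\objective_m}$-minimal history consistent with a fixed winning strategy $\strat'$ reaching $(\vertex,\step)$) already satisfies $\objective_m$; that event \emph{is} a function of $(\vertex,\step,\edge)$. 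A separate argument then shows that while the bit is $0$ one has $\TP(\history'_{\vertex,\step})\le\TP(\history)$ for the actual history $\history$, so the actual play has also witnessed $\objective_m$ by the time the bit flips.

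Second, you are missing the ingredient that makes ``stay in the winning region'' compatible with step-counter\,+\,1-bit memory. You say that once the bit is $1$, \POne ``switches to any strategy that stays in the winning region of the tail objective'', but you never argue such a strategy can itself be memoryless (or step-counter). The paper proves a dedicated lemma: there is a single \emph{memoryless} strategy $\strat_{\mathsf{safe}}$ that from any $(\vertex,\reward)\in\winningExtended$ never leaves $\winningExtended$ (where $\winningExtended\subseteq\vertices\times\IQ$, not just $\vertices$). Equally important, you do not explain why the winning region is preserved while the bit is still $0$: the relation $\history'\prefixOrd_{\objective_m}\history$ only compares for the single open objective $\objective_m$, so knowing $\history'$ is in $\winningExtended$ says nothing directly about $\history$. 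The paper closes this gap with the inequality $\TP(\history'_{\vertex,\step})\le\TP(\history)$ mentioned above, whose proof hinges precisely on the reference-history definition of the bit. Your appeal to \cref{lem:stepCounterCanBeNotWorse} does not deliver this, because that lemma only guarantees $\prefixOrd_{\objective_m}$-domination, not total-payoff domination.
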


We provide a proof sketch here (full proof in \cref{app:NonPrefIndPiTwo}).
It follows the same scheme as the proof of \cref{thm:Pi2StepCounter}, where we inductively fix choices for ever longer histories.
However, we need to be more careful not to leave the winning region.
As the objective is not prefix-independent, the winning region $\winningExtended$ is described not just by a set of vertices, but by pairs of a vertex and current total payoff (i.e., the current sum of weights), i.e, $\winningExtended \subseteq \vertices \times \IQ$.

We start with a lemma about the sufficiency of memoryless strategies to \emph{stay} in this winning region.
Staying in $\winningExtended$ is necessary but not sufficient to win for $\TPsupge$.

\begin{restatable}[]{lemma}{lemsafeML} \label{lem:TPsafeML}
    Let $\arena = \arenaFull$ be a finitely branching arena.
    There exists a memoryless strategy $\strat_{\mathsf{safe}}$ of \POne in $\arena$ such that, for every $(\vertex_0, \reward)\in\winningExtended$, $\strat_{\mathsf{safe}}$ never leaves $\winningExtended$ from $\vertex_0$ with initial weight value $\reward$.
\end{restatable}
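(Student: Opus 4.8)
The plan is to show that the winning region $\winningExtended \subseteq \vertices \times \IQ$ is "downward-describable" in the weight coordinate and that a single memoryless strategy witnesses staying inside it. First I would make precise the relevant structure of $\winningExtended$: for a fixed vertex $\vertex$, let $B(\vertex) = \{\reward \in \IQ \mid (\vertex, \reward) \in \winningExtended\}$. The first key observation is monotonicity: if $(\vertex, \reward) \in \winningExtended$ and $\reward' \ge \reward$, then $(\vertex, \reward') \in \winningExtended$. This is because a winning strategy from $(\vertex, \reward)$ achieves $\limsup$ of partial sums $\ge -\frac{1}{m}$ for every $m$ (equivalently, the partial sums return arbitrarily close to $0$ from below infinitely often), and starting with a larger initial offset only shifts all partial sums up, which preserves the objective $\TPsupge$. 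So each $B(\vertex)$ is an up-closed subset of $\IQ$ (a ray, possibly open, possibly closed, possibly empty, possibly all of $\IQ$).

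Next I would argue the existence of the memoryless safe strategy by a fixed-point / one-step argument on the product arena $\arena$ refined by the weight coordinate. Think of the "safety game" in which \POne must keep the configuration $(\vertex, \reward)$ inside $\winningExtended$ forever. By definition of $\winningExtended$, from every $(\vertex_0, \reward) \in \winningExtended$ \POne has \emph{some} (possibly infinite-memory) strategy to win $\TPsupge$, and such a strategy in particular never leaves $\winningExtended$ (since $\TPsupge$ is prefix-dependent but once you leave $\winningExtended$ the remaining play cannot be winning from that configuration — this uses that the objective "tail" depends only on the vertex and the accumulated weight, which is exactly the configuration). Hence $\winningExtended$ is a set from which \POne can avoid leaving it; equivalently, $\winningExtended$ is contained in the "safe region" $\mathsf{Safe}$ of the safety game whose bad set is the complement of $\winningExtended$. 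Since safety objectives admit uniform memoryless strategies on arenas of arbitrary cardinality (the classical attractor/greatest-fixed-point argument — from any vertex in the safe region \POne picks any edge staying in the safe region, from \PTwo vertices all edges stay by definition), there is a memoryless strategy $\strat_{\mathsf{safe}}$ on this refined arena that never leaves $\mathsf{Safe} \supseteq \winningExtended$. Because $\winningExtended$ is itself closed under the safety dynamics (no configuration of $\winningExtended$ has a \POne-move forced out, and no \PTwo-move leaves it), we can in fact take $\strat_{\mathsf{safe}}$ to stay within $\winningExtended$ itself.

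The remaining subtlety — and the main obstacle — is that "memoryless on the refined arena $\vertices \times \IQ$" is \emph{not} the same as "memoryless on $\arena$": the refined arena has infinitely many configurations per vertex, so a priori the safe strategy's choice at $\vertex$ could depend on the current weight $\reward$. I would remove this dependence using the monotonicity of $B(\vertex)$ together with the congruence-type property of the underlying open building blocks: at a \POne-vertex $\vertex$, I claim there is a single outgoing edge $\edge = (\vertex, \clr, \vertex')$ that is safe \emph{simultaneously} for all $\reward \in B(\vertex)$, i.e.\ $\reward + \clr \in B(\vertex')$ for all $\reward \in B(\vertex)$. Indeed, consider $\reward^* = \inf B(\vertex)$ (or approach it); for the "worst" admissible weight, pick a safe edge $\edge$, so $\reward^* + \col(\edge) \in B(\vertex')$ (using up-closedness of $B(\vertex')$ and, if $\reward^*$ is not attained, a limiting argument that $B(\vertex')$ being up-closed is robust under taking the relevant infimum — here finite branching guarantees the infimum over the finitely many outgoing edges behaves well). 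Then by up-closedness of $B(\vertex')$, every larger starting weight $\reward \ge \reward^*$ with $(\vertex,\reward)\in\winningExtended$ also satisfies $\reward + \col(\edge) \in B(\vertex')$ along that \emph{same} edge. So the choice of $\edge$ can be made independent of $\reward$, yielding a genuinely memoryless strategy $\strat_{\mathsf{safe}}\colon \vertices_1 \to \edges$ on $\arena$. Finally I would verify by a straightforward induction on the length of a play consistent with $\strat_{\mathsf{safe}}$ that every prefix configuration $(\too(\history), \reward + \TP(\col(\history)))$ stays in $\winningExtended$: \POne-moves stay by the edge-choice above, and \PTwo-moves stay because $(\vertex,\reward)\in\winningExtended$ forces \emph{all} \PTwo-successors to remain in $\winningExtended$ (otherwise \PTwo could leave it and win). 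This completes the argument; the only place requiring genuine care is the uniform-over-$\reward$ edge selection, where finite branching and up-closedness of the $B(\vertex)$ rays are both essential.
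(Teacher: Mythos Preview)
Your proposal is correct and follows essentially the same approach as the paper: both observe that the sets $B(\vertex)=R_\vertex$ are up-closed, select at each \POne-vertex a single edge that is safe at (or, via a pigeonhole/limiting argument using finite branching, arbitrarily close to) the infimum $\inf B(\vertex)$, and then verify by induction on history length that the resulting memoryless strategy keeps the configuration in $\winningExtended$. The safety-game framing you add is extra scaffolding that the paper omits, but the core edge-selection argument and the use of finite branching for the non-attained-infimum case are identical.
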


The following lemma is an analogue of \cref{lem:stepCounterCanBeNotWorse}, but ensures a stronger property with a more complex memory structure (using an extra bit).
It says that locally, with a step-counter + $1$-bit strategy, we can guarantee a high value temporarily while staying in the winning region $\winningExtended$, generalising the phenomenon of \cref{ex:bitIllustration}. The bit is used to aim for a high value (bit value $0$) or stay in the winning region (bit value $1$) by playing $\strat_{\mathsf{safe}}$ from \cref{lem:TPsafeML}.

We use a rewriting of $\TPsupge$: observe that
\begin{align}
\TPsupge = \bigcap_{m\ge 1} \bigcup_{j \ge m} \left\{\word \in \colours^\omega \mid \TP(\word_{\le j}) \ge \frac{-1}{m} \right\}, \label{eq:TPsupge}
\end{align}
where the variable $m$ is used both for the $-\frac{1}{m}$ lower bound and for the $m$ lower bound on the step count.
Indeed, this also enforces that, for arbitrarily long prefixes, the current total payoff goes above values arbitrarily close to $0$.
For $m\ge 1$, let $\objective_m = \bigcup_{j \ge m} \{\word \mid \TP(\word_{\le j}) \ge \frac{-1}{m} \}$ be the open set used in the definition of $\TPsupge$ in~\eqref{eq:TPsupge}.

\begin{restatable}[]{lemma}{lemTPsupgeHard}
 \label{lem:TPsupgeHard}
    Let $\arena = \arenaFull$ be an arena and $\vertex_0\in\vertices$ be an initial vertex in the winning region of \POne for $\TPsupge$.
    For all $m\ge 1$, there exists a step-counter + $1$-bit strategy~$\strat_m$ such that $\strat_m$ is winning for $\objective_m$ from $\vertex_0$ and never leaves $\winningExtended$ (i.e., for all histories $\history$ from $\vertex_0$ consistent with $\strat_m$, $(\too(\history), \TP(\history))\in\winningExtended$).
\end{restatable}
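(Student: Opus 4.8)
The goal is, for each fixed $m \geq 1$, to build a step-counter\,+\,1-bit strategy $\strat_m$ that (i) wins the open objective $\objective_m = \bigcup_{j \geq m}\{\word \mid \TP(\word_{\le j}) \geq -\tfrac{1}{m}\}$ from $\vertex_0$ and (ii) never leaves $\winningExtended$. The plan is to combine the memoryless ``safe'' strategy $\strat_{\mathsf{safe}}$ of \cref{lem:TPsafeML} with a local version of the open-objective argument behind \cref{lem:stepCounterCanBeNotWorse}/\cref{cor:openObjectives}, gluing them with the single bit. The bit has the informal meaning: bit $=0$ means ``still trying to push the current total payoff above $-\tfrac1m$''; bit $=1$ means ``done, now just stay in $\winningExtended$ forever by playing $\strat_{\mathsf{safe}}$''. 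The bit flips from $0$ to $1$ (permanently) exactly at the first step $j$ where $\TP(\history_{\le j}) \geq -\tfrac1m$, which is detectable because $\TP(\history)$ is always recoverable on-the-fly when one also knows the current vertex — but crucially, we do \emph{not} get to store $\TP(\history)$ in memory, so the bit flip must be implementable purely from the current vertex, the step count, and the current bit. This is where I expect the real work.

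**Key steps, in order.** First, I would invoke \cref{lem:TPsafeML} to obtain the memoryless $\strat_{\mathsf{safe}}$; this handles all behaviour after the bit has flipped to $1$, and the ``never leaves $\winningExtended$'' guarantee for that phase is immediate. Second, for the bit-$0$ phase I want a step-counter strategy that mimics an arbitrary winning strategy $\strat'$ from $\vertex_0$ (which exists since $\vertex_0 \in \winningExtended$ and every winning strategy wins $\objective_m$). Because $\objective_m$ is open, \cref{lem:konig} gives a uniform bound $k$ such that all histories of length $\geq k$ consistent with $\strat'$ already satisfy $\objective_m$; and because the relevant ``sub-objective at step $\ge \step$'' family $\objective_\step$ is step-monotonic (the argument of \cref{ex:stepMonotonic} applies verbatim: compare two equal-length prefixes by whether one already satisfies it, else by current total payoff), \cref{lem:stepCounterCanBeNotWorse} produces a step-counter strategy $\strat$ each of whose length-$\ell$ histories $\history$ is matched by a length-$\ell$ history $\history'$ of $\strat'$ with $\too(\history') = \too(\history)$ and $\history' \prefixOrd_{\objective_m} \history$. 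Third — the crux — I must argue that $\strat$ itself never leaves $\winningExtended$ during the bit-$0$ phase and that at the moment it hits $\TP \geq -\tfrac1m$ we can flip to $\strat_{\mathsf{safe}}$ and still be inside $\winningExtended$. The matched history $\history'$ is consistent with the \emph{winning} strategy $\strat'$, so $(\too(\history'), \TP(\history')) \in \winningExtended$; but we need this for $(\too(\history), \TP(\history))$, and $\history' \prefixOrd_{\objective_m} \history$ only tells us about the open objective $\objective_m$, not about membership in $\winningExtended$ (which is a $\PiTwo$, not open, condition). I expect to resolve this by a more careful choice in \cref{lem:stepCounterCanBeNotWorse} tailored to this objective — namely, ensuring $\TP(\history) \geq \TP(\history')$ whenever neither prefix has yet satisfied $\objective_m$ (exactly as in \cref{rem:stepMonotonic}, the comparison by current total payoff), so that higher total payoff plus being winning for $\strat'$ transfers winningness to $\strat$; combined with a downward-closedness property of $\winningExtended$ in the total-payoff coordinate, this gives $(\too(\history),\TP(\history)) \in \winningExtended$. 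Fourth, I glue: run $\strat$ with bit $0$; the first time the step count and current vertex certify that $\TP \geq -\tfrac1m$ has been reached (which, on the matched run, happens by step $k$), flip the bit to $1$ and switch to $\strat_{\mathsf{safe}}$ forever. Since at the flip we are in $\winningExtended$, $\strat_{\mathsf{safe}}$ keeps us there; and $\objective_m$ is already satisfied at the flip, so every consistent play is in $\objective_m$.

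**Main obstacle.** The genuinely delicate point is making the bit flip \emph{legal} for a step-counter\,+\,1-bit memory structure: the decision to flip can depend only on the current vertex, the step number, and the current bit value — not on the actual accumulated total payoff, which is not stored. The way around this is the observation (used already in \cref{ex:bitIllustration} and \cref{lem:encodingTheStep}) that along any single history consistent with $\strat$, the step count plus the sequence of fixed moves determines everything; more usefully, I can define the flip position as a function of the step count alone: let $\strat$ be the step-counter strategy, let $k$ be the \cref{lem:konig} bound, and declare the bit to flip (if it has not already) at the \emph{fixed} step $k$ — but this is too crude since at step $k$ different histories sit at different vertices with different total payoffs, and not all of them have reached $-\tfrac1m$. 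The correct fix is to enlarge the bound: because the arena is finitely branching, along $\strat$ there are finitely many reachable vertex/step pairs up to step $k$, and for each the matched $\strat'$-history already satisfies $\objective_m$; choose for each reachable $(\vertex,\step)$-cell the first moment the matched history crosses $-\tfrac1m$, and since $\strat$ has $\TP$ at least as large as the matched history, the crossing for $\strat$ happens no later — so flipping ``at the current step, at whichever vertex, once the cell index exceeds that precomputed threshold'' is a well-defined function of $(\vertex,\step)$ and can be baked into $\memUpd'$. Verifying this cell-indexed flip is consistent (monotone: once $1$, stays $1$) and that at the flip we are provably in $\winningExtended$ is the bookkeeping I would spend the most care on; everything else (openness giving the bound, step-monotonicity giving the non-worse step-counter strategy, $\strat_{\mathsf{safe}}$ giving invariance of $\winningExtended$) is a direct application of results already established.
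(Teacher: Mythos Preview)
Your overall architecture matches the paper's: bit $0$ mimics a winning strategy $\strat'$ via the minimal-history step-counter construction underlying \cref{lem:stepCounterCanBeNotWorse}, bit $1$ plays $\strat_{\mathsf{safe}}$, and the flip is triggered by the matched history rather than the actual one. You also correctly locate the crux as showing that the bit-$0$ phase never leaves $\winningExtended$, and that this needs $\TP(\history) \ge \TP(\history')$ together with upward-closedness of $\winningExtended$ in the payoff coordinate (you wrote ``downward-closedness'', a slip).

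The gap is in your justification of $\TP(\history) \ge \TP(\history')$ during the bit-$0$ phase. You derive it from $\history' \prefixOrd_{\objective_m} \history$ \emph{only} when neither prefix has yet satisfied $\objective_m$. But with the bit flip keyed to the \emph{matched} history, it can happen that the actual history $\history$ has already satisfied $\objective_m$ (crossed $-\tfrac1m$ at some step $\ge m$) while the matched minimum $\history'_{\vertex,\step}$ has not, so the bit is still $0$; then $\history'_{\vertex,\step} \prefixOrd_{\objective_m} \history$ holds trivially (first disjunct of~\eqref{eq:prefixOrdOm}) and yields no $\TP$ comparison, so you cannot conclude $(\too(\history),\TP(\history))\in\winningExtended$. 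The paper closes this with a separate claim, its~\eqref{eq:unusualPreorder}: while the bit is $0$ one has $\TP(\history'_{\vertex,\step}) \le \TP(\history)$ \emph{unconditionally}. The proof is a first-point-of-failure argument comparing two \emph{matched} histories rather than matched versus actual: at the first step where the inequality would flip, the one-step extension of the previous minimum is $\strat'$-consistent, has strictly smaller $\TP$ than the new minimum, yet lies $\prefixOrd_{\objective_m}$-above it by minimality; by~\eqref{eq:prefixOrdOm} this forces the extension to already satisfy $\objective_m$, which by the flip rule (defined precisely as ``flip when $\col(\history'_{\vertex,\step}\edge)\colours^\omega\subseteq\objective_m$'') means the bit is already $1$, a contradiction. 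This is the missing idea; once you have it, the rest of your plan goes through.
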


The inductive scheme used in the proof of \cref{thm:TPsupge} is similar to that of \cref{thm:Pi2StepCounter}, building a step-counter + 1-bit strategy $\strat\colon \vertices_1 \times \IN \times \{0, 1\} \to \edges$.

For $\memory$ a step-counter and $1$-bit memory structure, consider the product arena $\arena' = \arena\product \memory$ (in which the bit updates are not fixed yet, and will be fixed inductively).
    We have that $(\vertex_0, (0, 0))$ is in the winning region of $\arena'$.
    The inductive scheme is as follows: for infinitely many $m\in\IN$, for some step bound $k_m\in\IN$, we fix $\strat$ on $\vertices_1 \times \{0,\ldots, k_m - 1\} \times \{0, 1\}$, yielding arena $\arena'_m$.
    Using \cref{lem:TPsupgeHard}, we ensure that
    \begin{itemize}
    \item along all histories $\history$ from $\vertex_0$ consistent with $\strat$ of length at most $k_m$, $\winningExtended$ is not left, and
    \item the open objective $\objective_{m}$ is already satisfied within $k_m$ steps (i.e., any history of length $k_m$ consistent with $\strat$ only has winning continuations for $\objective_{m}$).
    \end{itemize}
Iterating this procedure defines a step-counter + $1$-bit strategy $\strat$ that satisfies $\objective_m$ for infinitely many $m\ge 1$.
Since the sequence $(\objective_m)_{m\ge 1}$ is decreasing ($\objective_1 \supseteq \objective_2 \supseteq \ldots$), we have that $\strat$ is winning for $\objective_m$ for all $m\ge 1$.
Hence, $\strat$ is winning for $\TPsupge$.

\begin{remark}
    Unlike for \cref{thm:Pi2StepCounter}, the upper bound in this section does not apply \emph{uniformly} in general (an arena illustrating this is in \cref{app:NonPrefIndPiTwo}, \cref{fig:nonUniformPiTwo}).
    \lipicsEnd
\end{remark}

\begin{remark}
Over integer weights ($\colours\subseteq\IZ$), $\TPsupg = \overline{\TP}_{\ge 1} \in \PiTwo$.
    As $\overline{\TP}_{\ge 1}$ behaves like $\TPsupge$ (\cref{rem:threshold0}), the results from this section apply to $\TPsupg$ over integer weights.
    Up to some scaling factor, this also applies to rational weights with bounded denominators.
    However, for general rational weights, $\TPsupg$ can only be shown to be in $\SigmaThree$, so the above does not apply.
    \lipicsEnd
\end{remark}

\section{Conclusion}
We established whether step-counter strategies (possibly with finite memory) suffice for the objectives $\MPsupge$, $\TPinfg$, $\TPinfge$, $\TPsupinf$, and $\TPsupge$.
We used the structure of these objectives as sets in the Borel hierarchy, and pinpointed the strategy complexity for all classical quantitative objectives on the second level of Borel hierarchy.
This leaves open the cases of $\MPsupg$, $\MPinfge$, $\TPsupg$ (over $\IQ$), and $\TPinfinf$, all on the third level.
The sufficiency of other less common infinite memory structures, such as \emph{reward counters}~\cite{MM23}, could also be investigated.

\bibliography{stepCounters.bib}
\newpage

\appendix

\section{Missing proofs for Section~\ref{sec:lowerBounds}} \label{app:lowerBounds}

We give formal proofs of \cref{lem:infBranchingLB,lem:sameLengthG4}.

\infBranchingLB*
\begin{proof}
Let $\arena_1$, depicted in \cref{fig:LB-0a}, be the arena with three vertices, $s$ controlled by \PTwo and $t,q$ controlled by \POne, with weights $\colours=\IZ$ and edges $E = (\{s\}\x(-\N)\x\{t\}) \cup (\{t\}\x\N\x\{q\})\cup \{(q,0,q)\}$. In this arena, \POne has a strategy to win for any total-payoff objective with threshold $0$: given the choice of $-i$ by \PTwo, \POne can respond with $i+1$, thus winning for $\TPinfg$, $\TPinfge$, $\TPsupg$, and $\TPsupge$.
Notice that, at $t$, the step-counter value is always $1$, providing no useful information.
Consider any finite-memory strategy of \POne. There must be a maximum number $k$ chosen by this strategy at $t$, against any possible choices of \PTwo. Against such a strategy, \PTwo wins by playing $-k-1$.

We now show the claim for $\MPsupg$, $\MPsupge$, and $\TPsupinf$. Consider $\arena_1'$ (see \cref{fig:LB-0b}), which repeats the process of $\arena_1$. Vertices $t$ and $s$ are controlled by \POne and \PTwo respectively, with weights $\colours=\IZ$ and edges $E = (\{s\}\x(-\N)\x\{t\}) \cup (\{t\}\x\N\x\{s\})$.
That is, the structure enforces strict alternation and each step has an arbitrary finite weight.

Notice that regardless of the players' choices, the number of steps up to round~$i$ is always~$2i$.
Every \POne strategy based on a step counter and finite memory thus defines a function $f\colon\N\to\N$ where $f(i)$ denotes the maximum number chosen in round~$i$. \PTwo can counter such a strategy by always picking $-f(i) - 1$ in round $i$ and thereby ensure a mean payoff $\le -\frac{1}{2}$ and a total payoff of~$-\infty$.
\end{proof}

\sameLengthGFour*
\begin{proof}
    By induction on $k$.
    We show that the length of any path from $s_0$ to $t_k$ is $3(k+1)$.
    For $k=0$, there is only one path of length $3=3(k+1)$.
    For $k+1$, the ``direct'' path $s_0\hops s_{k+1}\to t_{k+1}$ evidently has the stated length: $k+1$ steps to reach $s_{k+1}$ then $2(k+1)+3$ steps down to $t_{k+1}$.
    Consider any other path $s_0\hops t_{j}\hops t_{k+1}$
    with $j<k+1$ maximal, i.e., the suffix $t_j\hops t_{k+1}$
    went through the delay gadget from $t_j$. By induction hypothesis, the prefix up to vertex $t_j$ has length $l_1 = 3(j+1)$. The suffix from $t_j\to t_j^1\hops t_j^{k+1-j}\to t_{k+1}$ has length
    $l_2=(k+1-j) + 2(k+1-j)$.
    The total length of the path is thus $l_1+l_2 = 3(j+1) + 3(k+1-j) = 3(k+2)$ as required.
\end{proof}

\section{Missing proofs for Section~\ref{sec:openClosed}}
\label{app:openClosed}

In this section, we prove \cref{lem:stepCounterCanBeNotWorse,lem:konig}.

\konig*
\begin{proof}
    Let $\strat$ be a strategy winning from $\vertex_0$ for $\objective$.
    Let $T_\strat$ be the set of all histories $\history$ from~$\vertex_0$ consistent with $\strat$ such that $\col(\history)\colours^\omega \not\subseteq \objective$.
    We have that $T_\strat$ is a tree, since if $\col(\history)\colours^\omega \subseteq \objective$, then for $\history'$ a longer history with $\history$ as a prefix, we also have $\col(\history')\colours^\omega \subseteq \objective$.

    Assume that, for all $\step\in\IN$, there is~$\history$ of length $\ge \step$ such that $\col(\history)\colours^\omega \not\subseteq \objective$.
    Then, $T_\strat$ is an infinite tree.
    By K\H{o}nig's lemma, this tree has an infinite branch.
    Hence, there is an infinite play $\play$ consistent with $\strat$ such that all finite prefixes $\history$ of $\play$ are such that $\col(\history)\colours^\omega \not\subseteq \objective$.
    As~$\objective$ is open, this means that $\col(\play) \not\in\objective$, so $\strat$ is not winning from $\vertex_0$.
\end{proof}

\locallyNotWorse*
\begin{proof}
    We define $\strat\colon \vertices_1 \times \IN \to \edges$ and prove the required property on $\strat$ by induction on $\IN$.
    The property is trivially true for histories of length $0$ (i.e., just fixing an initial vertex).

    We assume that $\strat$ has been defined on $\vertices_1 \times \{0, \ldots, \step - 1\}$ for some $\step \ge 0$, and that the property holds for histories up to length $\step$.
    We define $\strat$ on histories of length $\step$ and prove the properties on histories of length~$\step + 1$ at the same time.

    Let $\history$ be any history from $\vertex_0$ consistent with $\strat$ of length $\step$.
    Let $\vertex = \too(\history)$.
    By induction hypothesis, there is a history $\history'$ from $\vertex_0$ consistent with $\strat'$ such that $\card{\history'} = \step$, $\too(\history') = \vertex$, and~$\history' \prefixOrd_\objective \history$.

    If $\vertex\in\vertices_2$, we consider any outgoing edge $\edge$ of $\vertex$ (which could be a move played by \PTwo after $\history$).
    The history $\history\edge$ has length $\step+1$ and is consistent with $\strat$.
    Since $\history'$ also ends in $\vertex\in\vertices_2$, the history $\history'\edge$ is also consistent with $\strat'$.
    As we had $\history' \prefixOrd_\objective \history$, we also have $\history'\edge \prefixOrd_\objective \history\edge$ (using that $\prefixOrd_\objective$ is a congruence).
    Hence, history $\history'\edge$ satisfies all the required properties: it is consistent with $\strat'$, it is of length $\step+1$, it ends in $\too(\history\edge)$, and it is such that $\history'\edge \prefixOrd_\objective \history\edge$.

    If $\vertex\in\vertices_1$, we need to define $\strat(\vertex, \step)$ to see how history $\history$ is extended.
    For $\vertex_1\in\vertices$, we define a history $\history_{\vertex, \step}'$ as one of the elements of
    \[
        \min_{\prefixOrd_\objective}
        \{\history''\in\histories(\arena) \mid \text{$\history''$ is consistent with $\strat'$, $\from(\history'') = \vertex_0$, $\card{\history''} = \step$, and $\too(\history'') = \vertex$}\}.
    \]
    The set being minimised over is non-empty, as $\history'$ is in it.
    It is also finite, as $\arena$ is finitely branching and there are therefore only finitely many histories of fixed length from $\vertex_0$.
    Moreover, a minimum exists as all histories of the same length are comparable for~$\prefixOrd_\objective$ due to step-monotonicity.
    Since $\prefixOrd_\objective$ is a \emph{pre}order, there may be multiple histories that are minimal but equivalent with respect to~$\prefixOrd_\objective$, in which case we can pick any of them.

    We define $\strat(\vertex, \step) = \strat'(\history_{\vertex, \step}')$.
    Let $\history\edge$ be the one-move continuation of $\history$, where $\edge = \strat(\vertex, \step)$.
    We have that $\history_{\vertex, \step}'\edge$ is consistent with $\strat'$, and satisfies $\card{\history_{\vertex, \step}'\edge} = \card{\history\edge}$ and $\too(\history_{\vertex, \step}'\edge) = \too(\history\edge)$.
    Moreover, we had $\history'\prefixOrd_\objective \history$ (by the induction hypothesis) and we have $\history_{\vertex, \step}' \prefixOrd_\objective \history'$ by the choice of minimum.
    Therefore, $\history_{\vertex, \step}' \prefixOrd_\objective\history$, so $\history_{\vertex, \step}'\edge \prefixOrd_\objective \history\edge$, which ends the proof.
\end{proof}

\section{Uniformly winning strategies} \label{app:uniform}

In this section, we prove that winning step-counter strategies can be ``uniformised'' for prefix-independent objectives.
This result easily follows from the following known result~\cite[Lemma~5]{CN06} on the uniformisation of \emph{memoryless} strategies for prefix-independent objectives.
We rephrase this lemma with our notations.
This result was originally stated for both players at the same time, but its proof applies to one player at a time.

\begin{lemma}[{\cite[Lemma~5]{CN06}}] \label{lem:uniformML}
    Let $\objective$ be a prefix-independent objective, and $\arena = \arenaFull$ be an arena.
    If, for all $\vertex\in\winning$ in the winning region of \POne, there is a winning memoryless strategy from $\vertex$, then \POne has a uniformly winning memoryless strategy in~$\arena$.
\end{lemma}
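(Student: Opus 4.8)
The plan is to merge the per-vertex winning memoryless strategies into a single one by a well-ordering argument. First I would fix a well-ordering of the winning region, writing $\winning = \{\vertex_\alpha \mid \alpha < \kappa\}$, and for each $\alpha < \kappa$ a memoryless strategy $\strat_\alpha$ of \POne that is winning from $\vertex_\alpha$ (one exists by hypothesis). For $\vertex\in\winning$, let $f(\vertex)$ be the least ordinal $\alpha$ such that $\strat_\alpha$ is winning from $\vertex$; this is well-defined, since writing $\vertex = \vertex_\beta$, the strategy $\strat_\beta$ witnesses $f(\vertex)\le\beta$. The candidate uniform memoryless strategy is then $\strat$ defined by $\strat(\vertex) = \strat_{f(\vertex)}(\vertex)$ for $\vertex\in\vertices_1\cap\winning$ (and arbitrarily on the remaining vertices, which are never reached along plays starting in $\winning$, as will follow below).

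The core observation is a \emph{monotonicity along edges} property: if $\strat_\alpha$ is winning from a vertex $\vertex'$ and $\vertex''$ is reached from $\vertex'$ by a single edge consistent with $\strat_\alpha$ (which, for $\vertex'\in\vertices_2$, means \emph{any} outgoing edge, as a memoryless strategy of \POne puts no constraint at \PTwo's vertices), then $\strat_\alpha$ is still winning from $\vertex''$. Indeed, every play from $\vertex''$ consistent with $\strat_\alpha$, prefixed by that edge, is a play from $\vertex'$ consistent with $\strat_\alpha$ and hence in $\objective$; by prefix-independence of $\objective$, the play from $\vertex''$ already lies in $\objective$. In particular $\vertex''\in\winning$ and $f(\vertex'')\le\alpha$. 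Applying this with $\alpha = f(\vertex')$, I get that along any play consistent with $\strat$ the value of $f$ is non-increasing: at a \POne vertex $\vertex'$ the edge taken is $\strat_{f(\vertex')}(\vertex')$, and at a \PTwo vertex every edge is consistent with $\strat_{f(\vertex')}$.

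To conclude, take any play $\play$ from a vertex of $\winning$ consistent with $\strat$. Since the ordinals are well-ordered and $f$ is non-increasing along $\play$, the sequence of $f$-values is eventually constant, equal to some $\gamma$ from some position $n$ on. The suffix of $\play$ from position $n$ is then consistent with $\strat_\gamma$ (at its \POne vertices $\strat$ plays $\strat_\gamma$, and \PTwo vertices are unconstrained) and starts in a vertex from which $\strat_\gamma$ is winning (its $f$-value being $\gamma$). Hence this suffix is in $\objective$, and by prefix-independence so is the whole play $\play$. Therefore $\strat$ is uniformly winning. The only place that requires genuine care is the interplay with prefix-independence, which is used both to transport ``winning from $\vertex'$'' across a single edge (yielding the monotonicity of $f$) and to lift ``the suffix is winning'' back to ``the whole play is winning''; everything else is bookkeeping with the well-ordering, and no finiteness or bounded-branching assumption on $\arena$ is needed.
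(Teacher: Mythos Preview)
The paper does not give its own proof of this lemma; it quotes it as \cite[Lemma~5]{CN06} and only remarks that the original two-player statement specialises to one player. Your argument is correct and is essentially the standard well-ordering proof (the one in the cited reference proceeds along the same lines): pick a well-ordering of $\winning$, assign to each vertex the least index of a strategy winning from it, and use prefix-independence twice, once to show this index is non-increasing along plays consistent with the merged strategy, and once to conclude from the eventually-constant suffix. There is nothing to add.
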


We obtain a similar result for step-counter strategies by reducing to the memoryless case through the construction of the product arena $\arena \product \SC$.

\begin{lemma} \label{lem:uniform}
    Let $\objective$ be a prefix-independent objective and $\arena = \arenaFull$ be an arena in which, from every $\vertex\in\winning$ of the winning region of \POne, \POne has a winning step-counter strategy.
    Then \POne has a uniformly winning step-counter strategy.
\end{lemma}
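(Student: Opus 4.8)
The plan is to reduce to the memoryless case, \cref{lem:uniformML}, by passing to the product arena $\arena' = \arena \product \SC$, in which step-counter strategies of \POne on $\arena$ correspond exactly to memoryless strategies (as observed in \cref{sec:prelim}: \POne has a winning strategy based on $\SC$ from $\vertex$ in $\arena$ iff she has a winning memoryless strategy from $(\vertex, 0)$ in $\arena'$). Taking the product with $\SC$ does not alter the colours labelling edges, so $\objective$ remains a prefix-independent objective when viewed over $\arena'$, and \cref{lem:uniformML} will apply to $\arena'$ once its hypothesis — from every vertex of the winning region of \POne on $\arena'$ there is a winning memoryless strategy — has been checked. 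The uniformly winning memoryless strategy on $\arena'$ obtained this way is then read back as a uniformly winning step-counter strategy on $\arena$.

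First I would record two bookkeeping facts about the correspondence between $\arena$ and $\arena'$. Fix $\vertex\in\vertices$ and $\step\in\IN$. Histories of $\arena$ from $\vertex$ lift uniquely to histories of $\arena'$ from $(\vertex, \step)$ with the same colour sequences, so (i) any winning strategy of \POne from $(\vertex, \step)$ in $\arena'$ projects to a winning strategy from $\vertex$ in $\arena$; in particular $(\vertex, \step)$ belonging to the winning region of \POne on $\arena'$ forces $\vertex\in\winning$. Conversely, (ii) by the hypothesis of the lemma, from every $\vertex\in\winning$ \POne has a winning step-counter strategy on $\arena$, i.e.\ a winning memoryless strategy from $(\vertex, 0)$ on $\arena'$, so $(\vertex, 0)$ lies in the winning region of \POne on $\arena'$.

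The one step needing a small argument is to strengthen (i): from \emph{every} vertex $(\vertex, \step)$ of the winning region of \POne on $\arena'$, \POne has a winning \emph{memoryless} strategy on $\arena'$. By (i), $\vertex\in\winning$, so by hypothesis there is a winning step-counter strategy on $\arena$ from $\vertex$, which is a winning memoryless strategy $g\colon \vertices_1 \times \IN \to \edges$ from $(\vertex, 0)$ on $\arena'$. Define $g_\step$ on $\arena'$ by shifting the counter argument, $g_\step(\vertex', \step') = g(\vertex', \step' - \step)$ for $\step' \ge \step$ (arbitrarily for $\step' < \step$, these vertices being unreachable from $(\vertex, \step)$): the plays from $(\vertex, \step)$ consistent with $g_\step$ are in colour-preserving bijection with the plays from $(\vertex, 0)$ consistent with $g$, so $g_\step$ is winning from $(\vertex, \step)$. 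Now \cref{lem:uniformML} applied to $\arena'$ and $\objective$ yields a uniformly winning memoryless strategy $\strat^*$ on $\arena'$. Reading $\strat^*$ back as a step-counter strategy on $\arena$: by (ii), $(\vertex, 0)$ is in the winning region of \POne on $\arena'$ for every $\vertex\in\winning$, so $\strat^*$ wins from $(\vertex, 0)$, hence from $\vertex$ on $\arena$. Thus $\strat^*$ is uniformly winning on $\arena$, as desired.

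I expect the only genuinely non-routine point to be the shift in the third paragraph: it is what lets us meet the \emph{memoryless}-on-$\arena'$ hypothesis of \cref{lem:uniformML} from only the \emph{step-counter}-on-$\arena$ hypothesis we are handed. Everything else is routine translation between the two arenas and the two notions of strategy, and prefix-independence of $\objective$ is used only inside \cref{lem:uniformML} itself.
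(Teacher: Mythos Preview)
Your proof is correct and follows essentially the same approach as the paper: pass to the product arena $\arena\product\SC$, apply \cref{lem:uniformML}, and read the resulting memoryless strategy back as a step-counter strategy on $\arena$. Your write-up is in fact more careful than the paper's on one point: the paper only explicitly establishes a winning memoryless strategy from vertices $(\vertex,0)$ before invoking \cref{lem:uniformML}, whereas you correctly observe that the hypothesis of \cref{lem:uniformML} concerns the \emph{entire} winning region of $\arena\product\SC$ and supply the shift argument to cover all $(\vertex,\step)$.
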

\begin{proof}
    Consider the product arena $\arena \product \SC$.
    The fact that, from every vertex $\vertex\in\winning$, \POne has a winning step-counter strategy is equivalent to the fact that, from every $(\vertex, 0)$ in $\arena \product \SC$ for $\vertex\in\winning$, \POne has a winning memoryless strategy.
    The proof of this equivalence is standard; full details can be found in~\cite[Lemma~2.4]{BLORV22}.

    As $\objective$ is prefix-independent, we can apply \cref{lem:uniformML} to arena $\arena\product\SC$ and find a memoryless strategy $\strat$ of \POne that wins from all vertices in the winning region of $\arena\product\SC$.
    In particular, it wins from all vertices $(\vertex, 0)$ with $\vertex\in\winning$.
    Going back to $\arena$, we find that there is a step-counter strategy that wins from all $\vertex\in\winning$.
\end{proof}

\section{Missing details for Section~\ref{sec:PiTwo}} \label{app:prefIndPiTwo}

We prove \cref{cor:MPTPsup}.

\MPTPsup*
\begin{proof}
    Let $\colours = \IQ$.
    Recall that both $\MPsupge$ and $\TPsupinf$ are prefix-independent.

    We first focus on $\MPsupge$.
    A natural way to write $\MPsupge$ as a $\PiTwo$ objective, starting from its definition, is as the set of infinite words that have infinitely many finite prefixes with mean payoff above~$\frac{-1}{m}$ for all $m\ge 1$. Formally,
    \[
        \MPsupge = \bigcap_{m\ge 1} \bigcap_{i \ge 1} \bigcup_{j \ge i} \left\{\word \in\colours^\omega\mid \MP(\word_{\le j}) \ge \frac{-1}{m}\right\}.
    \]
    For $m, i\ge 1$, let $\objective_{m, i} = \bigcup_{j \ge i} \{\word \mid \MP(\word_{\le j}) \ge \frac{-1}{m}\}$.
    Such sets are open, as whether a word belongs to it is witnessed by a finite prefix.
    Observe that the objectives $\objective_{m, i}$ are also step-monotonic. Indeed, if we consider two finite words of the same length, either one has already satisfied $\objective_{m, i}$, or the one with the greatest current mean payoff has more winning continuations than the other (a similar reasoning was used in \cref{ex:stepMonotonic}).

    Therefore, $\MPsupge$ is the countable intersection of open, step-monotonic objectives.
    By \cref{thm:Pi2StepCounter}, we conclude that a step counter suffices for $\MPsupge$ in finitely branching arenas.

    The claim for $\TPsupinf$ can be shown analogously: observe that
    \[
        \TPsupinf
        = \bigcap_{m\ge 1} \bigcap_{i \ge 1} \bigcup_{j \ge i} \left\{\word\in\colours^\omega \mid \TP(\word_{\le j}) \ge m \right\}
    \]
    is in $\PiTwo$ and fix $\objective_{m, i} = \bigcup_{j \ge i} \left\{\word\in\colours^\omega \mid \TP(\word_{\le j}) \ge m \right\}$.
\end{proof}

\begin{remark}[Further details on \cref{example:nonQuantObj}]\label{rem:example:nonQuantObj}
    The $\objective_{\clr, i}$ are open because whether an infinite word is in such an objective is witnessed by a finite prefix of the word.
    Also, the objectives $\objective_{\clr, i}$ are step-monotonic: if we take two finite words of the same length, either one already satisfies the objective (in which case, this word is larger for $\prefixOrd_{\objective_{\clr, i}}$), or none of the words has already satisfied the objective, in which case both words have exactly the same winning continuations.

    We show that finite-memory strategies do not suffice over finitely branching arenas when~$\colours$ is infinite: consider the arena in \cref{fig:tightInfiniteBuchi-a}.
    In this arena, \POne can see all colours infinitely often, but any finite-memory strategy either only stays on the top horizontal line, or never goes beyond some $\clr_i$.

    We show that step-counter strategies do not suffice over infinitely branching arenas when $\card{\colours} = 2$: consider the arena in \cref{fig:tightInfiniteBuchi-b}.
    Clearly, \POne wins by playing $\clr_2$ followed by $\clr_1$ every time the play reaches $v$.
    However, using only a step-counter strategy $\strat$, there needs to be infinitely many $\step$ such that $\strat(v, \step) = (v, \clr_2, v)$ and infinitely many $\step$ such that $\strat(v, \step) = (v, \clr_1, u)$.
    For any given step-counter strategy, \PTwo can therefore have a winning counterstrategy by ensuring that the play reaches $v$ only in those steps where \POne plays $(v, \clr_1, u)$ immediately.
    \lipicsEnd
\end{remark}

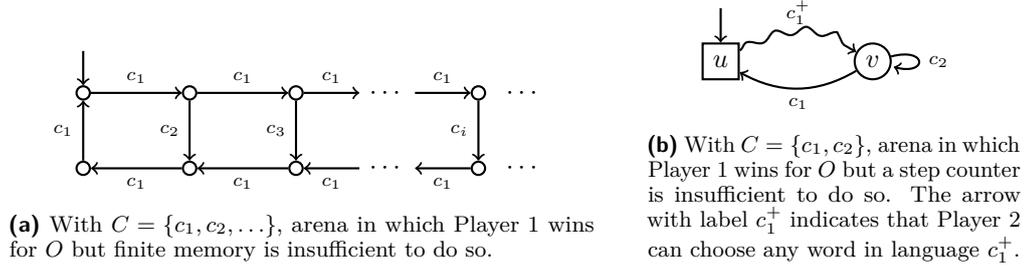
\begin{figure}
    \centering
    \begin{subfigure}[b]{0.55\textwidth}
    \centering
    \begin{tikzpicture}[node distance=1cm and 2cm, on grid]

\node [Max,Imp,initial,initial where=above] (s0) at (0,0) {};
\node [Max,Imp,right=1.4cm of s0] (s1) {};
\node [Max,Imp,right=1.4cm of s1] (s2) {};
\node [right=1.2cm of s2] (s23) {$\cdots$};
\node [Max,Imp,right=1.2cm of s23] (s3) {};
\node [right=.6cm of s3] (s34) {$\cdots$};

\node [Max, Imp, below=of s0] (s0') {};
\node [Max, Imp, below=of s1] (s1') {};
\node [Max, Imp, below=of s2] (s2') {};
\node [below=of s23] (s23') {$\cdots$};
\node [Max, Imp, below=of s3] (s3') {};
\node [below=of s34] (s34') {$\cdots$};

\draw (s0) edge node[label] {$\clr_1$} (s1);
\draw (s1) edge node[label] {$\clr_1$} (s2);
\draw (s2) edge node[label] {$\clr_1$} (s23);
\draw (s23) edge node[label] {$\clr_1$} (s3);

\draw (s0') edge node[label,auto] {$\clr_1$} (s0);
\draw (s1) edge node[label,auto,swap] {$\clr_2$} (s1');
\draw (s2) edge node[label,auto,swap] {$\clr_3$} (s2');
\draw (s3) edge node[label,auto,swap] {$\clr_i$} (s3');
\draw (s3') edge node[label,auto] {$\clr_1$} (s23');
\draw (s23') edge node[label,auto] {$\clr_1$} (s2');
\draw (s2') edge node[label,auto] {$\clr_1$} (s1');
\draw (s1') edge node[label,auto] {$\clr_1$} (s0');

\end{tikzpicture}
    \caption{With $\colours = \{\clr_1, \clr_2, \ldots\}$, arena in which \POne wins for $\objective$ but finite memory is insufficient to do so.}
    \label{fig:tightInfiniteBuchi-a}
    \end{subfigure}%
    \hspace{0.05\textwidth}%
    \begin{subfigure}[b]{0.35\textwidth}
    \centering
    \begin{tikzpicture}[node distance=1cm and 2cm, on grid]

\node [Min,initial,initial where=above] (s0) at (0,0) {$u$};
\node [Max,right=2cm of s0] (s1) {$v$};

\draw (s0) edge[bend left,squish] node[label] {$\clr_1^+$} (s1);
\draw (s1) edge[bend left] node[label] {$\clr_1$} (s0);
\draw (s1) edge[loop right] node[label] {$\clr_2$} (s1);

\end{tikzpicture}
    \caption{With $\colours = \{\clr_1, \clr_2\}$, arena in which \POne wins for $\objective$ but a step counter is insufficient to do so.
    The arrow with label $\clr_1^+$ indicates that \PTwo can choose any word in language $\clr_1^+$.}%
    \label{fig:tightInfiniteBuchi-b}
    \end{subfigure}
    \caption{Arenas illustrating that memory is insufficient for objective $\objective$ in \cref{rem:example:nonQuantObj}.}
\end{figure}

\section{Missing proofs for Section~\ref{sec:NonPrefIndPiTwo}} \label{app:NonPrefIndPiTwo}
This section is dedicated to the proof of \cref{thm:TPsupge}.

\thmstepplusOneBit*

For an arena $\arena = \arenaFull$, let $\winningExtended \subseteq \vertices \times \IQ$ be the set of \emph{pairs} $(\vertex, \reward)$ such that there is a winning strategy from $\vertex$ for $\TPsupge$, assuming the current sum of weights is $\reward$.
Formally, $(\vertex, \reward)\in\winning'$ if there is a strategy $\strat$ such that for all plays $\play$ consistent with $\strat$, $\reward + \limsup_j(\TP(\col(\play)_{\le j})) \ge 0$.
For example, in \cref{ex:bitIllustration} (\cref{fig:bitIllustration}), we have $(\vertex_i, -i)\in\winningExtended$, but $(\vertex_i, -i-1)\notin\winningExtended$.
We say that a strategy \emph{never leaves $\winningExtended$ from $\vertex_0$ with initial weight value $r$} if for all histories $\history$ from $\vertex_0$ consistent with the strategy, $(\too(\history), r + \TP(\history))\in\winningExtended$.

We start by proving \cref{lem:TPsafeML} about the sufficiency of memoryless strategies to \emph{stay} in this winning region.
Note that staying in $\winningExtended$ is necessary but not sufficient for a strategy to be winning for $\TPsupge$.

\lemsafeML*
\begin{proof}
    We define a memoryless strategy $\strat_{\mathsf{safe}}\colon \vertices_1 \to \edges$ with the required properties.
    Let $\vertex\in\vertices_1$.
    Assume that the set $R_\vertex = \{\reward\in\IQ \mid (\vertex, \reward)\in\winningExtended\}$ is not empty (if it is, we define $\strat(\vertex)$ arbitrarily).
    Notice that $R_\vertex$ is upwards closed (i.e., if $\reward\in R_\vertex$ and $\reward' \ge \reward$, then $\reward'\in R_\vertex$).
    Let $r_v = \inf R_v$.
    If $r_v\in R_v$, there exists a strategy $\sigma^v$ that wins for $\TPsupge$ from $\vertex$ with current weight value $r_v$; we fix $\strat_{\mathsf{safe}}(\vertex) = \strat^v(\emptyHistory_\vertex)$ (we recall that $\emptyHistory_\vertex$ is the empty history from $\vertex$).
    If $r_v\notin R_v$, for every $n \ge 1$, as $(\vertex, r_v + \frac{1}{n})\in\winningExtended$, there exists a strategy $\sigma_n^v$ that wins for $\TPsupge$ from $\vertex$ with current weight value $r_v + \frac{1}{n}$.
    Consider the infinite sequence of edges $\strat_1^v(\emptyHistory_\vertex), \strat_2^v(\emptyHistory_\vertex), \ldots$; as $\arena$ is finitely branching, one of the edges outgoing from $\vertex$ appears infinitely often along this sequence.
    We define $\strat_{\mathsf{safe}}(\vertex)$ to be such an edge.

    We now show that $\strat_{\mathsf{safe}}$ satisfies the property from the statement.
    Let $(\vertex_0, r)\in\winningExtended$.
    Let $\history$ be a history from $\vertex_0$ consistent with $\strat_{\mathsf{safe}}$.
    We show by induction on the length of $\history$ that $\reward + \TP(\history) \in R_{\too(\history)}$.
    For the base case, we have that $\reward \in R_{\vertex_0}$.

    Assume now that $\history\edge$ is a history from $\vertex_0$ consistent with $\strat_{\mathsf{safe}}$ of length $\card{\history} + 1$ and that $\reward + \TP(\history) \in R_{\too(\history)}$.
    We have that $\inf R_{\too(\history)} \le \reward + \TP(\history)$.
    If $\inf R_{\too(\history)} \in R_{\too(\history)}$, then $\inf R_{\too(\history)} + \col(\edge) \in R_{\too(\edge)}$ by definition of $\strat_{\mathsf{safe}}(\too(\history))$, so $\reward + \TP(\history\edge) \in R_{\too(\edge)}$ (as $R_{\too(\edge)}$ is upwards closed).
    If $\inf R_{\too(\history)} \notin R_{\too(\history)}$, then $\inf R_{\too(\history)} < \reward + \TP(\history)$.
    So we can find $n\in\IN$ such that $\inf R_{\too(\history)} + \frac{1}{n} \le \reward + \TP(\history)$ and such that $\strat_{\mathsf{safe}}(\too(\history)) = \strat^{\too(\history)}_n(\history_{\too(\history)})$.
    With a similar observation as the previous case, we obtain that $\reward + \TP(\history\edge) \in R_{\too(\edge)}$.
\end{proof}

\cref{lem:TPsupgeHard} is an analogue of \cref{lem:stepCounterCanBeNotWorse}, but ensures a stronger property with a more complex memory structure (using an extra bit).
It says that locally, with a step-counter + $1$-bit strategy, we can guarantee a high value temporarily while staying in the winning region $\winningExtended$.
It generalises the phenomenon of \cref{ex:bitIllustration}, in which we observed that $1$ bit is exactly what we need to either stay in the winning region or aim for a high value.
We will later use it in the induction step in the proof of \cref{thm:TPsupge}.

As explained in \cref{sec:NonPrefIndPiTwo}, we recall that
\begin{align*}
\TPsupge = \bigcap_{m\ge 1} \bigcup_{j \ge m} \left\{\word \in \colours^\omega \mid \TP(\word_{\le j}) \ge \frac{-1}{m} \right\},
\end{align*}
and that for $m\ge 1$, we let $\objective_m = \bigcup_{j \ge m} \{\word \mid \TP(\word_{\le j}) \ge \frac{-1}{m} \}$.
To better understand $\prefixOrd_{\objective_m}$, recall that as in \cref{rem:stepMonotonic}, for any two words $\word_1, \word_2\in\colours^*$ with $\card{\word_1} = \card{\word_2}$, we have $\word_1\prefixOrd_{\objective_m}\word_2$ if and only if
\begin{align} \label{eq:prefixOrdOm}
    \word_2\colours^\omega \subseteq \objective_m\quad \text{or}\quad (\word_1\colours^\omega \not\subseteq \objective_m\ \text{and}\ \TP(\word_1) \le \TP(\word_2)).
\end{align}
The first term of the disjunction corresponds to the case where $\word_2$ has a prefix that already satisfies $\objective_m$, and hence every continuation is winning for $\objective_m$.
The second term then covers the case where none of the two words has satisfied $\objective_m$ so far and $\word_2$ has a current sum of weights at least as high as $\word_1$.

\lemTPsupgeHard*
\begin{proof}
Let $\strat'$ be an arbitrary winning strategy from $\vertex_0$ (which exists as $\vertex_0$ is in the winning region, i.e., $(\vertex_0, 0) \in \winningExtended$).
Let $\strat_{\mathsf{safe}}$ be a memoryless strategy on $\arena$ staying in $\winningExtended$ given by \cref{lem:TPsafeML}.
We consider the ``step-counter + $1$-bit'' memory structure $\memory = (\IN \times \{0, 1\}, (0, 0), \memUpd)$ (we have not yet defined how and when $\memUpd$ updates the bit, which will come later in the proof).

We define the strategy $\strat_m$ based on $\memory$ inductively from $\vertex_0$, on the set $\vertices_1 \times \IN \times \{0, 1\}$.
As defined above, the initial memory bit is $0$.
The goal is to guarantee the following two properties: for all histories $\history$ consistent with $\strat_m$ from $\vertex_0$,
{
\begin{enumerate}
    \item we have $(\too(\history), \TP(\history)) \in \winningExtended$, that is, the strategy does not leave the winning region for $\TPsupge$; \label{enum:1}
    \item if the memory bit after $\history$ is $0$, there exists a history $\history'$ consistent with $\strat'$ from $\vertex_0$ such that $\card{\history'} = \card{\history}$, $\too(\history') = \too(\history)$, and $\history' \prefixOrd_{\objective_m} \history$. \label{enum:2}
\end{enumerate}}
The properties clearly hold on the empty history $\emptyHistory_{\vertex_0}$ from $\vertex_0$ (for which the bit value is still~$0$).
We will prove the two properties inductively after defining the strategy $\strat_m$.%
\footnote{As a side note, we briefly comment on how these two properties can be interpreted on the game of \cref{ex:bitIllustration} (\cref{fig:bitIllustration}).
Observe that following the $0$ edges guarantees~\ref{enum:1}.\ for \POne (it ensures a higher sum of weights), but does not guarantee~\ref{enum:2}.\ (as $\history$ may not have already satisfied one of the $\objective_m$, unlike all histories $\history'$).
On the other hand, the first time \POne plays $i, -i-1$, it ensures~\ref{enum:2}.\ (some $\objective_m$ is already satisfied during this round), but not~\ref{enum:1}.\ (it may leave the winning region $\winningExtended$ if \PTwo also played $i, -i-1$).
This is why the bit is necessary to guarantee both~\ref{enum:1}.\ and~\ref{enum:2}.}

Assume that $\strat_m$ has already been defined on $\vertices_1 \times \{0,\ldots,\step-1\} \times \{0,1\}$.
Let $\history$ be a history from $\vertex_0$ consistent with $\strat_m$ of length $\step$.
We write $\vertex$ for $\too(\history)$ for brevity.
To extend~$\strat_m$, we need to define $\strat_m(\vertex, \step, 0)$ and $\strat_m(\vertex, \step, 1)$ for $\vertex\in\vertices_1$, and we need to define when the memory bit is updated.
If the memory bit after $\history$ is $0$, we define a history
\[
    \history'_{\vertex, \step} \in \min_{\prefixOrd_{\objective_m}}
    \{\history' \mid \text{$\history'$ is consistent with $\strat'$, $\from(\history') = \vertex_0$, $\card{\history'} = \step$, and $\too(\history') = \vertex$}\}.
\]
This set is non-empty due to the induction hypothesis~\ref{enum:2}., and well-ordered due to step-monotonicity of $\objective_m$, so $\history'_{\vertex, \step}$ is well-defined.
If $\vertex\in\vertices_1$, we define $\strat_m(\vertex, \step, 0) = \strat'(\history'_{\vertex, \step})$ and $\strat_m(\vertex, \step, 1) = \strat_\mathsf{safe}(\vertex)$.

We now define the bit update.
Let $\edge$ be a possible edge taken from $\vertex$ after history $\history$ (either $\edge$ is the edge taken by $\strat_m$ after $\history$ if $\vertex\in\vertices_1$, or $\edge$ is any possible outgoing edge of $\vertex$ if $\vertex\in\vertices_2$).
When the bit is $1$, we never change the bit (i.e., $\memUpd((\step, 1), \edge) = (\step+1, 1)$).
Assume now that the bit of $\strat_m$ is $0$ after $\history$.
We update the bit from $0$ to $1$ (i.e., define $\memUpd((\step, 0), \edge) = (\step+1, 1)$) if and only if $\col(\history'_{\vertex, \step}\edge)\colours^\omega\subseteq\objective_m$.
Intuitively, this condition means that $\strat'$ has already satisfied $\objective_m$, which will be used to show that so does $\strat_m$.

Now that we have fully defined the next step of $\strat_m$ after $\history$, it is left to prove the two properties.
To do so, we first prove the following property:
\begin{align} \label{eq:unusualPreorder}
\text{if the memory bit after $\history$ is $0$, then $\TP(\history_{\vertex, \step}') \le \TP(\history)$.}
\end{align}
We prove the contrapositive.
Assume that $\TP(\history) < \TP(\history_{\vertex, \step}')$.
Observe that this cannot happen if $\card{\history} = 0$.
Let $\widetilde{\history}\edge$ be the shortest prefix of $\history$ such that $\TP(\history_{\too(\widetilde{\history}), \card{\widetilde{\history}}}') \le \TP(\widetilde{\history})$ but $\TP(\widetilde{\history}\edge) < \TP(\history_{\too(\widetilde{\history}\edge), \card{\widetilde{\history}\edge}}')$.
We have that $\history_{\too(\widetilde{\history}), \card{\widetilde{\history}}}'\edge$ is consistent with $\strat'$ and $\TP(\history_{\too(\widetilde{\history}), \card{\widetilde{\history}}}'\edge) \le \TP(\widetilde{\history}\edge)$.
Combining the last two inequalities, we obtain that $\TP(\history_{\too(\widetilde{\history}), \card{\widetilde{\history}}}'\edge) < \TP(\history_{\too(\widetilde{\history}\edge), \card{\widetilde{\history}\edge}}')$.
However, by definition of $\history_{\too(\widetilde{\history}\edge), \card{\widetilde{\history}\edge}}'$, we have that $\history_{\too(\widetilde{\history}\edge), \card{\widetilde{\history}\edge}}' \prefixOrd_{\objective_m} \history_{\too(\widetilde{\history}), \card{\widetilde{\history}}}'\edge$.

Using the discussion about $\prefixOrd_{\objective_m}$ from~\eqref{eq:prefixOrdOm}, this necessarily implies that $\col(\history_{\too(\widetilde{\history}), \card{\widetilde{\history}}}'\edge)\colours^\omega \subseteq \objective_m$.
So the bit was set to $1$ after $\widetilde{\history}\edge$, which means that the bit is still $1$ after $\history$.

We now prove the two inductive properties.
{
\begin{enumerate}
    \item By induction hypothesis, we have $(\too(\history), \TP(\history)) \in \winningExtended$.

    If $\vertex\in\vertices_2$, then for any possible edge $\edge$ from $\vertex$, we still have $(\too(\history\edge), \TP(\history\edge)) \in \winningExtended$ by definition of $\winningExtended$.
    We now assume $\vertex\in\vertices_1$ and distinguish whether the bit value after $\history$ is~$0$ or $1$.

    If the bit is $0$ after $\history$, then by~\eqref{eq:unusualPreorder}, $\TP(\history_{\vertex, \step}') \le \TP(\history)$.
    Let $\edge = \strat_m(\vertex, \step, 0) = \strat'(\history_{\vertex, \step}')$; we have $\TP(\history_{\vertex, \step}'\edge) \le \TP(\history\edge)$.
    As $\strat'$ is a winning strategy for $\TPsupge$, necessarily, $(\too(\history_{\vertex, \step}'\edge), \TP(\history_{\vertex, \step}'\edge)) \in \winningExtended$.
    As $\too(\history_{\vertex, \step}'\edge) = \too(\history\edge)$ and $\TP(\history_{\vertex, \step}'\edge) = \TP(\history_{\vertex, \step}') + \col(\edge) \le \TP(\history) + \col(\edge) = \TP(\history\edge)$, we have $(\too(\history\edge), \TP(\history\edge)) \in \winningExtended$.

    If the bit is $1$ after history $\history$, then $\strat_m$ imitates $\strat_\mathsf{safe}$, so $(\too(\history\edge), \TP(\history\edge)) \in \winningExtended$ by definition of $\strat_\mathsf{safe}$.

    \item Let $\edge$ be a possible edge after $\history$.
    Assume the memory bit after $\history\edge$ is $0$.
    This implies that the bit was also $0$ after $\history$.
    By induction hypothesis, there exists a history $\history'$ consistent with $\strat'$ from $\vertex_0$ such that $\card{\history'} = \card{\history}$, $\too(\history') = \too(\history)$, and $\history' \prefixOrd_{\objective_m} \history$.
    We show that $\history_{\vertex, \step}'\edge$ satisfies the desired property for $\history\edge$; we already have by definition that $\card{\history_{\vertex, \step}'\edge} = \card{\history\edge}$ and $\too(\history_{\vertex, \step}'\edge) = \too(\history\edge)$.
    Moreover, as the bit is $0$, $\edge = \strat_m(\vertex, \step, 0) = \strat'(\history_{\vertex, \step}')$, so $\history_{\vertex, \step}'\edge$ is also consistent with $\strat'$.
    By definition of $\history_{\vertex, \step}'$, we also have $\history_{\vertex, \step}'\prefixOrd_{\objective_m} \history'$, therefore $\history_{\vertex, \step}' \prefixOrd_{\objective_m} \history$.
    Hence, $\history_{\vertex, \step}'\edge \prefixOrd_{\objective_m} \history\edge$.
\end{enumerate}
}

This shows the two inductive properties above.
We still have to show that $\strat_m$ is winning for~$\objective_m$ from $\vertex_0$ and never leaves $\winningExtended$.

We prove that $\strat_m$ wins for $\objective_m$.
Observe that $\strat'$ wins in particular for $\objective_m$ (as $\strat'$ wins for $\TPsupge$).
As $\objective_m$ is open and $\arena$ is finitely branching, this means that all histories consistent with $\strat'$ already satisfy $\objective_m$ after a bounded number of steps (\cref{lem:konig}).
This means that for $\step$ sufficiently large, all histories $\history_{\vertex, \step}'$ used to define $\strat_m$ are such that $\col(\history_{\vertex, \step}')\colours^\omega \subseteq \objective_m$.
In particular, for $\step$ sufficiently large, due to how the bit update from $0$ to $1$ is defined, all histories $\history$ consistent with $\strat_m$ reach a bit value of $1$.
We show that any history $\history$ consistent with $\strat_m$ with bit value $1$ is such that $\col(\history)\colours^\omega \subseteq \objective_m$.
Let $\history$ be such a history, and let $\widetilde{\history}$ be its longest prefix with bit value still at $0$.
If $\col(\widetilde{\history})\colours^\omega \subseteq \objective_m$, we are done.
If $\col(\widetilde{\history})\colours^\omega \not\subseteq \objective_m$, as the bit value is still $0$, there is $\widetilde{\history}'$ consistent with $\strat'$ such that $\card{\widetilde{\history}'} = \card{\widetilde{\history}}$, $\too(\widetilde{\history}') = \too(\widetilde{\history})$, and $\widetilde{\history}' \prefixOrd_{\objective_m} \widetilde{\history}$.
We have that $\history'_{\too(\widetilde{\history}), \card{\widetilde{\history}}} \prefixOrd_{\objective_m} \widetilde{\history}'$, so $\history'_{\too(\widetilde{\history}), \card{\widetilde{\history}}} \prefixOrd_{\objective_m} \widetilde{\history}$.
In particular, $\col(\history'_{\too(\widetilde{\history}), \card{\widetilde{\history}}})\colours^\omega \not\subseteq \objective_m$ as well.
By~\eqref{eq:prefixOrdOm}, this implies that $\TP(\history'_{\too(\widetilde{\history}), \card{\widetilde{\history}}}) \le \TP(\widetilde{\history})$.
Let $\edge$ be the next edge in $\history$ after $\widetilde{\history}$.
As the bit after $\widetilde{\history}\edge$ is $1$, this means that $\col(\history'_{\too(\widetilde{\history}), \card{\widetilde{\history}}}\edge)\colours^\omega \subseteq \objective_m$.
As additionally, $\card{\widetilde{\history}\edge} = \card{\history'_{\too(\widetilde{\history}), \card{\widetilde{\history}}}\edge}$ and $\TP(\widetilde{\history}\edge) \ge \TP(\history'_{\too(\widetilde{\history}), \card{\widetilde{\history}}}\edge) \ge -\frac{1}{m}$, we have that $\col(\widetilde{\history}\edge)\colours^\omega \subseteq \objective_m$.
As $\history$ is a continuation of $\widetilde{\history}\edge$, we also have that $\col(\history)\colours^\omega \subseteq \objective_m$.

We have shown that all histories consistent with $\strat_m$ reach bit value $1$ within a bounded number of steps, and that bit value $1$ indicates that any continuation is winning $\objective_m$.
This shows that $\strat_m$ is winning for $\objective_m$.

The fact that $\strat_m$ never leaves $\winningExtended$ from $\vertex_0$ is a direct consequence of~\ref{enum:1}.
\end{proof}

\begin{proof}[Proof of \cref{thm:TPsupge}]
    Using the previous lemma, we now prove \cref{thm:TPsupge}.
    Let $\arena = \arenaFull$ be an arena and $\vertex_0\in\vertices$ be an initial vertex in the winning region of \POne for $\TPsupge$.
    Let the $\memory = (\IN \times \{0, 1\}, (0, 0), \memUpd)$ be the ``step-counter + $1$-bit'' memory structure (we still need to define how and when $\memUpd$ updates the bit).
    We show that there is a winning step-counter + $1$-bit strategy $\strat$ (i.e., $\strat$ is based on $\memory$) from $\vertex_0$.

    We build a winning step-counter + 1-bit strategy $\strat\colon \vertices_1 \times \IN \times \{0, 1\} \to \edges$ inductively.
    Consider the product arena $\arena' = \arena\product \memory$ (in which the bit updates are not fixed yet, and will be fixed inductively).
    We have that $(\vertex_0, (0, 0))$ is in the winning region of $\arena'$.
    The inductive scheme is as follows: for infinitely many $m\in\IN$, for some step bound $k_m\in\IN$, we fix $\strat$ on $\vertices \times \{0,\ldots, k_m - 1\} \times \{0, 1\}$, yielding arena $\arena'_m$.
    We ensure that
    \begin{itemize}
    \item along all histories $\history$ from $\vertex_0$ consistent with $\strat$ of length at most $k_m$, $\winningExtended$ is not left (i.e., $(\too(\history), \TP(\history))\in\winningExtended$), and
    \item the open objective $\objective_{m}$ is already satisfied within $k_m$ steps (i.e., any history of length $k_m$ consistent with $\strat$ only has winning continuations for $\objective_{m}$).
    \end{itemize}

    For the base case, we may assume that we start the induction at $m = -1$ with $k_{-1} = 0$ and $\objective_{-1} = \colours^\omega$.
    We indeed have that from $(\vertex_0, (0, 0))$, the winning region $\winningExtended$ is not left within $k_{-1} = 0$ step and that the open objective $\objective_{-1}$ is already satisfied.

    Assume that we have fixed $\strat$ in $\arena'$ (decisions and bit updates) up to some step bound $k_m$, that $\winningExtended$ is not left within $k_m$ steps, and that $\objective_m$ is already satisfied for all histories of length $k_m$.
    We reset the bit to $0$ after exactly $k_m$ steps: for $\vertex\in\vertices$ and $b\in\{0, 1\}$, we define $\memUpd(\vertex, (k_m - 1, b)) = (k_m, 0)$.
    Fixing $\strat$ up to bound $k_m$ defines an arena $\arena'_m$.

    Since $\winningExtended$ is not left within the first $k_m$ steps, and since no decisions have been fixed after $k_m$ steps, vertex $(\vertex_0, (0, 0))$ is still in the winning region for $\TPsupge$.
    Let $m' = k_m + 1$.
    We apply \cref{lem:TPsupgeHard}: there exists a step-counter + 1-bit strategy $\strat_{m'}$ such that $\strat_{m'}$ is winning for $\objective_{m'}$ from $(\vertex_0, (0, 0))$ and never leaves $\winningExtended$.
    Observe that there are no decisions to make in $\arena'_m$ before $k_m$ steps have elapsed, and that the bit is set to $0$ after exactly $k_m$ steps.
    By \cref{lem:konig}, as $\arena'_m$ is finitely branching and $\objective_{m'}$ is open, there is a bound $k_{m'}$ such that any history consistent with $\strat_{m'}$ of length $k_{m'}$ already satisfies $\objective_{m'}$.
    Since $\objective_{m'}$ cannot be already satisfied before step $m'$, we have $k_{m'} \ge k_m + 1$.
    We define $\arena'_{m'}$ by fixing strategy $\strat'_{m'}$ up to step $k_{m'}$.

    Iterating this procedure defines a step-counter + 1-bit strategy $\strat$ such that, for infinitely many $m\ge 1$, $\strat$ is winning for $\objective_m$.
    As the sequence $\objective_m$ is decreasing ($\objective_1 \supseteq \objective_2 \supseteq \ldots$), we have that $\strat$ is winning for $\objective_m$ for all $m\ge 1$.
    Hence, $\strat$ is winning for $\TPsupge$.
\end{proof}

In general, this proof does not apply uniformly: in the arena of \cref{fig:nonUniformPiTwo}, \POne has no uniformly winning strategy based on a step counter and finite memory from all $s_i$ simultaneously, since \POne needs to exit to $r_0$ arbitrarily far to the right.

\begin{figure}
    \centering
    \begin{tikzpicture}[node distance=2.5cm and 1.5cm]

\node [Max,initial,initial where=left] (s0) at (0,0) {$s_0$};
\node [Max,initial,initial where=left,right =0.8cm of s0] (s1) {$s_1$};
\node [Max,initial,initial where=left,right =0.8cm of s1] (s2) {$s_2$};
\node [right =0.8cm of s2] (s23) {$\cdots$};
\node [Max,initial,initial where=left,right =0.8cm of s23] (s3) {$s_i$};
\node [right =0.8cm of s3] (s34) {$\cdots$};

\node [Max] (t0) at (0,-1.5) {$t_0$};
\node [Max,right =0.8cm of t0] (t1) {$t_1$};
\node [Max,right =0.8cm of t1] (t2) {$t_2$};
\node [right =0.8cm of t2] (t23) {$\cdots$};
\node [Max,right =0.8cm of t23] (t3) {$t_i$};
\node [right =0.8cm of t3] (t34) {$\cdots$};

\node [Max] (r0) at (0,-3) {$r_0$};

\def\dist{-.9}
\draw (s0) to node[label,swap] {$-1$} +(0,\dist) edge (t0);
\draw (s1) to node[label,swap] {$-2$} +(0,\dist) ++(0,\dist) -> (0,\dist);
 \draw (s2) to node[label,swap] {$-3$} +(0,\dist) ++(0,\dist) -> (0,\dist);
\draw[dotted] (s3) to node[label,swap] {$-i$} +(0,\dist) ++(0,\dist) -> (0,\dist);

\draw (t0)  edge[] node[label] {$0$} (t1);
\draw (t1)  edge[] node[label] {$0$} (t2);
\draw (t2)  edge[] node[label] {$0$} (t23);
\draw (t23) edge[] node[label] {$0$} (t3);
\draw[dotted] (t3) edge node[label] {$0$} (t34);

\draw (t0) to node[label,swap] {$0$} +(0,\dist) edge (r0);
\draw (t1) to node[label,swap] {$1$} +(0,\dist) ++(0,\dist) -> (0,-1.5+\dist);
 \draw (t2) to node[label,swap] {$2$} +(0,\dist) ++(0,\dist) -> (0,-1.5+\dist);
\draw[dotted] (t3) to node[label,swap] {$i$} +(0,\dist) ++(0,\dist) -> (0,-1.5+\dist);

\draw (r0) edge[loop right] node[label] {$0$} (r0);
\end{tikzpicture}
    \caption{Arena in which \POne has no \emph{uniformly} winning step-counter + $1$-bit strategy.}
    \label{fig:nonUniformPiTwo}
\end{figure}
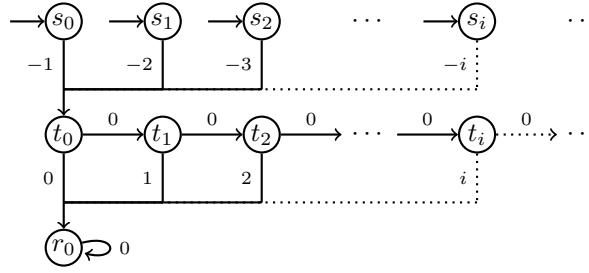

\end{document}